\newcolumntype{d}[1]{D{.}{.}{#1}}
\newcommand\mc[1]{\multicolumn{1}{c}{#1}} 
\theoremstyle{plain}
\newtheorem{proposition}{Proposition}
\newtheorem{corollary}{Corollary}
\newtheorem{lemma}{Lemma}
\newtheorem{lem}{Lemma}[section]
\theoremstyle{definition}
\newtheorem{assumption}{Assumption}
\newenvironment{manualtheorem}[1]{%
  \manualtheoreminner
}{\endmanualtheoreminner}
\theoremstyle{remark}
\newtheorem{remark}{Remark}
\newcommand\inverse{^{-1}}
\DeclarePairedDelimiter{\abs}{\lvert}{\rvert}
\DeclarePairedDelimiter{\norm}{\lVert}{\rVert}
\newsavebox{\measure@tikzpicture}
  \def\tikz@width{#1}%
  \def\tikzscale{1}\begin{lrbox}{\measure@tikzpicture}%
  \edef\tikzscale{\pgfmathresult}%
\newcites{A}{Appendix References}
\tikzstyle{flowbox} = [rectangle, rounded corners, text width = 3.1cm, minimum height = 1cm, text ragged, draw=black]
\tikzstyle{arrow} = [thick,->,>=stealth]
\begin{document}
\setlength{\abovedisplayskip}{6pt}
\setlength{\belowdisplayskip}{6pt}
\setlist{itemsep=0.1pt,topsep=1.0pt}
\setlength{\textfloatsep}{0.7\baselineskip plus 0.1\baselineskip minus 0.1\baselineskip}
\setlength{\intextsep}{10pt plus 2pt minus 2pt}

\title{\vspace{-2.5cm}Testing Firm Conduct}
\author{Marco Duarte\thanks{Department of Economics, University of North Carolina at Chapel Hill. Email: \href{mailto:duartema@unc.edu}{duartema@unc.edu}}
 \and Lorenzo Magnolfi\thanks{%
Department of Economics, University of Wisconsin-Madison, Madison. Email: \href{mailto:magnolfi@wisc.edu}{magnolfi@wisc.edu}}
\and Mikkel S{\o}lvsten\thanks{%
Department of Economics and Business Economics, Aarhus University. Email: \href{mailto:miso@econ.au.dk}{miso@econ.au.dk}}
 \and Christopher Sullivan\thanks{%
Department of Economics, University of Wisconsin-Madison.  Email: \href{mailto:cjsullivan@wisc.edu}{cjsullivan@wisc.edu} \newline We thank Steve Berry, Chris Conlon, JF Houde, Sarah Johnston, Aviv Nevo, Alan Sorensen, and seminar participants at Cowles, Drexel, IO$^2$, Mannheim, Midwest IO Fest, Montreal Summer conference in IO,  Princeton, Rice, and Stanford for helpful comments.  We would like to thank IRI for making the data available. All estimates and analysis in
this paper, based on data provided by IRI, are by the authors and not by IRI. A Python package that implements the methods in this paper, \texttt{pyRVtest}, is available on GitHub \citep{dmss_code}.} 
}
\date{January 2023 
    }

\maketitle
\vspace{-.5cm}
\begin{abstract}
\noindent   Evaluating policy in  imperfectly competitive markets requires understanding firm behavior. While researchers test conduct via model selection and assessment, we present advantages of \cite{rv02} (RV) model selection under misspecification. However, degeneracy of RV invalidates inference. With a novel definition of weak instruments for testing, we connect degeneracy to instrument strength,  derive weak instrument properties of RV, and provide a diagnostic for weak instruments by   extending the framework of \cite{sy05} to model selection.   We test vertical conduct \citep[][]{v07} using common instrument sets.  Some are weak, providing no power. Strong instruments support  manufacturers setting retail prices.   

\vspace{11pt}
\noindent {\sc Keywords:} Inference, Misspecification, Model Selection, Rivers and Vuong Test, Weak Instruments. 

\vspace{11pt}
\noindent {\sc JEL codes:} C52, L21

\end{abstract}
\pagebreak
\section{Introduction}\label{sec:intro}

In order to understand the impact of policy on markets with imperfect competition, it is necessary to have a model of how firms behave. Additionally, studying firm conduct can be of primary interest in itself. However, the true model of conduct is often not known and researchers may therefore test a set of candidate models supported by theory. Recent examples include detecting the competitive effect of common ownership \citep[][]{bcs20}, labor market monopsony power \citep[][]{rs21}, and the US government's market power in issuing safe assets \citep[][]{ckp22}. \looseness=-1

In an ideal testing setting, the researcher compares markups implied by a model of conduct to the true markups.  As true markups are rarely observed, \cite{bh14} provide a falsifiable restriction for a  model of conduct which requires instruments.  In the standard parametric differentiated products environment, we show that  a model is falsified by the instruments if its predicted markups (markups projected on instruments) are different from the predicted markups for the true model.  This intuition connects testing  with unobserved true markups to the ideal setting while also highlighting the role of instruments in distinguishing conduct. In practice, a researcher must  implement a test by encoding the falsifiable restriction into statistical hypotheses.  In this paper, we elucidate how the form of the hypotheses and the strength of the instruments affect inference.
Based on our findings, we provide methods to ensure valid testing of conduct.\looseness=-1

The industrial organization (IO) literature uses both model selection and model assessment procedures to test conduct.  These differ in the hypotheses they formulate  from  the falsifiable restriction.  Model selection compares the relative fit of two competing models.  Model assessment checks the absolute fit of a given model. We show that this distinction affects inference under misspecification of either demand or cost. Specifically, the model selection test in \cite{rv02} (RV) distinguishes the model for which predicted markups are closer to the truth. Instead, with misspecification, model assessment tests  reject the true model in large samples.

As misspecification is likely, our results establish the importance of using RV to test firm conduct.  However, this test can suffer from degeneracy, defined as zero asymptotic variance of the difference in lack of fit between models \citep[see][]{rv02}.  With degeneracy, the asymptotic null distribution of the RV test statistic is no longer standard normal.  Despite the precise definition above, the economic causes and inferential effects of degeneracy remain opaque.  Thus, researchers often ignore degeneracy when testing firm conduct.  We show that assuming away degeneracy amounts to  imposing that at least one of the models is falsified by the instruments. Degeneracy obtains if either the markups for the true model are indistinguishable from the markups of the candidate models, or the instruments are uncorrelated with markups.  

To shed light on the inferential consequences of degeneracy, we define a novel  \textit{weak instruments for testing} asymptotic framework adapted from \cite{ss97}.  Under this framework, we show that the asymptotic null distribution of the RV test statistic is skewed and has a non-zero mean.  As skewness declines in the number of instruments while the magnitude of the mean increases,  the resulting size distortions are non-monotone in the number of instruments.  With one instrument or many instruments,  large size distortions are possible. With two to nine instruments, we find that size distortions above 2.5\% are impossible.\looseness=-1

Our characterization of degeneracy allows us to develop a novel diagnostic for weak instruments, which aids researchers in drawing proper conclusions when testing conduct with RV.  In the spirit of \cite{sy05} and \cite{olea13}, our diagnostic uses an effective $F$-statistic.  However, the $F$-statistic is formed from two auxiliary regressions as opposed to a single first stage.  Like \cite{sy05}, we show that instruments can be diagnosed as weak based on worst-case size. With one or many instruments, the proposed $F$-statistic needs to be large to conclude that the instruments are strong for size; we provide the appropriate critical values.\looseness = -1

A distinguishing feature of our weak instruments framework is that power is a salient concern.
In fact, the best-case power of the RV test against either model of conduct is strictly less than one, even in large samples.  The attainable power of the test can differ across the competing models and is lowered by misspecification.  Thus,  diagnosing whether the instruments are strong for power is crucial.   The same $F$-statistic used to detect size distortions is also informative about the best-case power of the RV test.  However, the critical values to compare the $F$-statistic against are different.  For power, the critical values are monotonically declining in the number of instruments, making low power the primary concern with two to nine instruments. In sum, researchers no longer need to assume away degeneracy; instead, they can interpret the results of RV through the lens of our $F$-statistic.\looseness=-1  

Up to here, the discussion presumes researchers precommit to one instrument set. \cite{bh14} show that many sources of exogenous variation exist for testing conduct, although -- reflecting the economics of different models -- some may fail to be relevant in specific applications \citep[][]{mqsw22}.  Pooling all sources of variation into one set of instruments may obscure the degree of misspecification while also diluting instrument strength.   Instead, we suggest that researchers accumulate evidence by running separate RV tests using each of these sources.  We propose a conservative procedure whereby a researcher concludes for a set of models when all strong instruments support them.\looseness=-1

In an empirical application,  we revisit the setting of \cite{v07} and test five models of vertical conduct in the market for yogurt, including models of double marginalization and  two-part tariffs. The application illustrates the empirical relevance of our results for inference on conduct with misspecification and weak instruments. Inspection of the price-cost margins implied by different models only allows us to rule out one model. To obtain sharper results, we then perform model selection with RV. Commonly used sets of instruments are weak for testing as measured by our diagnostic.   When the RV test is implemented with these weak instruments, it has essentially no power. This illustrates the importance of using our diagnostic to assess instrument strength in terms of both size and power when interpreting the results of the RV test. \looseness=-1  

Using our procedure to accumulate evidence from different sources of variation,  we conclude for a model in which manufacturers set retail prices. All strong instruments reject the other models. This application speaks to an important debate over vertical conduct in consumer packaged goods industries.  Several applied papers assume a model of two-part tariffs  where manufacturers set retail prices \citep[e.g.,][]{n01,mw17}.  Our results support this assumption. \looseness=-1 

This paper develops tools relevant to a broad literature seeking to understand firm conduct in the context of structural models of demand and supply. Focusing on articles that pursue a testing approach, collusion is a prominent application \citep[e.g.,][]{p83,s85,b87,glv92,v96,gm98,n01,s20}.  Other important applications include common ownership \citep{bcs20}, vertical conduct \citep[e.g.,][]{v07,bd10,g13, z21}, price discrimination \citep{ddf19}, price versus quantity setting \citep{fl95}, and non-profit behavior \citep{dmr20}. Outside of IO, recent applications include labor market conduct  \citep{rs21} and the market power of the US government in issuing safe assets \citep[][]{ckp22}. \looseness=-1


This paper is also related to econometric work on the testing of non-nested hypotheses \citep[e.g., ][]{pw01}.  We build on the insights of the econometrics literature that performs inference under misspecification and highlights the importance of model selection procedures \citep[e.g.,][]{w82,v89, hi03, k03, mo12,ls20}. Two recent contributions, \cite{s15} and \cite{sw17}, modify the likelihood based test in \cite{v89} to correct size distortions under degeneracy.  In our GMM setting, we show that power, not size, is the salient concern. Furthermore, by connecting degeneracy to instrument strength,   our work is related to the econometrics literature on inference under weak instruments \citep[surveyed in][]{ass19}.\looseness=-1

The paper proceeds as follows. Section \ref{sec:themodel} describes the environment: a general model of firm conduct. Section \ref{sec:nevo} formalizes our notion of falsifiability when true markups are unobserved. Section \ref{sec:hypo_formulation} explores the effect of hypothesis formulation on inference, contrasting  model selection and assessment approaches under misspecification. Section \ref{sec:instruments} connects  degeneracy of RV to instrument strength, characterizes the effect of weak instruments on inference, and introduces a diagnostic for weak instruments to report alongside the RV test. Section \ref{sec:accumulate} provides a procedure to accumulate evidence across different sets of instruments. Section \ref{sec:empiricaleg} develops our empirical application: testing models of vertical conduct in the retail market for yogurt. Section \ref{sec:conclusion} concludes. Proofs are found in Appendix \ref{sect:Proofs}. \looseness=-1

\section{Testing Environment}\label{sec:themodel}

We consider testing models of firm conduct using data on a set of products $\mathcal{J}$ offered by firms across a set of markets $\mathcal{T}$. For each product and market combination $(j,t)$, the researcher observes the price $\boldsymbol{p}_{jt}$, market share $\boldsymbol{s}_{jt}$, a vector of product characteristics $\boldsymbol{x}_{jt}$ that affects demand, and a vector of cost shifters $\textbf{w}_{jt}$ that affects the product's marginal cost. For any variable $\boldsymbol{y}_{jt}$, denote  $\boldsymbol{y}_t$ as the vector of values in market $t$. We assume that, for all markets $t$, the demand system is $\boldsymbol{s}_{t} = \mathbcal{s}\big(\boldsymbol{p}_t,\boldsymbol{x}_t,\boldsymbol\xi_t,\theta^D_0\big)$, where $\boldsymbol{\xi}_t$ is a vector of unobserved product characteristics, and $\theta^D_0$ is the true vector of demand parameters.\looseness=-1

The equilibrium in market $t$ is characterized by a system of first order conditions arising from the firms' profit maximization problems: 
\begin{align}\label{eq:model}
    \boldsymbol{p}_t &= \boldsymbol{\Delta}_{0t} + \boldsymbol{c}_{0t},
\end{align}
where $\boldsymbol{\Delta}_{0t} = \boldsymbol\Delta_0\big(\boldsymbol{p}_t,\boldsymbol{s}_t,\theta^D_0\big)$ is the true vector of markups in market $t$ and  $\boldsymbol{c}_{0t}$ is the true vector of marginal costs. Following \cite{bh14} we assume cost has the separable form $    \boldsymbol{c}_{0jt} = \bar{\boldsymbol{c}}(\boldsymbol{q}_{jt},\textbf{w}_{jt}) + \omega_{0jt},$ where $\boldsymbol{q}_{jt}$ is quantity and $\omega_{0jt}$ is an unobserved shock. To speak directly to the leading case in the applied literature, we assume marginal costs are constant in  $\boldsymbol{q}_{jt}$, and maintain $E[\bar{\boldsymbol{c}}(\textbf{w}_{jt}) \omega_{0jt}]=0$. However, the results in the paper apply to the important case of non-constant marginal cost, as shown in Appendix \ref{sect:AltCost}.\looseness=-1

The researcher can formulate alternative models of conduct, obtain an estimate $\hat \theta^D$ of the demand parameters, and compute estimates of markups $\hat {\boldsymbol{\Delta}}_{mt} = {\boldsymbol{\Delta}}_{m}\big(\boldsymbol p_t, \boldsymbol s_t, \hat \theta^D\big)$ under each model $m$. When discussing large sample results, we abstract away from the demand estimation step and treat $\boldsymbol{\Delta}_{mt} = {\boldsymbol{\Delta}}_{m}\big(\boldsymbol p_t, \boldsymbol s_t, \theta^D\big)$ as data, where $\theta^D =\plim \hat \theta^D $.\footnote{When demand is estimated in a preliminary step, the variance of the test statistics presented in Section \ref{sec:hypo_formulation} needs to be adjusted. The necessary adjustments are in Appendix \ref{sect:TwoStep}.} We focus on the case of two candidate models, $m=1,2$, and defer a discussion of more than two models to Section \ref{sec:accumulate}. To simplify notation, we replace the $jt$ index with $i$ for a generic observation.  We suppress the $i$ index when referring to a vector or matrix that stacks all $n$ observations in the sample.  Our framework is general, and depending on the choice of  $\boldsymbol \Delta_1$ and $\boldsymbol \Delta_2$  allows us to test many models of conduct found in the literature. Canonical examples include the nature of vertical relationships, whether firms compete in prices or quantities, collusion, intra-firm internalization, common ownership and nonprofit conduct.\footnote{In important applications  \citep[e.g.,][]{mw17,bcs20}, markups are functions of a parameter $\kappa$  ($\boldsymbol{\Delta}_m=\boldsymbol{\Delta}(\kappa_m)$). Researchers may investigate conduct by either estimating $\kappa$ or testing. \cite{ms21} provide a comparison of testing and estimation approaches in this setting.}

Throughout the paper, we consider the possibility that the researcher may misspecify demand or cost, or specify two models of conduct (e.g., Bertrand or collusion) which do not match the truth (e.g., Cournot).  In these cases, $\boldsymbol{\Delta}_0$ does not coincide with the markups implied by either candidate model.  We show that misspecification along any of these dimensions has consequences for testing, contrasting it to the case where $\boldsymbol{\Delta}_0 = \boldsymbol{\Delta}_1$.

Another important consideration for testing conduct is whether markups for the true model $ \boldsymbol{\Delta}_{0}$ are observed. In an ideal testing environment, the researcher observes not only markups implied by the two candidate models, but also the true markups $\boldsymbol{\Delta}_{0}$ (or equivalently marginal costs).  However, $\boldsymbol\Delta_0$ is unobserved in most empirical applications, and we focus on this case in what follows. Testing models thus requires instruments for the endogenous markups $\boldsymbol{\Delta}_1$ and $\boldsymbol{\Delta}_2$.   We maintain that the researcher constructs instruments $\boldsymbol z$, such that the following exclusion restriction holds:

\begin{assumption}\label{as:momcond}
    $\boldsymbol{z}_i$ is a vector of $d_z$ excluded instruments, so that $E[ \boldsymbol{z}_i \omega_{0i}]=0$.
\end{assumption}

\noindent 
This assumption requires that the instruments are {exogenous for testing}, and therefore uncorrelated with the unobserved cost shifters for the true model. Assumption \ref{as:momcond} describes the case where a researcher uses a single set of instruments.   From \cite{bh14}, any exogenous variation which moves the residual marginal revenue curve for at least one firm can serve as a valid instrument.  These include variation in the set of rival firms and rival products, own and rival product characteristics, rival cost, and market demographics. In Section \ref{sec:accumulate}, we discuss how researchers can separately use these sources of variation to test firm conduct, without needing to precommit to any of them.     

The following assumption introduces regularity conditions that are maintained throughout the paper and used to derive the properties of the tests discussed in Section \ref{sec:hypo_formulation}.

\begin{assumption}\label{as:regcond}
    (i) $\{\boldsymbol \Delta_{0i},\boldsymbol \Delta_{1i},\boldsymbol \Delta_{2i},\boldsymbol z_i, {\textbf{w}}_i, \omega_{0i}\}_{i=1}^n$ are jointly iid;
    (ii) $E\big[(\boldsymbol \Delta_{1i}-\boldsymbol \Delta_{2i})^2\big]$ is positive and $E\big[(\boldsymbol z_i',{\textbf{w}}_i')'(\boldsymbol z_i',{\textbf{w}}_i')\big]$ is positive definite;
    (iii) the entries of $\boldsymbol\Delta_{0i}$, $\boldsymbol\Delta_{1i}$, $\boldsymbol \Delta_{2i}$, $\boldsymbol z_i$, ${\textbf{w}}_i$, and $ \omega_{0i}$ have finite fourth moments.
\end{assumption}

\noindent 
Part (i) is a standard assumption for cross-sectional data. Extending parts of the analysis to allow for dependent data is straightforward and discussed in Appendix \ref{sect:TwoStep}.  Part (ii) excludes cases where the two competing models of conduct have identical markups and cases where the instruments $\boldsymbol z$ are linearly dependent with the cost shifters $\textbf{w}$. Part (iii) is a standard regularity condition allowing us to establish asymptotic approximations as $n \rightarrow \infty$.\looseness=-1

We further maintain that marginal costs are a linear function of observable cost shifters $\textbf{w}$ and $\omega_0$, so that $\boldsymbol{c}_{0} = \textbf{w} \tau + \omega_{0}$, where $\tau$ is defined by the orthogonality condition $E[\textbf{w}_i  \omega_{0i}]=0$. 
This restriction allows us to eliminate the cost shifters $\textbf{w}$ from the model, which is akin to the thought experiment of keeping the observable part of marginal cost constant across markets and products. For any variable $\boldsymbol y$, we therefore define the residualized variable $y = \boldsymbol y - \textbf{w}E[\textbf{w}'\textbf{w}]^{-1}E[\textbf{w}'\boldsymbol y]$ and its sample analog as $\hat y = \boldsymbol y - \textbf{w}(\textbf{w}'\textbf{w})^{-1}\textbf{w}'\boldsymbol y$.

The following section discusses the essential role of the instruments $\boldsymbol z$ in distinguishing between different models of conduct. For this discussion, a key role is played by the part of residualized markups $\Delta_m$ that are predicted by $z$:\looseness=-1 
\begin{align}\label{eq:predMarkups}
    \Delta^z_m &= z \Gamma_m, && \text{where } \Gamma_m = E\!\left[z'z\right]^{-1}E\!\left[z'\Delta_{m}\right]
\end{align}
and its sample analog $\hat \Delta^z_m = \hat z \hat \Gamma_m$ where $\hat \Gamma_m = (\hat z'\hat z)^{-1} \hat z'\hat \Delta_{m}$. \cite{bcs20} highlight the importance of modeling non-linearities both in the cost function and the predicted markups. Our linearity assumptions are not restrictive insofar as $\textbf{w}$ and $\boldsymbol{z}$ are constructed flexibly from exogenous variables in the data. When stating theoretical results, the distinction between population and sample counterparts matters, but for building intuition there is no need to separate the two. We refer to both $\Delta^z_m$ and $\hat \Delta^z_m$ as \textit{predicted markups} for model $m$. \looseness=-1

\section{Falsifiability of Models of Conduct}\label{sec:nevo}

We begin by reexamining the conditions under which models of conduct are falsified. Models are characterized by their markups $\Delta_m$.  In the ideal setting where true markups are observed, a model is falsified if the markups implied by model $m$ differ from the true markups with positive probability, or $E\big[(\Delta_{0i} - \Delta_{mi})^2\big] \neq 0$.  Instead, when true markups are unobserved, \cite{b82} shows that researchers need to rely on a set of excluded instruments to distinguish any wrong model from the true one. \looseness=-1

In our setting,  instruments provide a benchmark for distinguishing models through the moment condition in Assumption \ref{as:momcond}, $E[z_i\omega_{0i}] = 0$.  For each model $m$, the analog of this condition is $E[z_i(p_i-\Delta_{mi})]= 0$, where $p_i-\Delta_{mi}$ is the  residualized marginal revenue under model $m$.  Thus, to falsify model $m$, the correlation between the instruments and the residualized marginal revenue implied by model $m$ must be different from zero. This is in line with the result in \cite{bh14} that valid instruments need to alter the  marginal revenue faced by at least one firm to distinguish conduct. 

However, it is not apparent from the  restriction $E[z_i(p_i-\Delta_{mi})]= 0$ how instruments distinguish model $m$ from the truth based on their key economic feature, markups.\footnote{For a thorough discussion of the economic determinants of falsification, see \cite{mqsw22}.} Notice that under Assumption \ref{as:momcond} the covariance between residualized price and the instrument is equal to the covariance between the  residualized unobserved true markup and the instrument, or $E[z_ip_i] = E[z_i\Delta_{0i}]$. This equation highlights the role of instruments: they recover from prices a feature of the unobserved true markups. Thus testing relies on the comparison between $E[z_i\Delta_{0i}]$ and $E[z_i\Delta_{mi}]$. If we rescale the moments by the variance in $z$, we can restate the falsifiable restriction for model $m$ in terms of the mean squared error (MSE) in predicted markups, which we formally connect to \cite{bh14} in the following lemma.\footnote{Our environment is an example of Case 2 discussed in Section 6 of \cite{bh14}.} \looseness=-1

\begin{lemma}\label{as:falsify}
Under Assumptions \ref{as:momcond} and \ref{as:regcond}, the falsifiable restriction in Equation (28) of \cite{bh14} implies 
\begin{align}\label{eq:testability}
    E\!\left[\!\left(\Delta^z_{0i} - \Delta^z_{mi}\right)^2\right] = 0.
\end{align}
If Equation \eqref{eq:testability} is violated, we say  model $m$ is falsified by the instruments $z$.
\end{lemma}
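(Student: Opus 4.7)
The plan is to unpack the \cite{bh14} falsifiability condition in our notation and then translate it, via the definition of predicted markups in \eqref{eq:predMarkups}, into the mean-squared-error form \eqref{eq:testability}. In Case 2 of \cite{bh14}---which covers our setting---the restriction is that model $m$ is not falsified whenever $E[z_i(p_i - \Delta_{mi})] = 0$, i.e., the instruments are orthogonal to the cost unobservable implied by $m$. The goal is therefore to show that under Assumptions \ref{as:momcond} and \ref{as:regcond}, this moment condition implies $E[(\Delta^z_{0i} - \Delta^z_{mi})^2] = 0$.

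First I would replace prices with true markups. After residualizing by $\mathbf{w}$, the first-order condition \eqref{eq:model} reads $p_i = \Delta_{0i} + \omega_{0i}$, since the linear cost component $\mathbf{w}\tau$ is annihilated by the projection onto $\mathbf{w}$. Assumption \ref{as:momcond}, combined with $E[\mathbf{w}_i\, \omega_{0i}] = 0$, ensures that the orthogonality of the instruments to $\omega_0$ survives residualization, so that $E[z_i\, \omega_{0i}] = 0$. Therefore $E[z_i p_i] = E[z_i \Delta_{0i}]$, and the \cite{bh14} restriction reduces to
\[
E[z_i(\Delta_{0i} - \Delta_{mi})] = 0.
\]

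Next I would map this moment condition into the MSE form via \eqref{eq:predMarkups}. Writing $\Delta^z_{0i} - \Delta^z_{mi} = z_i(\Gamma_0 - \Gamma_m)$ with $\Gamma_0 - \Gamma_m = E[z_i' z_i]^{-1} E[z_i'(\Delta_{0i} - \Delta_{mi})]$, squaring and taking expectations yields
\[
E\!\left[(\Delta^z_{0i} - \Delta^z_{mi})^2\right] = E[z_i'(\Delta_{0i} - \Delta_{mi})]'\, E[z_i' z_i]^{-1}\, E[z_i'(\Delta_{0i} - \Delta_{mi})],
\]
which vanishes upon substituting the zero moment condition from the previous step, delivering \eqref{eq:testability}. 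Moreover, since $E[z_i' z_i]$ is positive definite by Assumption \ref{as:regcond}(ii), the converse direction also holds, so the two forms of the falsifiable restriction are in fact equivalent.

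The main obstacle is bookkeeping: accurately stating \cite{bh14}'s restriction in our residualized, linear-cost environment and verifying that neither the projection onto $\mathbf{w}$ nor the rescaling by $E[z_i' z_i]$ disturbs the orthogonality between $z_i$ and $\omega_{0i}$. Once these details are in place, the content of the lemma reduces to the elementary fact that a linear combination of $z$ has zero $L^2$ norm if and only if its coefficient vector vanishes against the positive-definite Gram matrix $E[z_i' z_i]$.
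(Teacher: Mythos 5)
Your proposal is correct and follows essentially the same route as the paper: reduce the \cite{bh14} restriction to $E[z_i(p_i-\Delta_{mi})]=0$, use $E[z_i\omega_{0i}]=0$ to replace $p$ with $\Delta_0$, and then observe that $E[(\Delta^z_{0i}-\Delta^z_{mi})^2]=E[z_i(\Delta_{0i}-\Delta_{mi})]'E[z_iz_i']^{-1}E[z_i(\Delta_{0i}-\Delta_{mi})]$ vanishes, with positive definiteness of $E[z_iz_i']$ giving the converse. The only difference is that the paper starts one step earlier, stating Equation (28) of \cite{bh14} as a conditional-mean restriction $E[\boldsymbol{p}_i-\boldsymbol{\Delta}_{mi}\mid z_i,\textbf{w}_i]=\textbf{w}_i\tau+E[\omega_{0i}\mid z_i,\textbf{w}_i]$ a.s.\ and deriving the unconditional moment via residualization and iterated expectations, whereas you take the unconditional moment as the starting point.
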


The lemma establishes an analog to testing with observed true markups. When $\Delta_0$ is observed,  a model $m$ is falsified if its markups differ from the truth with positive probability. Here, $\Delta_0$ is unobserved and falsifying model $m$ requires the markups predicted by the instruments to differ, i.e., $\Delta_{0i}^z \neq \Delta_{mi}^z$ with positive probability. Therefore testing when markups are unobserved still relies on differences in economic features between model $m$ and the true model, insofar as these differences result in different correlations with the instruments. Thus, falsifiability in our environment is a joint feature of a pair of models and a set of instruments. \looseness=-1

Moreover, a consequence of the lemma is that the sources of exogenous variation discussed in \cite{bh14} permit testing in our context.  These sources of variation not only include demand rotators considered in \cite{b82}, but also own and rival product characteristics, rival cost shifters, and market demographics.\footnote{We discuss how to separately use these sources of variation in Section 6.}  In addition to being exogenous, Lemma \ref{as:falsify} shows that instruments need to be  relevant for testing in order to falsify a wrong model of conduct.  In particular, a model $m$ can be falsified only  if the instruments are correlated with, and therefore generate non-zero predicted markups for, at least one of $\Delta_0$ and $\Delta_m$.  We illustrate this point in an example.\looseness=-1

\vspace{0.5em}
\noindent \emph{Example 1:} Consider the canonical example in \cite{b82} of distinguishing monopoly and perfect competition.\footnote{\cite{b82} also allows for non-constant marginal cost, which we consider in Appendix \ref{sect:AltCost}.\looseness=-1 }  Notice that, under perfect competition, both markups and predicted markups  are zero.  Thus, falsifying perfect competition when data are generated under monopoly (or vice versa) requires that the instruments generate non-zero monopoly predicted markups.  This occurs whenever variation in the instruments induces variation in the monopoly markups.  Given that these markups are a function of market shares and prices,  the sources of variation in \cite{bh14} typically suffice.  
\vspace{0.5em}

While Equation \eqref{eq:testability} is a falsifiable restriction in the population, performing valid inference on conduct in a finite sample requires two steps. First, we need to encode the falsifiable restriction into  hypotheses. Second, we need strong instruments to falsify the wrong model. We turn to these problems in the next two sections.\looseness=-1

\section{Hypothesis Formulation for Testing Conduct}\label{sec:hypo_formulation}

To test amongst alternative models of firm conduct in a finite sample, researchers need to choose a testing procedure, four of which have been used in the IO literature.\footnote{E.g., \cite{bcs20} use an RV test, \cite{bd19} use an Anderson-Rubin test to supplement an estimation exercise, \cite{mw17} use an estimation based test, and \cite{v07} uses a Cox test.  All these procedures accommodate instruments and do not require specifying the full likelihood as was done in earlier literature \citep[e.g., ][]{b87,glv92}. } As discussed below, these can be classified as model assessment or model selection tests based on how each formalizes the null hypothesis. In this section, we present the standard formulation of RV, a model selection test, and the \cite{ar49} test (AR), a model assessment test. We focus on AR as its properties in our environment are representative of the three model assessment tests used in IO to test conduct.\footnote{In Appendix \ref{sect:EBCox}, we show that the other model assessment procedures have similar properties to AR.}  We relate the hypotheses of these tests to our falsifiable restriction in Lemma \ref{as:falsify}.  Then, we contrast the statistical properties of RV and AR, allowing us to characterize the performance of RV under misspecification.\looseness=-1

\subsection{Definition of the Tests}\label{sec:thetests}

\noindent
\textbf{Rivers-Vuong Test (RV):}
A prominent approach to testing non-nested hypotheses was developed in \cite{v89} and then extended to models defined by moment conditions in \cite{rv02}. The null hypothesis for the test is that the two competing models of conduct have the same fit,
\begin{align}
    H_0^\text{RV}: \,\,  {Q}_1 = {Q}_2,
\end{align}
where ${Q}_m$ is a population measure for lack of fit in model $m$.
Relative to this null, we define two alternative hypotheses corresponding to cases of better fit of one of the two models:
\begin{align}
    H_1^\text{RV} \, : \, {Q}_1<{Q}_2 \qquad \text{and} \qquad H_2^\text{RV}\, : \, {Q}_2 < {Q}_1.
\end{align} 
With this formulation of the null and alternative hypotheses, the statistical problem is to determine which of the two models has the best fit, or equivalently, the smallest lack of fit.

We define lack of fit via a GMM objective function,  a standard choice for models with endogeneity. Thus, ${Q}_m  =  {g}_m' W {g}_m$ where ${g}_m= E[z_i(p_i-\Delta_{mi})]$ and $ W=E[z_i z_i']^{-1}$ is a positive definite weight matrix.\footnote{This weight matrix allows us to interpret ${Q}_m$ in terms of Euclidean distance between predicted markups for model $m$ and the truth, directly implementing the MSE of predicted markups in Lemma \ref{as:falsify}.\looseness=-1}   The sample analog of ${Q}_m$ is $\hat Q_m=\hat g_m'\hat W \hat g_m$ where $\hat g_m = n^{-1} \hat z'(\hat p - \hat \Delta_m)$ and $\hat W = n (\hat z'\hat z)^{-1}$.

For the GMM measure of fit, the RV test statistic is then
\begin{align}\label{RVt}
T^\text{RV} = \frac{\sqrt{n}(\hat Q_1 - \hat Q_2)}{\hat \sigma_\text{RV}} , 
\end{align}
where $\hat\sigma^2_\text{RV}$ is an estimator for the asymptotic variance of  the scaled difference in the measures of fit appearing in the numerator of the test statistic. We denote this asymptotic variance by $\sigma^2_\text{RV}$. Throughout, we let $\hat \sigma^2_\text{RV}$ be a delta-method variance estimator that takes into account the randomness in both  $\hat W$ and $\hat g_m$. Specifically, this variance estimator takes the form
\begin{align}
    \hat \sigma^2_\text{RV} = 4\!\left[ \hat g_1'\hat W^{1/2} \hat V_{11}^\text{RV} \hat W^{1/2} \hat g_1 + \hat g_2'\hat W^{1/2} \hat V_{22}^\text{RV} \hat W^{1/2} \hat g_2- 2\hat g_1'\hat W^{1/2} \hat V_{12}^\text{RV} \hat W^{1/2} \hat g_2  \right]
\end{align}
where $\hat V_{\ell k}^\text{RV}$ is an estimator of the covariance between $\sqrt{n}\hat W^{1/2}\hat g_\ell$ and  $\sqrt{n}\hat W^{1/2}\hat g_k$. Our proposed $\hat V^\text{RV}_{\ell k}$ is given by $\hat V^\text{RV}_{\ell k} = n^{-1} \sum_{i=1}^n \hat \psi_{\ell i} \hat \psi_{ki}'$ where
\begin{align}
    \hat \psi_{mi} = \hat W^{1/2} \!\left( \hat z_i \big(\hat p_i - \hat \Delta_{mi}\big) - \hat g_m \right)- \tfrac{1}{2} \hat W^{3/4} \!\left( \hat z_i \hat z_i' -\hat W^{-1}\right)\!\hat W^{3/4} \hat g_m.
\end{align}
This variance estimator is transparent and easy to implement.  Adjustments to $\hat\psi_{mi}$ and/or $\hat V^\text{RV}_{\ell k}$ can also accommodate initial demand estimation and clustering; see Appendix C.\footnote{An alternative way of estimating this variance would be by bootstrapping, which can be costly especially when demand has to be re-estimated in each bootstrap sample.}\looseness=-1

The test statistic $T^\text{RV}$ is standard normal under the null as long as $\sigma^2_\text{RV}>0$. The RV test therefore rejects the null of equal fit at level $\alpha \in (0,1)$ whenever $\abs{ T^\text{RV} }$ exceeds the $(1-\alpha/2)$-th quantile of a standard normal distribution. If instead $\sigma^2_\text{RV} = 0$, the RV test is said to be degenerate. In the rest of this section, we maintain non-degeneracy.

\begin{manualtheorem}{ND}\label{as:nodegen}
The RV test is not degenerate, i.e., $\sigma^2_\text{RV} > 0$.
\end{manualtheorem}

While Assumption \ref{as:nodegen} is often maintained in practice, severe inferential problems may occur when $\sigma^2_\text{RV} = 0$. These problems include large size distortions and little to no power throughout the parameter space. Thus, it is essential to understand degeneracy and diagnose the inferential problems it can cause. We return to these issues in Section \ref{sec:instruments}.

\vspace{0.5em}
\noindent
\textbf{Anderson-Rubin Test (AR):} 
In this approach, the researcher writes down the following equation for each of the two models $m$:
\begin{align}\label{ref:eq_AR}
    p -\Delta_m  =  z\pi_m + e_m
\end{align}
where $\pi_m$ is defined by the orthogonality condition $E[z e_m]=0$. She then performs the test of the null hypothesis that $\pi_m=0$ with a Wald test. This procedure is similar to an \cite{ar49} testing procedure. For this reason, we refer to this procedure as AR. Formally, for each model $m$, we define the null and alternative hypotheses:
\begin{align}
    H_{0,m}^\text{AR} \,: \, \pi_m =0 \qquad \text{and} \qquad H_{A,m}^\text{AR} \, : \, \pi_m \neq 0.
\end{align} 
For the true model, $\pi_m$ is equal to zero since the dependent variable in Equation \eqref{ref:eq_AR} is equal to $\omega_0$ which is uncorrelated with $z$ under Assumption \ref{as:momcond}. \looseness=-1

We define the AR test statistic for model $m$ as:
\begin{align}\label{eq:ARtest_Stat}
    T^\text{AR}_m &= n\hat \pi_m' \big( \hat V^\text{AR}_{mm}  \big)^{-1} \hat \pi_m 
\end{align}
where $\hat \pi_m$ is the OLS estimator of $\pi_m$ in Equation  \eqref{ref:eq_AR} and $\hat V^\text{AR}_{mm}$ is White's heteroskedasticity-robust variance estimator. This variance estimator is $\hat V^\text{AR}_{\ell k} = n^{-1} \sum_{i=1}^n \hat \phi_{\ell i} \hat \phi_{ki}'$ where $\hat \phi_{mi} = \hat W \hat z_i\big( \hat p_i - \hat \Delta_{mi} - \hat z_i'\hat \pi_m\big)$. Under the null corresponding to model $m$, the large sample distribution of the test statistic $T^\text{AR}_m$ is a (central) $\chi^2_{d_z}$ distribution and the AR test rejects the corresponding null at level $\alpha$ when $T^\text{AR}_m$ exceeds the $(1-\alpha)$-th quantile of this distribution.\looseness=-1

\subsection{Hypotheses Formulation and Falsifiability}

We now  show that the null hypotheses of both tests can be reexpressed in terms of our falsifiable restriction in Lemma \ref{as:falsify}.\looseness=-1

\begin{proposition}\label{prop:nulls}
Suppose that Assumptions \ref{as:momcond} and \ref{as:regcond} are satisfied.  Then 
    \begin{enumerate}
        
        \item[(i)] the null hypothesis $H_0^\emph{RV}$ holds if and only if  $E\big[(\Delta^z_{0i}-\Delta^z_{1i})^2\big] = E\big[(\Delta^z_{0i}-\Delta^z_{2i})^2\big]$,
        
        \item[(ii)] the null hypothesis $H_{0,m}^\emph{AR}$ holds if and only if $E\big[(\Delta^z_{0i}-\Delta^z_{mi})^2\big] = 0$.
        
    \end{enumerate}
\end{proposition}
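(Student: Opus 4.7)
The plan is to reduce both null hypotheses to statements about the predicted markups defined in \eqref{eq:predMarkups}, using the exclusion restriction together with the residualization on $\textbf{w}$. The key preliminary identity is that, after residualization, the pricing equation \eqref{eq:model} reads $p_i = \Delta_{0i} + \omega_{0i}$, so Assumption \ref{as:momcond} yields
$$
g_m \;=\; E[z_i(p_i - \Delta_{mi})] \;=\; E[z_i(\Delta_{0i} - \Delta_{mi})].
$$
With this in hand, both $Q_m$ and $\pi_m$ can be reexpressed in terms of $\Gamma_0 - \Gamma_m = E[z_iz_i']^{-1}E[z_i(\Delta_{0i}-\Delta_{mi})]$, which is precisely the linear coefficient that generates the difference in predicted markups $\Delta^z_{0i} - \Delta^z_{mi} = z_i'(\Gamma_0-\Gamma_m)$.

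For part (i), I would substitute the identity for $g_m$ into $Q_m = g_m' W g_m$ with $W = E[z_iz_i']^{-1}$. A short algebraic manipulation gives
$$
Q_m \;=\; (\Gamma_0 - \Gamma_m)' E[z_iz_i'] (\Gamma_0 - \Gamma_m) \;=\; E\bigl[(z_i'(\Gamma_0 - \Gamma_m))^2\bigr] \;=\; E\bigl[(\Delta^z_{0i} - \Delta^z_{mi})^2\bigr],
$$
where the last equality invokes the definition of predicted markups. Part (i) then follows immediately, since $Q_1 = Q_2$ is equivalent to equality of the two MSEs of predicted markups on the right-hand side.

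For part (ii), $\pi_m$ is characterized by the population orthogonality condition $E[z_i(p_i - \Delta_{mi} - z_i'\pi_m)] = 0$, whence $\pi_m = E[z_iz_i']^{-1}E[z_i(\Delta_{0i}-\Delta_{mi})] = \Gamma_0 - \Gamma_m$ by the preliminary identity. Therefore $\pi_m = 0$ if and only if $\Gamma_0 = \Gamma_m$, which since $\Delta^z_{mi} = z_i'\Gamma_m$ is equivalent to $\Delta^z_{0i} = \Delta^z_{mi}$ almost surely, and hence to $E[(\Delta^z_{0i}-\Delta^z_{mi})^2] = 0$. Both the quadratic-form manipulation in part (i) and the inversion in part (ii) are well posed under Assumption \ref{as:regcond}(ii), which guarantees that $E[z_iz_i']$ is positive definite. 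I do not anticipate any genuine obstacle: both parts collapse to the same residualization identity combined with elementary linear algebra, so the main task is simply to make the bookkeeping between $g_m$, $\Gamma_m$, and $\Delta^z_m$ transparent.
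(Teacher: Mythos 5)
Your proposal is correct and follows essentially the same route as the paper's proof: both reduce $Q_m$ and $\pi_m$ to the quadratic form $(\Gamma_0-\Gamma_m)'E[z_iz_i'](\Gamma_0-\Gamma_m) = E[(\Delta^z_{0i}-\Delta^z_{mi})^2]$ via the exclusion restriction, and invoke positive definiteness of $E[z_iz_i']$ for the equivalence in part (ii). No gaps.
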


\noindent While it may seem puzzling that the hypotheses depend on a feature of the unobservable $\Delta_0,$ recall that $\Delta_0^z$ is identified by the observable covariance between $p$ and $z$. The formulation of the hypotheses in Proposition \ref{prop:nulls} shows the connection between the statistical procedures used in applied work and the key economic implications of the models of conduct. AR and RV implement Equation \eqref{eq:testability} through their null hypotheses, but they do so in distinct ways.

AR forms hypotheses for each model directly from Equation \eqref{eq:testability}, separately evaluating whether each model is falsified by the instruments.  From Proposition \ref{prop:nulls}, the null of the AR test asserts that the MSE of predicted markups for model $m$ is zero, while the alternative is that the MSE is positive.  Thus, the hypotheses depend on the absolute fit of the model measured in terms of  predicted markups.  In fact, we show in Appendix \ref{sect:EBCox} that other procedures used in the IO literature to distinguish models of conduct share the same null as AR.  All these \textit{model assessment} tests may reject both models if they both have poor absolute fit. This is not only a practical inconvenience, but a theoretical limitation. \looseness=-1

As opposed to checking the falsifiable restriction in Lemma \ref{as:falsify} for each model, one could pursue a \textit{model selection} approach by comparing the relative fit of the models.   Proposition \ref{prop:nulls} shows that the RV test compares the MSE of predicted markups for model 1 to the  MSE for model 2.  The null of the RV test asserts that these are equal.  Meanwhile, the alternative hypotheses assert that the relative fit of either model 1 or model 2 is superior.   If the RV test rejects, it will never reject both models, but only the one whose predicted markups are farther from the true predicted markups.

\subsection{Inference on Conduct and Misspecification}\label{sec:globalmis}

The previous section showed that our analog to the falsifiable restriction in \cite{bh14} can be used to perform either model selection or model assessment,  depending on how the null is formed from the moment in Equation \eqref{eq:testability}.  Here, we explore the implications that these two formulations of the null have on inference.  Crucially, as $\Delta^{z}_m$ is a function of demand parameters and is residualized with respect to cost shifters, these implications depend on whether demand or cost are misspecified.\footnote{Nonparametric estimation of demand is possible \citep[e.g.,][]{c20}, yet researchers often rely on parametric estimates. While these can be good approximations, some misspecification is likely.}  To provide an overview, we first contrast the performance of AR and RV in the presence of a fixed amount of misspecification for either markups or costs.  Cost misspecification can be fully understood as a form of markup misspecification. We then compare AR and RV when the level of markup misspecification is local to zero.\looseness=-1 
 
\vspace{0.5em}
\noindent
\textbf{Global Markup Misspecification:}
When we allow markups to be misspecified by a fixed amount, important differences in the performance of the tests arise:\looseness=-1

\begin{lemma}\label{cor:modelmisc}
   Suppose that Assumptions \ref{as:momcond}, \ref{as:regcond}, and \ref{as:nodegen}  are  satisfied. Then, with probability approaching one as $n\rightarrow \infty$,
    
    \begin{itemize}
        
        \item[(i)] \emph{RV} rejects the null of equal fit in favor of model 1 if $E\!\big[(\Delta^z_{0i}-\Delta^z_{1i})^2\big]< E\!\big[(\Delta^z_{0i}-\Delta^z_{2i})^2\big]$,

        \item[(ii)] \emph{AR} rejects the null of perfect fit for model $m$ with $E\!\big[(\Delta^z_{0i}-\Delta^z_{mi})^2\big] \neq 0$.\looseness=-1
        
    \end{itemize}
\end{lemma}

It is instructive to interpret the lemma in the special case where model 1 is the true model.  If demand elasticities are correctly specified, then $\Delta^z_1 = \Delta^z_0$.  Further suppose that model 2, a wrong model of conduct, can be falsified by the instruments.  For AR, the null hypothesis for model 1 is satisfied while the null hypothesis for model 2 is not.  Thus, without misspecification, the researcher can learn the true model of conduct with a model assessment approach.  However, it is more likely in practice that markups are misspecified such that $E\big[(\Delta^z_{0i}-\Delta^z_{1i})^2\big]\neq 0$.  Regardless of the degree of misspecification, Lemma \ref{cor:modelmisc} then shows that AR rejects the true model in large samples, and also generically rejects model 2.\footnote{Appendix \ref{sect:EBCox} shows that similar results obtain for other model assessment tests.}  While the researcher learns that the predicted markups implied by the two models are not correct, the test gives no indication on conduct.\looseness=-1  

By contrast, RV rejects in favor of the true model in large samples, regardless of misspecification, as long as $E\big[(\Delta^z_{0i}-\Delta^z_{1i})^2\big]< E\big[(\Delta^z_{0i}-\Delta^z_{2i})^2\big]$.   If misspecification is not too severe such that $\Delta^z_0$ is closer to $\Delta^z_1$ than to $\Delta^z_2$, RV concludes for the true model of conduct.\looseness=-1

Finally, consider the scenario where markups are misspecified and neither model is true. AR rejects any candidate model in large samples. Conversely, RV points in the direction of the model that appears closer to the truth in terms of predicted markups.  If the ultimate goal is to learn the true model of conduct as opposed to the true markups, model selection is appropriate under global markup misspecification while model assessment is not.

\vspace{0.5em}
\noindent \emph{Example 1 -  continued:}  Consider again the case of distinguishing perfect competition from the true model of monopoly, now using market demographics as instruments. Suppose that the researcher misspecifies the demand model, for instance by estimating a mixed logit model that omits a significant interaction between demographics and product characteristics. Let $\Delta_0$ be monopoly markups with the true demand system, and $\Delta_1$ and $\Delta_2$ be the monopoly and perfect competition markups, respectively, with the misspecified demand system.  Thus, $\Delta_2$ and $\Delta^z_2$ are both zero. Because  substitution patterns are misspecified, the degree to which market demographics affect  $\Delta_0$ and  $\Delta_1$ is different. AR then rejects monopoly. Instead, as long as the MSE of $\Delta_1^z$ is smaller than the variance of $\Delta^z_0$, or $E\big[(\Delta^z_{0i}-\Delta^z_{1i})^2\big]< E\big[(\Delta^z_{0i})^2\big]$, RV concludes in favor of monopoly. 
\vspace{0.5em}

\noindent
\textbf{Global Cost Misspecification:} 
In addition to demand being misspecified, it is also possible that a researcher misspecifies marginal cost.  Here we show that testing with misspecified marginal costs can be reexpressed as testing with misspecifed markups so that the results in the previous section apply. As a leading example, we consider the case where the researcher specifies $\textbf{w}_\textbf{a}$ which are a subset of $\textbf{w}$.  This could happen in practice because the researcher does not observe all the variables that determine marginal cost or does not specify those variables flexibly enough in constructing $\textbf{w}_\textbf{a}$.\footnote{Under a mild exogeneity condition, the results here extend to the case where $\textbf{w}\neq \textbf{w}_\textbf{a}$.}

To perform testing with misspecified costs, the researcher would residualize $\boldsymbol p$, $\boldsymbol{\Delta}_1$, $\boldsymbol{\Delta}_2$ and $\boldsymbol z$ with respect to $\textbf{w}_\textbf{a}$ instead of $\textbf{w}$.  Let $y^\textbf{a}$ denote a generic variable $\boldsymbol {y}$  residualized with respect to $\textbf{w}_\textbf{a}$.  Thus, with cost misspecification, both  RV and AR depend on the moment\looseness=-1
\begin{align}\label{eq:omega_costmisc}
    E\!\left[z_i^\textbf{a}(p^\textbf{a}_i-\Delta^\textbf{a}_{mi})\right] =E\!\left[z_i^\textbf{a}(\Delta^\textbf{a}_{0i} - \Breve{\Delta}^{\textbf{a}}_{mi})\right],
\end{align} 
where $\Breve{\Delta}^\textbf{a}_{mi}={\Delta}^\textbf{a}_{mi}-\text{w}^\textbf{a}\tau$ and $\text{w}^\textbf{a}$ is $\textbf{w}$ residualized with respect to $\textbf{w}_\textbf{a}$. 
When price is residualized with respect to cost shifters $\textbf{w}_\textbf{a}$, the true cost shifters are not fully controlled for and the fit of model $m$ depends on the distance between $\Delta^{\textbf{a}z}_0$ and $\Breve\Delta^{\textbf{a}z}_m$.  For example, suppose model 1 is the true model and demand is correctly specified so that $\boldsymbol\Delta_0 = \boldsymbol\Delta_1$.  Model 1 is still falsified as $\Breve{\Delta}^\textbf{a}_{1}={\Delta}^\textbf{a}_{0} - \text{w}^\textbf{a}\tau$.  Thus, when performing testing with misspecified cost, it is as if the researcher performs testing with markups that have been misspecified by $-\text{w}^\textbf{a}\tau$.  We formalize the implications of cost misspecification on testing in the following lemma:

\begin{lemma}\label{cor:cost_misc}
   Suppose  $\emph{\textbf{w}}_\emph{\textbf{a}}$ is a subset of $\emph{\textbf{w}}$, all variables $y^\emph{\textbf{a}}$ are residualized with respect to $\emph{\textbf{w}}_\emph{\textbf{a}}$, and Assumptions \ref{as:momcond}, \ref{as:regcond}, and \ref{as:nodegen} are satisfied.  Then, with probability approaching one as $n\rightarrow \infty$,
    \begin{itemize}
        
        \item[(i)] \emph{RV} rejects the null of equal fit in favor of model 1 if $E\!\big[({\Delta}^{\emph{\textbf{a}}z}_{0i}-\Breve{\Delta}^{\emph{\textbf{a}}z}_{1i})^2\big] \!<\! E\!\big[({\Delta}^{\emph{\textbf{a}}z}_{0i}-\Breve{\Delta}^{\emph{\textbf{a}}z}_{2i})^2\big]$,
        
        \item[(ii)] \emph{AR} rejects the null of perfect fit for any model $m$ with $E\!\big[({\Delta}^{\emph{\textbf{a}}z}_{0i}-\Breve{\Delta}^{\emph{\textbf{a}}z}_{mi})^2 \big]\neq0$.
    
    \end{itemize}
\end{lemma}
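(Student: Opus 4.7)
The plan is to reduce the cost-misspecified setting to the setting of Lemma \ref{cor:modelmisc} by redefining the objects that drive the tests. First I establish Equation \eqref{eq:omega_costmisc}: substituting $\boldsymbol p = \boldsymbol\Delta_0 + \textbf{w}\tau + \omega_0$ into the residualized pricing error and using $\textbf{w}_\textbf{a}\subseteq\textbf{w}$ together with $E[\textbf{w}_i\omega_{0i}]=0$, the projection of $\omega_0$ onto $\textbf{w}_\textbf{a}$ vanishes in population, so $p_i^\textbf{a} - \Delta_{mi}^\textbf{a} = \bigl(\Delta_{0i}^\textbf{a} - \Breve{\Delta}_{mi}^\textbf{a}\bigr) + \omega_{0i}$. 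Because $z^\textbf{a}$ is a linear combination of $z$ and $\textbf{w}_\textbf{a}$ by construction, Assumption \ref{as:momcond} together with the nestedness of $\textbf{w}_\textbf{a}$ in $\textbf{w}$ implies $E[z_i^\textbf{a}\omega_{0i}]=0$, and the moment that drives both tests collapses to $E[z_i^\textbf{a}(\Delta_{0i}^\textbf{a} - \Breve{\Delta}_{mi}^\textbf{a})]$.

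Second, I reinterpret the cost-misspecified problem as a standard, correctly-residualized testing problem for the hypothetical triple $(\Delta_0^\textbf{a},\Breve\Delta_1^\textbf{a},\Breve\Delta_2^\textbf{a})$ with instruments $z^\textbf{a}$ and no remaining cost shifters. In this reinterpretation the GMM objective equals
\begin{align}
    Q_m^\textbf{a} \;=\; E\!\bigl[\bigl(\Delta_{0i}^{\textbf{a}z} - \Breve{\Delta}_{mi}^{\textbf{a}z}\bigr)^2\bigr]
\end{align}
by the same algebra that proves Proposition \ref{prop:nulls}(i), and the AR pseudo-true parameter from Equation \eqref{ref:eq_AR} satisfies $\pi_m^\textbf{a}=0$ if and only if the right-hand side equals zero. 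I will verify that the reformulated system inherits the maintained assumptions: the orthogonality just established delivers Assumption \ref{as:momcond}; Assumption \ref{as:regcond} transfers because $\Breve\Delta_1^\textbf{a} - \Breve\Delta_2^\textbf{a} = \Delta_1^\textbf{a} - \Delta_2^\textbf{a}$ preserves the positive-variance condition and $E[z_i^\textbf{a} z_i^{\textbf{a}\prime}]$ is positive definite by the positive definiteness of $E[(\boldsymbol z_i',\textbf{w}_i')'(\boldsymbol z_i',\textbf{w}_i')]$; and Assumption \ref{as:nodegen} is imposed directly on the test under study.

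With these pieces in place, the conclusion of Lemma \ref{cor:modelmisc} applies verbatim in the transformed problem. Consistency of $\hat Q_m^\textbf{a}$, $\hat\sigma_\text{RV}^2$, and $\hat V_{mm}^\text{AR}$ follows from Assumption \ref{as:regcond} and continuous mapping, so $T^\text{RV}$ diverges to $-\infty$ whenever $Q_1^\textbf{a} < Q_2^\textbf{a}$ (and to $+\infty$ in the reverse case), delivering part (i), and $T_m^\text{AR}$ diverges to $+\infty$ whenever $\pi_m^\textbf{a}\neq 0$, delivering part (ii). The main obstacle is the bookkeeping in the first step: the cleanest route will be to show that $\omega_0$ coincides with its own residual against $\textbf{w}_\textbf{a}$ and that $z^\textbf{a}$ remains orthogonal to $\omega_0$, after which the rest is a mechanical translation of Lemma \ref{cor:modelmisc} with $\Breve\Delta_m^\textbf{a}$ playing the role of the candidate markups and $\Delta_0^\textbf{a}$ the role of the truth.
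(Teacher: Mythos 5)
Your proposal is correct and follows essentially the same route as the paper: the paper's proof simply replaces every variable residualized against $\textbf{w}$ in the proof of Lemma \ref{cor:modelmisc} with its counterpart residualized against $\textbf{w}_\textbf{a}$, with $\Breve\Delta_m^\textbf{a}$ playing the role of the candidate markups, which is exactly your reduction. You merely spell out the bookkeeping (that $\omega_0$ survives residualization against $\textbf{w}_\textbf{a}$, that $E[z_i^\textbf{a}\omega_{0i}]=0$, and that the regularity conditions transfer) that the paper leaves implicit.
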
 

\noindent
Thus the effects of cost misspecification can be fully understood as markup misspecification, so we focus on markup misspecification in what follows. \looseness=-1

\vspace{0.5em}
\noindent
\textbf{Local Markup Misspecification:}
To more fully understand the role of misspecification on inference of conduct, we would like to contrast the performance of AR and RV in finite samples. However, it is not feasible to characterize the exact finite sample distribution of AR and RV under our maintained assumptions.  Instead, we can approximate the finite sample distribution of each test by considering local misspecification, i.e., a sequence of candidate models that converge to the null space at an appropriate rate.  As model assessment and model selection procedures have different nulls, we define distinct local alternatives for RV and AR based on $\Gamma_m$. For model assessment, local misspecification is characterized in terms of the absolute degrees of misspecification for each model:\looseness=-1
\begin{align}\label{eq:local assessment}
    \Gamma_0 -  \Gamma_m &= {q_m}/{\sqrt{n}} & \text{ for } m \in \{1,2\}
\end{align}
By contrast, local alternatives for model selection are in terms of the relative degree of misspecification between the two models:
\begin{align}\label{eq:local selection}
    (\Gamma_0 -  \Gamma_1) - (\Gamma_0 -  \Gamma_2) &= {q}/{\sqrt{n}}.
\end{align}

Under the local alternatives in Equations \eqref{eq:local assessment} and \eqref{eq:local selection}, we approximate the finite sample distribution of AR and RV with misspecification in the following proposition.  To facilitate a characterization in terms of predicted markups, we define stable versions of predicted markups under either of the two local alternatives considered:  $\Delta_{mi}^{\text{RV},z} = n^{1/4}\Delta_{mi}^{z}$ and $\Delta_{mi}^{\text{AR},z} = n^{1/2}\Delta_{mi}^{z}$. We also introduce an assumption of homoskedastic errors, which in this section serves to simplify the distribution of the AR statistic:\looseness=-1

\begin{assumption}\label{as:homsked}
    The error term in Equation \eqref{ref:eq_AR}, $e_{mi}$, is homoskedastic, i.e., $E[e_{m i}^2 z_iz_i']=\sigma_{m}^2 E[z_iz_i']$ with $\sigma_{m}^2>0$ for $m \in \{1,2\}$ and $E[e_{1 i} e_{2 i} z_iz_i']=\sigma_{12} E[z_iz_i']$ with $\sigma_{12}^2 <\sigma_{1}^2 \sigma_{2}^2$.
\end{assumption}

\noindent 
The intuition developed in this section does not otherwise rely on Assumption \ref{as:homsked}.\looseness=-1

\begin{proposition}\label{prop:distrib_TE2}
Suppose that Assumptions \ref{as:momcond}--\ref{as:homsked} and \ref{as:nodegen} are satisfied.
Then
\begin{align}
    (i) && 
    T^{\emph{RV}} & \xrightarrow{d} 
    N\!\left( \!\frac{E\big[(\Delta^{\emph{RV},z}_{0i}-\Delta^{\emph{RV},z}_{1i}) ^2\big] - E\big[(\Delta^{\emph{RV},z}_{0i}-\Delta^{\emph{RV},z}_{2i}) ^2\big]}{\sigma_\text{\emph{RV}}},1\!\right)\! && 
    \text{under } \eqref{eq:local selection}, 
    \\
    (ii) && 
    T^\text{\emph{AR}}_m &\xrightarrow{d} \chi^2_{d_z}\!\left(\frac{E\big[(\Delta^{\emph{AR},z}_{0i}-\Delta^{\emph{AR},z}_{mi}) ^2\big] }{\sigma^2_{m}} \right)\! && 
    \text{under } \eqref{eq:local assessment},
\end{align}
where $\chi^2_{df}(nc)$ denotes a non-central $\chi^2$ distribution with $df$ degrees of freedom and non-centrality $nc$.
\end{proposition}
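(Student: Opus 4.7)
The plan is to prove parts (i) and (ii) separately, starting with the more routine part (ii). For part (ii), I would first observe that $\pi_m = E[z_i z_i']^{-1}E[z_i(p_i - \Delta_{mi})] = \Gamma_0 - \Gamma_m$ by Assumption \ref{as:momcond}, so under Equation \eqref{eq:local assessment} we have $\sqrt{n}\pi_m = q_m$. A standard CLT under Assumption \ref{as:regcond} gives $\sqrt{n}(\hat\pi_m - \pi_m) \xrightarrow{d} N(0, \sigma_m^2 E[z_iz_i']^{-1})$ via Assumption \ref{as:homsked}, and hence $\sqrt{n}\hat\pi_m \xrightarrow{d} N(q_m, \sigma_m^2 E[z_iz_i']^{-1})$. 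A law-of-large-numbers argument yields $\hat V_{mm}^\text{AR} \xrightarrow{P} \sigma_m^2 E[z_iz_i']^{-1}$, and the continuous mapping theorem then delivers $T_m^\text{AR} \xrightarrow{d} \chi^2_{d_z}(\lambda_m)$ with $\lambda_m = \sigma_m^{-2} q_m' E[z_iz_i'] q_m$. Rewriting $\lambda_m = \sigma_m^{-2}n(\Gamma_0-\Gamma_m)'E[z_iz_i'](\Gamma_0 - \Gamma_m) = \sigma_m^{-2} E[(\Delta_{0i}^{\text{AR},z} - \Delta_{mi}^{\text{AR},z})^2]$ using $\Delta^{\text{AR},z}_{mi} = n^{1/2}\Delta_{mi}^{z}$ matches the form in the statement.

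For part (i), the central step is a linearization of $\hat Q_m = \hat g_m' \hat W \hat g_m$ around $(g_m, W)$. Using $\hat W - W = -W(\hat W^{-1} - W^{-1})W + o_p(n^{-1/2})$ together with $\sqrt{n}(\hat W^{-1} - W^{-1}) = n^{-1/2}\sum_i(z_iz_i' - W^{-1}) + o_p(1)$, a Taylor expansion produces
\begin{align*}
    \sqrt{n}(\hat Q_m - Q_m) = n^{-1/2}\sum_{i=1}^n \xi_{mi} + o_p(1),
\end{align*}
where $\xi_{mi} = 2g_m' W(z_i(p_i - \Delta_{mi}) - g_m) - g_m' W(z_iz_i' - W^{-1})Wg_m$ has mean zero. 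A triangular-array CLT, needed because $g_m$ may depend on $n$ through the local alternative, applied to $\{\xi_{1i} - \xi_{2i}\}$ delivers $\sqrt{n}((\hat Q_1 - \hat Q_2) - (Q_1 - Q_2)) \xrightarrow{d} N(0, \sigma_\text{RV}^2)$, where $\sigma_\text{RV}^2$ is the limit of $\operatorname{Var}(\xi_{1i} - \xi_{2i})$. Separately, I would verify $\hat\sigma^2_\text{RV} \xrightarrow{P} \sigma_\text{RV}^2$ by combining consistency of $\hat V_{\ell k}^\text{RV}$ with the matrix square-root derivative implicit in the fractional-power term of $\hat\psi_{mi}$. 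Under Equation \eqref{eq:local selection}, the identity $\sqrt{n}(Q_1-Q_2) = E[(\Delta_{0i}^{\text{RV},z}-\Delta_{1i}^{\text{RV},z})^2] - E[(\Delta_{0i}^{\text{RV},z}-\Delta_{2i}^{\text{RV},z})^2]$ follows directly from $\Delta_{mi}^{\text{RV},z} = n^{1/4}\Delta_{mi}^z$, and Slutsky's theorem completes (i).

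The hardest part will be controlling the second-order terms in the linearization under local misspecification: I need to show that the quadratic remainder $\sqrt{n}(\hat g_m - g_m)'\hat W(\hat g_m - g_m) = O_p(n^{-1/2})$ and the additional corrections from residualizing with respect to $\textbf{w}$ are uniformly $o_p(1)$ along the local-alternative sequence, which relies on the fourth-moment bounds in Assumption \ref{as:regcond}. A related subtlety is verifying that $\hat\psi_{mi}$, which involves $\hat W^{3/4}$, yields a variance estimator converging to $\sigma_\text{RV}^2$; this amounts to checking the delta-method derivative of the matrix square root applied symmetrically on both sides of $\hat z_i \hat z_i' - \hat W^{-1}$. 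In contrast, in part (ii) the heteroskedasticity-robust variance simplifies directly under Assumption \ref{as:homsked}, so the remaining arguments are routine applications of the continuous mapping theorem.
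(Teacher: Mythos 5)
Your proposal is correct and follows essentially the same route as the paper: part $(ii)$ is the paper's argument verbatim (Lemma \ref{lem:asymp norm}$(iii)$--$(iv)$ plus the algebraic identity $n\pi_m'(V^\text{AR}_{mm})^{-1}\pi_m = \sigma_m^{-2}q_m'E[z_iz_i']q_m$), and part $(i)$ is the same delta-method expansion, with the drift identified through $\sqrt{n}(Q_1-Q_2)=q'E[z_iz_i']\big((\Gamma_0-\Gamma_1)+(\Gamma_0-\Gamma_2)\big)$ and the rescaling $\Delta^{\text{RV},z}_{mi}=n^{1/4}\Delta^z_{mi}$. The only organizational difference is that you linearize $\hat Q_m=\hat g_m'\hat W\hat g_m$ directly, whereas the paper first linearizes $\hat W^{1/2}\hat g_m$ via Lemma \ref{lem:asymp norm}$(i)$--$(ii)$ and then squares; your explicit appeal to a triangular-array CLT along the local sequence is, if anything, slightly more careful than the paper's subsequence remark.
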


From Proposition \ref{prop:distrib_TE2}, both test statistics follow a non-central distribution. However, the non-centrality term differs for the two tests because of the formulation of their null hypotheses.  For AR, the non-centrality for model $m$ is the ratio of the MSE of predicted markups to the noise given by $\sigma^2_{m}$.  Alternatively, the noncentrality term for RV depends on the ratio of the difference in MSE for the two models to the noise.  Thus, how one formulates the hypotheses from Equation \eqref{eq:testability} also affects inference on conduct in finite samples.

We illustrate the relationship between hypothesis formulation and inference in Figure \ref{fig:effect_noise}. Each panel represents, for either model selection or model assessment, and for either a high or low noise environment, the outcome of testing in the coordinate system of MSE in predicted markups $\big(E\big[(\Delta^z_{0i}-\Delta^z_{1i})^2\big],E\big[(\Delta^z_{0i}-\Delta^z_{2i})^2\big]\big)$. Regions with horizontal lines indicate concluding for model 1, while regions with vertical shading indicate concluding for model 2. \looseness=-1

\begin{figure}[h!]
    \centering
        \caption{Effect of Noise on Testing Procedures} \label{fig:effect_noise}
    \begin{tikzpicture}[scale = 0.9]
\begin{axis}[
  axis lines=middle,
  axis line style={Stealth-Stealth,very thick,black},
  xmin=-1,xmax=5.5,ymin=-1.0,ymax=5.5,
  tick style={draw=none},
  xticklabels=\empty,
  yticklabels=\empty,
  xlabel={$E[(\Delta^z_0-\Delta^z_1)^2]$},
  ylabel={$E[(\Delta^z_0-\Delta^z_2)^2]$},
  axis lines=middle,
    axis line style={->},
    x label style={at={(axis description cs:0.8,0.15)},anchor=north},
    y label style={at={(axis description cs:0.15,.75)},rotate=90,anchor=south},
  xlabel style = {black},
  ylabel style = {black},
  title={A. Model Selection - High Noise},
  grid=major,
  grid style={thin,densely dotted,white!20}]

\addplot [domain=0:5.5,samples=2, color = black] {x} node[right]{};

\tikzstyle{rej1}=[draw=none,line width=10pt,preaction={clip, postaction={pattern=horizontal lines, pattern color=blue, opacity=0.3}}]
\tikzstyle{rej2}=[draw=none,line width=1pt,preaction={clip, postaction={pattern=vertical lines, pattern color=red, opacity=0.3}}]

\draw[rej1] (100,300) -- (100,850) -- (650,850)   -- cycle;
\draw[rej2] (300,100) -- (850,100) -- (850,650)   -- cycle;

\draw[draw=black,line width=0.5pt]  (100,300) -- (700,900)  -- cycle;

\draw[draw=black,line width=0.5pt]  (300,100) -- (900,700)  -- cycle;

\node[text width=3.0cm,rotate = 0,color = black,fill = none] at (550,180) {{${Q}_1 > {Q}_2$}};

\node[text width=5.9cm,rotate = 40,color = black,fill = none] at (500,540) {{$H^\text{RV}_0$: ${Q}_1 = {Q}_2$}};

\node[text width=3.0cm,rotate = 0,color = black,fill = none] at (300,550) {{${Q}_1 < {Q}_2$}};


\end{axis}

\end{tikzpicture}\hfill
\begin{tikzpicture}[scale = 0.9]
\begin{axis}[
  axis lines=middle,
  axis line style={Stealth-Stealth,very thick,black},
  xmin=-1,xmax=5.5,ymin=-1.0,ymax=5.5,
  tick style={draw=none},
  xticklabels=\empty,
  yticklabels=\empty,
  xlabel={$E[(\Delta^z_0-\Delta^z_1)^2]$},
  ylabel={$E[(\Delta^z_0-\Delta^z_2)^2]$},
  axis lines=middle,
    axis line style={->},
    x label style={at={(axis description cs:0.8,0.15)},anchor=north},
    y label style={at={(axis description cs:0.15,.75)},rotate=90,anchor=south},
  xlabel style = {black},
  ylabel style = {black},
  title={B. Model Assessment - High Noise},
  grid=major,
  grid style={thin,densely dotted,white!20}]

\tikzstyle{rej1}=[draw=none,line width=10pt,preaction={clip, postaction={pattern=horizontal lines, pattern color=blue, opacity=0.3}}]
\tikzstyle{rej2}=[draw=none,line width=1pt,preaction={clip, postaction={pattern=vertical lines, pattern color=red, opacity=0.3}}]
\draw[rej1] (250,200) -- (250,850) -- (100,850) -- (100,200)   -- cycle;
\draw[rej2] (250,100) -- (250,200) -- (850,200) -- (850,100)   -- cycle;

\node[text width=5.9cm,rotate = 90,color = black,fill = none] at (130,550) {{$H_{0,1}^\text{AR}$: ${Q}_1 = 0$}};
\node[text width=5.9cm,rotate = 0,color = black,fill = none] at (550,140) {{$H^\text{AR}_{0,2}$: ${Q}_2 = 0$}};

\draw[draw=black,line width=0.5pt]  (250,100) -- (250,700)  -- cycle;

\draw[draw=black,line width=0.5pt]  (100,200) -- (700,200)  -- cycle;

\end{axis}

\end{tikzpicture}

\begin{tikzpicture}[scale = 0.9]
\begin{axis}[
  axis lines=middle,
  axis line style={Stealth-Stealth,very thick,black},
  xmin=-1,xmax=5.5,ymin=-1.0,ymax=5.5,
  tick style={draw=none},
  xticklabels=\empty,
  yticklabels=\empty,
  xlabel={$E[(\Delta^z_0-\Delta^z_1)^2]$},
  ylabel={$E[(\Delta^z_0-\Delta^z_2)^2]$},
  axis lines=middle,
    axis line style={->},
    x label style={at={(axis description cs:0.8,0.15)},anchor=north},
    y label style={at={(axis description cs:0.15,.75)},rotate=90,anchor=south},
  xlabel style = {black},
  ylabel style = {black},
  title={C. Model Selection - Low Noise},
  grid=major,
  grid style={thin,densely dotted,white!20}]

\addplot [domain=0:5.5,samples=2, color = black] {x} node[right]{};

\tikzstyle{rej1}=[draw=none,line width=10pt,preaction={clip, postaction={pattern=horizontal lines, pattern color=blue, opacity=0.3}}]
\tikzstyle{rej2}=[draw=none,line width=1pt,preaction={clip, postaction={pattern=vertical lines, pattern color=red, opacity=0.3}}]

\draw[rej1] (100,180) -- (100,870) -- (800,870)   -- cycle;
\draw[rej2] (180,100) -- (870,100) -- (870,800)   -- cycle;



\draw[draw=black,line width=0.5pt]  (100,180) -- (700,770)  -- cycle;

\draw[draw=black,line width=0.5pt]  (180,100) -- (770,700)  -- cycle;

\node[text width=3.0cm,rotate = 0,color = black,fill = none] at (550,180) {{${Q}_1 > {Q}_2$}};

\node[text width=5.9cm,rotate = 40,color = black,fill = none] at (500,540) {{$H^\text{RV}_0$: ${Q}_1 = {Q}_2$}};

\node[text width=3.0cm,rotate = 0,color = black,fill = none] at (300,550) {{${Q}_1 < {Q}_2$}};



\end{axis}

\end{tikzpicture}\hfill\hfill
\begin{tikzpicture}[scale = 0.9]
\begin{axis}[
  axis lines=middle,
  axis line style={Stealth-Stealth,very thick,black},
  xmin=-1,xmax=5.5,ymin=-1.0,ymax=5.5,
  tick style={draw=none},
  xticklabels=\empty,
  yticklabels=\empty,
  xlabel={$E[(\Delta^z_0-\Delta^z_1)^2]$},
  ylabel={$E[(\Delta^z_0-\Delta^z_2)^2]$},
  axis lines=middle,
    axis line style={->},
    x label style={at={(axis description cs:0.8,0.15)},anchor=north},
    y label style={at={(axis description cs:0.15,.75)},rotate=90,anchor=south},
  xlabel style = {black},
  ylabel style = {black},
  title={D. Model Assessment - Low Noise},
  grid=major,
  grid style={thin,densely dotted,white!20}]

\tikzstyle{rej1}=[draw=none,line width=10pt,preaction={clip, postaction={pattern=horizontal lines, pattern color=blue, opacity=0.3}}]
\tikzstyle{rej2}=[draw=none,line width=1pt,preaction={clip, postaction={pattern=vertical lines, pattern color=red, opacity=0.3}}]

\draw[rej1] (165,165) -- (165,850) -- (100,850) -- (100,165)   -- cycle;
\draw[rej2] (165,100) -- (165,165) -- (850,165) -- (850,100)   -- cycle;



\node[text width=5.9cm,rotate = 90,color = black,fill = none] at (130,550) {{$H_{0,1}^\text{AR}$: ${Q}_1 = 0$}};
\node[text width=5.9cm,rotate = 0,color = black,fill = none] at (550,130) {{$H_{0,2}^\text{AR}$: ${Q}_2 = 0$}};



\draw[draw=black,line width=0.5pt]  (165,100) -- (165,700)  -- cycle;

\draw[draw=black,line width=0.5pt]  (100,165) -- (700,165)  -- cycle;

\end{axis}

\end{tikzpicture}
    \caption*{\footnotesize{This figure illustrates how noise affects asymptotic outcomes of the RV test (in Panels A and C) and of the AR test (in Panels B and D).}}\vspace{-.5cm}
\end{figure}

Panels A and C correspond to testing with RV in a high and low noise environment respectively.  As the noise declines from A to C, the noncentrality term for RV, whose denominator depends on the noise, increases.  Hence the shaded regions expand towards the null space and the RV test becomes more conclusive in favor of a model of conduct.   Conversely, Panels B and D correspond to testing with AR in a high and a low noise environment.  As the noise decreases from Panel B to D, the noncentrality term of AR, whose denominator depends on the noise, increases and the shaded regions approach the two axes.  Thus, as the noise decreases, AR rejects both models with higher probability.  If the degree of misspecification is low, the probability RV concludes in favor of the true model increases as the noise decreases. Instead, AR only concludes in favor of the true model with sufficient noise.\looseness=-1

An analogy may be useful to summarize our discussion in this section. Model selection compares the relative fit of two candidate models and asks whether a ``preponderance of the evidence'' suggests that one model fits better than the other. Meanwhile, model assessment uses a higher standard of evidence, asking whether a model can be falsified ``beyond any reasonable doubt.'' While we may want to be able to conclude in favor of a model of conduct beyond any reasonable doubt,  this is not a realistic goal in the presence of misspecification. If we lower the evidentiary standard, we can still learn about the true nature of firm conduct. Hence, in the next section we focus on the RV test.  However, to this point we have assumed $\sigma^2_\text{RV} > 0$ and thereby assumed away degeneracy.  We address this threat to inference with the RV test in the next section.

\section{Degeneracy of RV and Weak Instruments}\label{sec:instruments}

Having established the desirable properties of RV under misspecification, we now revisit Assumption \ref{as:nodegen}. First, we connect degeneracy to our falsifiable restriction in Lemma \ref{as:falsify} and show that maintaining Assumption \ref{as:nodegen} is equivalent to ex ante imposing that at least one of the models is falsified by the instruments. To explore the consequences of such an assumption, we define a novel  weak instruments for testing asymptotic framework adapted from \cite{ss97} and for which degeneracy occurs.  We use the weak instrument asymptotics to  show that  degeneracy can cause size distortions and low power in finite samples.  To help researchers interpret the frequency with which the RV test makes errors, we propose a diagnostic in the spirit of \cite{sy05}. This proposed diagnostic is a scaled $F$-statistic computed from two first stage regressions and researchers can use it to gauge the extent to which inferential problems are a concern.\looseness=-1

\subsection{Degeneracy and Falsifiability}\label{sec:degeneracy}

We first characterize when the RV test is degenerate in our setting. Since $\sigma^2_\text{RV}$ is the asymptotic variance of $\sqrt{n}(\hat Q_1-\hat Q_2)$, it follows that Assumption \ref{as:nodegen} fails to be satisfied whenever $\hat Q_1-\hat Q_2=o_p\big( 1/\sqrt{n} \big)$ \citep[see also][]{rv02}. In the following proposition, we reinterpret this condition through the lens of our falsifiable restriction.

\begin{proposition}\label{prop:degen}
Suppose Assumptions \ref{as:momcond}--\ref{as:homsked} hold. Then $\sigma^2_\emph{RV} = 0$ if and only if $E\big[(\Delta^z_{0i}-\Delta^z_{mi})^2\big]=0$ for $m=1$ and $m=2$. 
\end{proposition}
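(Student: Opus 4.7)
The plan is to express $\sigma^2_\text{RV}$ as the variance of the influence function of $\sqrt{n}(\hat Q_1-\hat Q_2)$ and then connect this variance to the MSE of predicted markups via homoskedasticity. First I would apply the delta method to $\hat Q_m = \hat g_m'\hat W \hat g_m$, accounting for randomness in both $\hat g_m$ and $\hat W$, to obtain the mean-zero influence function at observation $i$,
\begin{equation*}
\eta_{mi} \;=\; 2\,g_m'W z_i (p_i-\Delta_{mi}) - (g_m'W z_i)^2 - g_m'W g_m,
\end{equation*}
so that $\sigma^2_\text{RV} = \text{Var}(\eta_{1i}-\eta_{2i})$. Two identities are key. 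Assumption \ref{as:momcond} gives $g_m = E[z_i(\Delta_{0i}-\Delta_{mi})]$, and the definition of $\Gamma_m$ then yields $g_m'W z_i = \Delta_{0i}^z - \Delta_{mi}^z =: d_{mi}^z$ and $Q_m := g_m'W g_m = E[(d_{mi}^z)^2]$, so the falsifiability condition $E[(\Delta_{0i}^z-\Delta_{mi}^z)^2]=0$ is equivalent to $g_m=0$.

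The ``if'' direction is then immediate: when $Q_1=Q_2=0$, we have $g_1=g_2=0$, each $\eta_{mi}$ vanishes almost surely, and hence $\sigma^2_\text{RV}=0$.

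For the ``only if'' direction I would substitute $p_i-\Delta_{mi} = e_{mi} + z_i'Wg_m$ (since the AR projection coefficient in \eqref{ref:eq_AR} satisfies $\pi_m = Wg_m$) to decompose
\begin{equation*}
\eta_{1i}-\eta_{2i} \;=\; \underbrace{2(d_{1i}^z e_{1i} - d_{2i}^z e_{2i})}_{A_i} \;+\; \underbrace{(d_{1i}^z)^2 - (d_{2i}^z)^2 - (Q_1-Q_2)}_{B_i},
\end{equation*}
where $B_i$ is a function of $z_i$ alone. Under Assumption \ref{as:homsked}, the algebra $E[(d_{mi}^z)^2 e_{mi}^2] = g_m'W\,E[z_iz_i'e_{mi}^2]\,Wg_m = \sigma_m^2 Q_m$ and analogously $E[d_{1i}^z d_{2i}^z e_{1i}e_{2i}] = \sigma_{12}Q_{12}$ (with $Q_{12}=g_1'Wg_2$) yields $E[A_i^2] = 4[\sigma_1^2 Q_1 - 2\sigma_{12}Q_{12} + \sigma_2^2 Q_2]$. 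Combining Cauchy--Schwarz ($Q_{12}^2\le Q_1 Q_2$) with the strict inequality $\sigma_{12}^2 < \sigma_1^2\sigma_2^2$ from Assumption \ref{as:homsked} shows this quadratic form is strictly positive whenever at least one of $Q_1,Q_2$ is positive. Because $B_i$ is measurable with respect to $z_i$, conditioning gives $\sigma^2_\text{RV} = \text{Var}(\eta_{1i}-\eta_{2i}) \ge E[\text{Var}(A_i\mid z_i)]$, and so the contrapositive reduces to showing that this conditional-variance lower bound is itself strictly positive whenever $\max(Q_1,Q_2)>0$, which in turn forces $d_{1i}^z=d_{2i}^z=0$ almost surely and hence $g_1=g_2=0$ by the positive definiteness of $E[z_iz_i']$ in Assumption \ref{as:regcond}.

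The main obstacle will be the last step: Assumption \ref{as:homsked} controls only the weighted second moments $E[e_{\ell i}e_{ki}z_iz_i']$ and not the pointwise conditional variances of $e_{mi}$ given $z_i$. Bridging this gap requires multiplying the a.s.\ identity $\text{Var}(A_i\mid z_i)=0$ (implied by $\sigma^2_\text{RV}=0$) by $z_iz_i'$ and integrating, and then contracting with vectors built from $g_1,g_2$ so that the resulting weighted quadratic form in $(d_{1i}^z,d_{2i}^z)$ inherits positive definiteness from $\bigl(\begin{smallmatrix}\sigma_1^2 & \sigma_{12}\\ \sigma_{12} & \sigma_2^2\end{smallmatrix}\bigr)$; the strict inequality in Assumption \ref{as:homsked} is precisely what prevents a degenerate linear combination of $e_{1i}$ and $e_{2i}$ from cancelling out the contribution of nonzero $d_{mi}^z$.
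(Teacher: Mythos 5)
Your overall strategy is the same as the paper's: express $\sigma^2_\text{RV}$ as the variance of a linear combination of per-observation influence contributions, substitute $p_i-\Delta_{mi}=e_{mi}+z_i'Wg_m$, and let Assumption \ref{as:homsked} (in particular $\sigma_{12}^2<\sigma_1^2\sigma_2^2$ and $\sigma_m^2>0$) deliver strict positivity. Your ``if'' direction is complete, and your computation $E[A_i^2]=4\big(\sigma_1^2Q_1+\sigma_2^2Q_2-2\sigma_{12}Q_{12}\big)>0$ whenever $\max(Q_1,Q_2)>0$ is exactly the quadratic form the paper exploits: the paper shows the full matrix $V^{\text{RV}}=E[(\psi_{1i}',\psi_{2i}')'(\psi_{1i}',\psi_{2i}')]$ is positive definite and notes that $\sigma^2_\text{RV}$ is the induced quadratic form evaluated at $(W^{1/2}g_1,-W^{1/2}g_2)$; your $A_i$ is the $e$-part of the contraction of $(\psi_{1i},\psi_{2i})$ with that same vector.

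The gap is in passing from $E[A_i^2]>0$ to $\operatorname{Var}(A_i+B_i)>0$. The bound $\operatorname{Var}(A_i+B_i)\ge E[\operatorname{Var}(A_i\mid z_i)]$ is valid, but $E[\operatorname{Var}(A_i\mid z_i)]=E[A_i^2]-E\big[(E[A_i\mid z_i])^2\big]$, and Assumption \ref{as:homsked} restricts only the unconditional weighted moments $E[e_{\ell i}e_{ki}z_iz_i']$; it says nothing about $E[e_{mi}\mid z_i]$, so the subtracted term is uncontrolled and your lower bound can in principle be zero even though $E[A_i^2]>0$. Your proposed bridge does not repair this: multiplying $\operatorname{Var}(A_i\mid z_i)=0$ by $z_iz_i'$ and integrating produces weighted moments of the conditionally demeaned errors $e_{mi}-E[e_{mi}\mid z_i]$, which Assumption \ref{as:homsked} again does not pin down. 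The paper closes this step without conditioning: if $\sigma^2_\text{RV}=0$, the mean-zero variable $A_i+B_i$ is zero almost surely, so $A_i=-B_i$ a.s., i.e.\ the combination $d_{1i}^z e_{1i}-d_{2i}^z e_{2i}$ (equivalently $t_1'z_i e_{1i}+t_2'z_i e_{2i}$ with $t_m=\pm W^{1/2}g_m$, not both zero when $\max(Q_1,Q_2)>0$) would have to be a deterministic function of $z_i$; that almost-sure functional dependence of the errors on $z_i$ is then ruled out by combining $E[z_ie_{mi}]=0$ with $\sigma_m^2>0$ and $\sigma_{12}^2<\sigma_1^2\sigma_2^2$. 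You need this degeneracy argument (or an equivalent one) in place of the conditional-variance bound; as written, the ``only if'' direction is not established.
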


\noindent 
The proposition shows that when $\sigma^2_\text{RV}=0$, neither model is falsified by the instruments.  Thus, Assumption \ref{as:nodegen} is equivalent to assuming the falsifiable restriction in Equation \eqref{eq:testability} is violated for at least one model.

Such a characterization  permits us to better understand degeneracy. Consider two extreme cases where instruments are weak: (i) the instruments are uncorrelated with $\Delta_0$, $\Delta_1$, and $\Delta_2$ such that $z$ is irrelevant for testing of either model, and (ii) models 0, 1, and 2 imply similar markups such that $\Delta_1$ and $\Delta_2$ overlap with $\Delta_0$.  Much of the econometrics literature focuses on (ii) as it considers degeneracy in the maximum likelihood framework  of \cite{v89}.  As RV generalizes the \cite{v89} test to a GMM framework, degeneracy is a broader problem that encompasses instrument strength. We illustrate these ideas in the following two examples that correspond respectively to cases (i) and (ii).\looseness=-1

\vspace{0.5em}
\noindent \textit{Example 2:}  Consider an industry where firms compete across many local markets, but charge uniform prices across all markets. Suppose a researcher wants to distinguish a model of uniform Bertrand pricing ($m=1$) and a model of uniform monopoly pricing ($m=2$). Let $m=1$ be the true model and assume demand and cost are correctly specified so that $\Delta_0 = \Delta_1$. The researcher forms instruments from local variation in rival cost shifters. If the number of markets is large, the contribution of any one market to the firm-wide pricing decision is negligible.  Thus, the local variation leveraged by the instruments becomes weakly correlated with $\Delta_0$, $\Delta_1$, and $\Delta_2$, resulting in degeneracy.\looseness=-1 
\vspace{0.5em}

\noindent \textit{Example 3:}\footnote{We thank an anonymous referee for suggesting this example.}   Consider three models of simple ``rule of thumb'' pricing, where markups are a fixed fraction of cost. Suppose that the true model implies markups $\boldsymbol \Delta_0 = \boldsymbol c_0,$ and models 1 and 2 correspond to $\boldsymbol \Delta_1 = 0.5 \boldsymbol c_0$ and $\boldsymbol \Delta_2 = 2 \boldsymbol c_0$ respectively.  Given that $\boldsymbol c_0 = \textbf{w}\tau + \omega_0$, the residualized markups are $\Delta_0 = \omega_0$, $\Delta_1 = 0.5\omega_0$ and $\Delta_2 = 2\omega_0$. As the instruments are uncorrelated with $\omega_0$, they are also uncorrelated with   the residualized markups for all three models, and predicted markups are therefore zero. From the perspective of the instruments, both model 1 and model 2 overlap with the true model, and degeneracy obtains for any choice of $z$ satisfying Assumption \ref{as:momcond}. \looseness=-1

\vspace{0.5em}

As shown in the examples, degeneracy can occur in standard economic environments.  It is therefore important to understand the consequences of violating Assumption \ref{as:nodegen}.  To do so, we connect degeneracy to the formulation of the null hypothesis of RV.  From Proposition \ref{prop:nulls}, degeneracy occurs as a special case of the null of RV.  Intuitively, when degeneracy occurs, there is not enough information to falsify either model in the population.  Thus, both models have perfect fit.    Figure \ref{fig:degen_null} illustrates this point by representing both the null space and the space of degeneracy in the coordinate system of MSE of predicted markups $\big(E\big[(\Delta^z_{0i}-\Delta^z_{1i})^2\big],E\big[(\Delta^z_{0i}-\Delta^z_{2i})^2\big]\big)$. While the null hypothesis of RV is satisfied along the full 45-degree line, degeneracy only occurs at the origin.\footnote{If $\Delta_0 = \Delta_1$, the graph shrinks to the $y$-axis and degeneracy arises whenever the null of RV is satisfied. This special case is in line with \cite{hp11}, who note RV is degenerate if both models are true.}\looseness=-1

\begin{figure}[ht]
    \centering
    \caption{Degeneracy and Null Hypothesis}
    \begin{tikzpicture}[scale = 0.75]
\begin{axis}[
  axis lines=middle,
  axis line style={Stealth-Stealth,very thick,black},
  xmin=-1,xmax=5.5,ymin=-1.0,ymax=5.5,
  tick style={draw=none},
  xticklabels=\empty,
  yticklabels=\empty,
  xlabel={$E[(\Delta^z_0-\Delta^z_1)^2]$},
  ylabel={$E[(\Delta^z_0-\Delta^z_2)^2]$},
  axis lines=middle,
    axis line style={->},
    x label style={at={(axis description cs:0.8,0.15)},anchor=north},
    y label style={at={(axis description cs:0.15,.75)},rotate=90,anchor=south},
  xlabel style = {black},
  ylabel style = {black},
  grid=major,
  grid style={thin,densely dotted,white!20}]

\addplot [domain=0:5.5,samples=2, color = black] {x} node[right]{};

\tikzstyle{rej1}=[draw=none,line width=10pt,preaction={clip, postaction={pattern=horizontal lines, pattern color=blue, opacity=0.2}}]
\tikzstyle{rej2}=[draw=none,line width=1pt,preaction={clip, postaction={pattern=vertical lines, pattern color=red, opacity=0.2}}]


\node[text width=3.0cm,rotate = 0,color = black,fill = none] at (400,130) {{$\Gamma_2 = \Gamma_0$}};

\node[text width=5.9cm,rotate = 40,color = black,fill = none] at (410,370) {{ $\Gamma_1 = \Gamma_2$}};

\node[text width=3.0cm,rotate = 0,color = black,fill = none] at (550,250) {{${Q}_1 > {Q}_2$}};

\node[text width=5.9cm,rotate = 40,color = black,fill = none] at (500,540) {{$H^\text{RV}_0$: ${Q}_1 = {Q}_2$}};

\node[text width=2.0cm,rotate = 0,color = black,fill = none] at (300,30) {{$\sigma^2_\text{RV} = 0$}};

\node[text width=3.0cm,rotate = 90,color = black,fill = none] at (125,350) {{$\Gamma_1 = \Gamma_0$}};

\node[text width=3.0cm,rotate = 0,color = black,fill = none] at (300,450) {{${Q}_1 < {Q}_2$}};

\draw[-stealth,decorate,decoration={snake,amplitude=4pt,pre length=3pt,post length=4pt}] (100,100) -- ++(100,-70);
\draw[-stealth,decorate,decoration={snake,amplitude=4pt,pre length=3pt,post length=4pt}] (100,100) -- ++(100,-70);

\draw[red, fill=red] (100,100) circle (10) ;

\end{axis}

\end{tikzpicture}
    \label{fig:degen_null}
    \caption*{\footnotesize{This figure illustrates that the region of degeneracy is a subspace of the null space for RV.}} \vspace{-0.5cm}
\end{figure}

As degeneracy is a special case of the null, maintaining Assumption \ref{as:nodegen} has no consequences for size control if the RV test reliably fails to reject the null under degeneracy.  However, we show that degeneracy can cause size distortions and a substantial loss of power close to the null.  To make this point, we  recast degeneracy as a problem of weak instruments.\looseness=-1

\subsection{Weak Instruments for Testing}

Proposition \ref{prop:degen} shows that degeneracy arises when the predicted markups across models 0, 1 and 2 are indistinguishable. Given the definition of predicted markups in Equation \eqref{eq:predMarkups}, this implies that the projection coefficients from the regression of markups on the instruments: $\Gamma_0$, $\Gamma_1$, and $\Gamma_2$ are also indistinguishable.  Thus, we can rewrite Proposition \ref{prop:degen} as follows:

\begin{corollary}\label{cor:degen}
Suppose Assumptions \ref{as:momcond}--\ref{as:homsked} hold. Then $\sigma^2_\emph{RV}=0$ if and only if  $\Gamma_0-\Gamma_m  = 0$  for $m = 1$ and $m=2$.
\end{corollary}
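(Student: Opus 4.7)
The plan is to deduce the corollary directly from Proposition 3 (\texttt{prop:degen}) by translating the condition on predicted markups into one on projection coefficients. Proposition 3 has already established that $\sigma^2_\text{RV}=0$ if and only if $E[(\Delta^z_{0i}-\Delta^z_{mi})^2]=0$ for both $m=1$ and $m=2$, so the remaining work is purely algebraic: show that each of these MSE equalities is equivalent to $\Gamma_0-\Gamma_m=0$.

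For that step, I would substitute the definition of predicted markups from \eqref{eq:predMarkups}. Since $\Delta^z_m = z\Gamma_m$, linearity gives $\Delta^z_0 - \Delta^z_m = z(\Gamma_0 - \Gamma_m)$, and squaring and taking expectations yields the quadratic form
\begin{equation}
E\!\left[(\Delta^z_{0i}-\Delta^z_{mi})^2\right] = (\Gamma_0-\Gamma_m)'\,E[z_iz_i']\,(\Gamma_0-\Gamma_m).
\end{equation}
So the MSE of predicted markups vanishes precisely when $\Gamma_0-\Gamma_m$ lies in the kernel of $E[z_iz_i']$.

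The only non-trivial step is verifying that $E[z_iz_i']$ is positive definite, so the kernel is trivial. Assumption \ref{as:regcond}(ii) provides positive definiteness of the joint second-moment matrix of $(\boldsymbol z_i',\textbf{w}_i')'$ rather than of the residualized $z_i$ directly, so I would invoke the standard Schur-complement identity: the second-moment matrix of $z_i = \boldsymbol z_i - \textbf{w}_i E[\textbf{w}_i'\textbf{w}_i]^{-1}E[\textbf{w}_i'\boldsymbol z_i]$ equals the Schur complement of the $\textbf{w}\textbf{w}'$ block in the joint matrix, and Schur complements of positive definite matrices are positive definite. Thus $E[z_iz_i']$ is invertible, the quadratic form in the display above vanishes if and only if $\Gamma_0-\Gamma_m=0$, and combining with Proposition 3 gives the corollary.

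The main obstacle — really a bookkeeping concern — is the passage from raw instruments $\boldsymbol z$ to residualized instruments $z$; everything else follows from linearity. Assumption \ref{as:regcond}(ii) is phrased for the joint $(\boldsymbol z,\textbf{w})$ vector precisely to make this Schur-complement argument immediate, so no further conditions are needed.
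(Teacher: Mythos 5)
Your proof is correct and follows essentially the same route as the paper: the paper proves Proposition \ref{prop:degen} and Corollary \ref{cor:degen} jointly, and the step you supply is exactly its opening move --- the identity $E[(\Delta^z_{0i}-\Delta^z_{mi})^2]=(\Gamma_0-\Gamma_m)'E[z_iz_i'](\Gamma_0-\Gamma_m)$ (Equations \eqref{eq:line1}--\eqref{eq:line2}) combined with positive definiteness of $E[z_iz_i']$. Your explicit Schur-complement justification of that positive definiteness from Assumption \ref{as:regcond}(ii) is a small addition of rigor the paper leaves implicit, not a different argument.
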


\noindent 
Degeneracy is characterized by $\Gamma_0-\Gamma_m$ being zero for \textit{both} $m=1$ and $m=2$. Thus when models are fixed and $\Gamma_0-\Gamma_m$ is constant in the sample size, degeneracy is a problem of irrelevant instruments.\looseness=-1

To better capture the finite sample performance of the test when the instruments are nearly irrelevant, it is useful to conduct analysis allowing $\Gamma_0-\Gamma_m$ to change with the sample size.  Thus, we forgo  the classical approach to asymptotic analysis  where the models are fixed as the sample size goes to infinity.  Instead, we now adapt \cite{ss97}'s asymptotic framework of weak instruments in the following assumption:\looseness=-1 
\begin{assumption}\label{as:wkinstr}
For both $m=1$ and $m=2$, 
\begin{align}
\Gamma_0-\Gamma_m &= {q_m}/{\sqrt{n}} && \text{ for some finite vector } q_m. 
\end{align}
\end{assumption}
\noindent Here, the projection coefficients $\Gamma_0-\Gamma_m$ change with the sample size and are local to zero which enables the asymptotic analysis in the next subsection.  This approach is technically similar to the analysis of local misspecification conducted in Proposition \ref{prop:distrib_TE2}. However, it does not impose Assumption \ref{as:nodegen}. Instead, Assumption \ref{as:wkinstr} implies that $\sigma^2_\text{RV}$ is zero so that degeneracy obtains.  Thus, in the next subsection, we use weak instrument asymptotics to clarify the effect of degeneracy on inference.

\subsection{Effect of Weak Instruments on Inference}

We now use Assumption \ref{as:wkinstr} to show that RV has inferential problems under degeneracy and to provide a diagnostic for instrument strength in the spirit of \cite{sy05}. The diagnostic relies on formulating an $F$-statistic that can be constructed from the data. An appropriate choice is the scaled $F$-statistic for testing the joint null hypotheses of the AR model assessment approach for the two models. The motivation behind this statistic is Corollary \ref{cor:degen}. Note that $\Gamma_0-\Gamma_m = E[z_iz_i']^{-1}E[z_i(p_i-\Delta_{mi})] = \pi_m$, the parameter being tested in AR.  Thus, instruments are weak for testing if both $\pi_1$ and $\pi_2$ are near zero, and degeneracy occurs when the null hypotheses of the AR test for both models, $H_{0,1}^\text{AR}$ and $H_{0,2}^\text{AR}$, are satisfied.\looseness=-1

A benefit of relying on an $F$-statistic to construct a single diagnostic for the strength of the instruments is that its asymptotic null distribution is known. However, it is more informative to scale the $F$-statistic by $1-\hat \rho^2$ where $\hat \rho^2$ is the squared empirical correlation between $e_{1i}-e_{2i}$ and $e_{1i}+e_{2i}$, where $e_m$ is the error in the regression of $p-\Delta_m$ on $z$ used to estimate $\pi_m$.  Expressed formulaically, our proposed $F$-statistic is then
\begin{gather}\label{eq:F}
    F = \big(1-\hat \rho^2\big)\frac{n}{2d_z} \frac{\hat \sigma_2^2 \hat g_1' \hat W \hat g_1 + \hat \sigma_1^2 \hat g_2' \hat W \hat g_2 - 2\hat \sigma_{12} \hat g_1' \hat W \hat g_2}{\hat \sigma_1^2 \hat \sigma_2^2 - \hat \sigma_{12}^2 },
\shortintertext{where}
	\hat \rho^2 = \frac{ \big(\hat \sigma_1^2 - \hat \sigma_2^2\big)^2}{\big( \hat \sigma_1^2 + \hat \sigma_2^2 \big)^2 - 4\hat \sigma_{12}^2 },\,\,\hat \sigma_m^2 = \frac{\text{trace}\big( \hat V_{mm}^\text{AR}\hat W\inverse \big)}{d_z} , \,\,\hat \sigma_{12} = \frac{\text{trace}\big( \hat V_{12}^\text{AR}\hat W\inverse \big)}{d_z}.
	\end{gather}
While maintaining homoskedasticty as in Assumption \ref{as:homsked}, we will describe how $F$ can be used to diagnose the quality of inferences made based on the RV test. In the language of \cite{olea13}, ours is an effective $F$-statistic as it relies on heteroskedasticity-robust variance estimators.\footnote{$F$ is closely related to the likelihood ratio statistic for the test of $\pi_1=\pi_2=0$. However, the likelihood ratio statistic does not scale by $1-\hat \rho^2$ nor does it use heteroskedasticity-robust variance estimators as $F$ does.\looseness=-1} For this reason, we expect that $F$ remains useful to diagnose weak instruments outside of homoskedastic settings. For simulations that support this expectation in the standard IV case, we refer to \cite{ass19}.

In the following proposition, we characterize the joint distribution of the RV statistic and our $F$. As our goal is to learn about inference and to provide a diagnostic for size and power, we only need to consider when the RV test rejects, not the specific direction.  Thus, we derive the asymptotic distribution of the absolute value of $T^\text{RV}$ in the proposition.  This result forms the foundation for interpretation of $F$ in conjunction with the RV statistic. We use the notation $\boldsymbol e_1$ to denote the first basis vector $\boldsymbol e_1 = (1,0,\dots,0)' \in \mathbb{R}^{d_z}$.\footnote{Proposition \ref{prop:wkinstrF} introduces objects with plus and minus subscripts, as these objects are sums and differences of rotated versions of $W^{1/2}g_1$ and $W^{1/2}g_2$ and their estimators. The role of these objects is discussed after the proposition, while we defer a full definition to Appendix \ref{sect:Proofs} to keep the discussion concise.}

\begin{proposition}\label{prop:wkinstrF}
Suppose Assumptions \ref{as:momcond}--\ref{as:wkinstr} hold. Then
	\begin{align}
	(i) &&
	\begin{pmatrix}
		\abs{T^\emph{RV}} \\ F
	\end{pmatrix}
		 &\xrightarrow{d} \begin{pmatrix}
 	\abs{\Psi_-' \Psi_+}/\!\left( \norm{ \Psi_-}^2 + \norm{ \Psi_+}^2 + 2\rho  \Psi_-' \Psi_+ \right)^{1/2} \\
		\left( \norm{ \Psi_-}^2 + \norm{ \Psi_+}^2 - 2\rho  \Psi_-' \Psi_+ \right)/(2d_z)
 \end{pmatrix}
		\shortintertext{where $\hat \rho^2 \xrightarrow{p} \rho^2$ and}
		&&\begin{pmatrix}
			 \Psi_- \\  \Psi_+
		\end{pmatrix} &\sim N\!\left( \begin{pmatrix}
			\mu_- \mathbf{e}_1 \\ \mu_+ \mathbf{e}_1
		\end{pmatrix}, \begin{bmatrix} 1 & \rho \\ \rho & 1 \end{bmatrix} \otimes I_{d_z} \right)\!, \\
		(ii) && H_0^\emph{RV} &\text{ holds if and only if } \mu_-=0, \\
		(iii) && H_{0,1}^\emph{AR} \text{ and } H_{0,2}^\emph{AR} &\text{ holds if and only if } \mu_+=0, \\
		(iv) && 0 &\le \mu_- \le \mu_+.
	\end{align}
\end{proposition}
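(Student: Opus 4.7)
The plan is to derive the joint asymptotic distribution of $(|T^{\text{RV}}|, F)$ under Assumption \ref{as:wkinstr} and then reduce it to the canonical two–parameter form with means $\mu_\pm\mathbf{e}_1$. First, under Assumption \ref{as:wkinstr} we have $\pi_m = q_m/\sqrt n$, and standard OLS asymptotics combined with Assumption \ref{as:homsked} give $\sqrt n\,\hat\pi_m \xrightarrow{d} q_m+\xi_m$ jointly in $m$, with $(\xi_1,\xi_2)$ Gaussian, $\xi_m\sim N(0,\sigma_m^2 W)$, and $\operatorname{Cov}(\xi_1,\xi_2)=\sigma_{12}W$. Setting $\tilde Y_m := \hat W^{-1/2}\sqrt n\,\hat\pi_m \xrightarrow{d} \nu_m+Z_m$ with $\nu_m := W^{-1/2}q_m$, $Z_m\sim N(0,\sigma_m^2 I_{d_z})$ and $\operatorname{Cov}(Z_1,Z_2)=\sigma_{12}I_{d_z}$, one obtains $n\hat Q_m = \|\tilde Y_m\|^2$, so $n(\hat Q_1-\hat Q_2) = \tilde Y_+'\tilde Y_-$ with $\tilde Y_\pm := \tilde Y_1\pm\tilde Y_2$. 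A separate argument (using $\hat g_m = O_p(1/\sqrt n)$ to annihilate the quadratic correction in $\hat\psi_{mi}$) gives $\hat V_{\ell k}^{\text{RV}} \xrightarrow{p} \sigma_{\ell k}I_{d_z}$, so $n\hat\sigma_{\text{RV}}^2 \xrightarrow{d} 4\bigl(\sigma_1^2\|\tilde Y_1\|^2+\sigma_2^2\|\tilde Y_2\|^2-2\sigma_{12}\tilde Y_1'\tilde Y_2\bigr)$.

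Next, define $\tilde\Psi_\pm := \tilde Y_\pm/\sigma_\pm$; the pair $(\tilde\Psi_-, \tilde\Psi_+)$ is jointly Gaussian with unit marginal covariances and cross–covariance $\rho I_{d_z}$, where $\rho = (\sigma_1^2-\sigma_2^2)/(\sigma_+\sigma_-)$ is the limit of $\hat\rho$. Direct algebra using the identities
\begin{align*}
\sigma_1^2\|\tilde Y_1\|^2+\sigma_2^2\|\tilde Y_2\|^2-2\sigma_{12}\tilde Y_1'\tilde Y_2 &= \tfrac{\sigma_+^2\sigma_-^2}{4}\bigl(\|\tilde\Psi_+\|^2+\|\tilde\Psi_-\|^2+2\rho\,\tilde\Psi_+'\tilde\Psi_-\bigr),\\
\sigma_2^2\|\tilde Y_1\|^2+\sigma_1^2\|\tilde Y_2\|^2-2\sigma_{12}\tilde Y_1'\tilde Y_2 &= \tfrac{\sigma_+^2\sigma_-^2}{4}\bigl(\|\tilde\Psi_+\|^2+\|\tilde\Psi_-\|^2-2\rho\,\tilde\Psi_+'\tilde\Psi_-\bigr),
\end{align*}
combined with $\sigma_+^2\sigma_-^2(1-\rho^2) = 4(\sigma_1^2\sigma_2^2-\sigma_{12}^2)$, delivers via continuous mapping the functional forms of part~(i) but with $\tilde\Psi_\pm$ in place of $\Psi_\pm$.

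The crux is the reduction to the canonical $\mu_\pm\mathbf{e}_1$ form. Apply the identity $\|\tilde\Psi_+\|^2+\|\tilde\Psi_-\|^2\pm 2\rho\,\tilde\Psi_+'\tilde\Psi_- = \tfrac{1\pm\rho}{2}\|\tilde\Psi_++\tilde\Psi_-\|^2 + \tfrac{1\mp\rho}{2}\|\tilde\Psi_+-\tilde\Psi_-\|^2$ and set $\Phi_\pm := (\tilde\Psi_+\pm\tilde\Psi_-)/\sqrt{2(1\pm\rho)}$; the pair $(\Phi_+, \Phi_-)$ consists of \emph{independent} Gaussian vectors with unit covariance. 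Both $|T^{\text{RV}}|^2$ and $F$ then reduce to explicit functions of $(\|\Phi_+\|^2, \|\Phi_-\|^2)$ alone — independent non‐central $\chi^2_{d_z}$ variates with non‐centralities $\|E[\tilde\Psi_+]\pm E[\tilde\Psi_-]\|^2/[2(1\pm\rho)]$. Performing the identical decomposition on $\Psi_\pm\sim N(\mu_\pm\mathbf{e}_1,\,\cdot)$ yields independent $\chi^2_{d_z}$'s with non‐centralities $(\mu_+\pm\mu_-)^2/[2(1\pm\rho)]$. Matching non‐centralities — relabeling models 1 and 2 if necessary, which leaves both $|T^{\text{RV}}|$ and $F$ invariant because the swap flips the sign of $\rho$ and exchanges $\|\Phi_+\|^2\leftrightarrow\|\Phi_-\|^2$ — uniquely determines $\mu_\pm\ge 0$ satisfying $\mu_+\ge\mu_-$, proving parts (i) and (iv).

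Parts (ii) and (iii) follow from the identity $4\mu_+\mu_- = (\mu_++\mu_-)^2-(\mu_+-\mu_-)^2 = \|E[\tilde\Psi_+]+E[\tilde\Psi_-]\|^2-\|E[\tilde\Psi_+]-E[\tilde\Psi_-]\|^2 = 4(\|\nu_1\|^2-\|\nu_2\|^2)/(\sigma_+\sigma_-)$ (the final equality by direct expansion after the relabeling ensures the RHS is non‐negative). Thus $\mu_-=0$ iff $\|\nu_1\|^2=\|\nu_2\|^2$, which via $Q_m=\pi_m'W^{-1}\pi_m=\|\nu_m\|^2/n$ under weak instruments is exactly $H_0^{\text{RV}}$; and $\mu_+=0$ forces $\mu_-=0$ by (iv), whence both non‐centralities vanish, $\nu_1=\nu_2=0$, i.e., $q_1=q_2=0$, which is both $H_{0,m}^{\text{AR}}$. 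The main obstacle is the $\Phi_\pm$‐decomposition step: although $(\tilde\Psi_-,\tilde\Psi_+)$ a priori carries three scalar mean invariants, the specific structure of $|T^{\text{RV}}|$ and $F$ registers them only through the two non‐centralities of $(\|\Phi_+\|^2,\|\Phi_-\|^2)$, and this dimensional collapse is exactly what enables the compact $\mu_\pm\mathbf{e}_1$ parameterization.
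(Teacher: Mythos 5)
Your proof is correct, and its overall skeleton matches the paper's: you normalize $\sqrt{n}\hat W^{1/2}\hat g_m$ (equivalently $\hat W^{-1/2}\sqrt n\,\hat\pi_m$), use weak-instrument asymptotics to reduce $\hat V^{\text{RV}}_{\ell k}$ to $\sigma_{\ell k}I_{d_z}$, and your two algebraic identities expressing the variance forms in terms of $\|\tilde\Psi_+\|^2+\|\tilde\Psi_-\|^2\pm 2\rho\,\tilde\Psi_+'\tilde\Psi_-$ are exactly the computations in the paper's Step (3); your $\mu_\pm$ also coincide with the paper's $\norm{\mu_1}+\norm{\mu_2}$ and $\abs{\,\norm{\mu_1}-\norm{\mu_2}\,}$. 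Where you genuinely depart is the canonicalization of the means. The paper constructs explicit block-orthogonal rotations ${\cal Q}_m$ (one per model, chosen so ${\cal Q}_m\mu_m=\norm{\mu_m}\mathbf{e}_1$) and verifies by direct matrix computation that the rotated statistic literally converges to the stated functional of $(\Psi_-,\Psi_+)$ with means on $\mathbf{e}_1$ and covariance ${\tiny\begin{bmatrix}1&\rho\\ \rho&1\end{bmatrix}}\otimes I_{d_z}$. You instead pass to the independent components $\Phi_\pm=(\tilde\Psi_+\pm\tilde\Psi_-)/\sqrt{2(1\pm\rho)}$, observe that both $\abs{T^{\text{RV}}}$ and $F$ are deterministic functions of $(\norm{\Phi_+}^2,\norm{\Phi_-}^2)$ alone (two independent non-central $\chi^2_{d_z}$'s, whose laws depend on their mean vectors only through the norms), and match non-centralities with the canonical parameterization, using the model-relabeling symmetry to enforce $0\le\mu_-\le\mu_+$. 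Your route buys a cleaner, coordinate-free justification that avoids constructing and verifying the rotation matrices, and it makes transparent exactly why the limit law collapses to the two scalar parameters $(\mu_-,\mu_+)$; the paper's constructive route buys an explicit representation of the limiting random vector itself (not just the distribution of the two statistics), which it reuses when simulating critical values and reasoning about the direction of rejection. Both arguments lean equally on homoskedasticity (to get covariance blocks proportional to $I_{d_z}$) and on $\sigma_{12}^2<\sigma_1^2\sigma_2^2$ (so that $\abs{\rho}<1$ and the normalizations $\sigma_\pm$, respectively $\tau_\pm$, are strictly positive). I find no gap: the relabeling step correctly flips the sign of $\rho$ and exchanges $\Phi_+$ with $\Phi_-$ while leaving $\abs{T^{\text{RV}}}$ and $F$ invariant, and your verifications of (ii)--(iv) via $4\mu_+\mu_-=4(\norm{\nu_1}^2-\norm{\nu_2}^2)/(\sigma_+\sigma_-)$ and $nQ_m=\norm{\nu_m}^2$ are equivalent to the paper's Step (2).
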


The proposition shows that the asymptotic distribution of $T^\text{RV}$ and $F$ in the presence of weak instruments depends on $\rho$ and two non-negative nuisance parameters, $\mu_-$ and $\mu_+$, whose magnitudes are tied to whether $H_0^\text{RV}$ holds, and to whether $H_{0,1}^\text{AR}$ and $H_{0,2}^\text{AR}$ hold, respectively. Specifically, the null of RV corresponds to $\mu_-=0$. Furthermore, the proposition sheds light on the effects that degeneracy has on inference for RV. Unlike the standard asymptotic result, the RV test statistic converges to a non-normal distribution in the presence of weak instruments. For compact notation, let this non-normal limit distribution be described by the variable $T^\text{RV}_\infty = \Psi_-' \Psi_+/\!\left( \norm{ \Psi_-}^2 + \norm{ \Psi_+}^2 + 2\rho  \Psi_-' \Psi_+ \right)^{1/2}$.   Under the null, the numerator of $T^\text{RV}_\infty$ is the product of $\Psi_-$, a normal random variable centered at 0, and $\Psi_+$, a normal random variable centered at $\mu_+ \ge 0$.  When $\rho \neq 0$, the distribution of this product is not centered at zero and is skewed, both of which may contribute to size distortions. 

Alternatives to the RV null are characterized by $\mu_- \in (0,\mu_+]$. For a given value of $\rho$, maximal power is attained when $\mu_-=\mu_+$. This power is strictly below one for any finite $\mu_+$, so that the test is not consistent under weak instruments. Actual power will often be less than the envelope, as $\mu_{-} = \mu_{+}$ generally only occurs with no misspecification. Furthermore, the lack of symmetry in the distribution of the RV test statistic when $\rho \neq 0$ leads to different levels of power against each model.
 \looseness=-1

Ideally, one could estimate the parameters and then use the distribution of the RV statistic under weak instruments asymptotics to quantify the distortions to size and the best-case power that can be attained. However, this is not viable since  $\mu_{-}$, $\mu_{+}$, and the sign of $\rho$ are not consistently estimable.  Instead, we adapt the approach of \cite{sy05} and develop a diagnostic to determine whether $\mu_+$ is sufficiently large to ensure control of the highest possible size distortions for the given value of $\rho^2$.  Given the threat of low power, we  develop a similar diagnostic to ensure a lower bound on the best-case power, which we take to be the maximal power across $\mu_-$ and the sign of $\rho$.\looseness=-1  

One might wonder if robust methods from the IV literature would be preferable when instruments are weak.  For example, AR is commonly described as being robust to weak instruments in the context of IV estimation.  Note that while AR maintains the correct size under weak instruments, this is of limited usefulness for inference with misspecification since neither null is satisfied.  Furthermore, tests proposed in \cite{k02} and \cite{m03} do not immediately apply to our setting. The econometrics literature has also developed modifications of the \cite{v89} test statistic that seek to control size under degeneracy \citep{ s15, sw17}.
While these may be adaptable to our setting, the benefits of size control may come at the cost of lower power.  As we show in the next section, power as opposed to size is the main concern with a moderate number of instruments.\looseness=-1

\subsection{Diagnosing Weak Instruments}\label{sec:diagnosedegen}

To implement our diagnostic for weak instruments, we need to define a target for reliable inference.  Motivated by the practical considerations of size and power, we provide two such targets: a worst-case size $r^s$ exceeding the nominal level of the RV test ($\alpha = 0.05$) and a best-case power $r^p$.  Then, we construct separate critical values for each of these targets.  A researcher can choose to diagnose whether instruments are weak based on size, power, or ideally both by comparing $F$ to the appropriate critical value.   We construct the critical values based on size and power in turn.  

\vspace{0.5em}
\noindent
\textbf{Diagnostic Based On Worst-Case Size:} We first consider the case where the researcher wants to understand whether the RV test has asymptotic size no larger than $r^s$ where $r^s \in (\alpha,1)$. For each value of $\rho^2$, we then follow \cite{sy05} in denoting the values of $\mu_+$ that lead to a size above $r^s$ as corresponding to \textit{weak instruments for size}:\footnote{Because we measure instrument strength by $F$, we define $\cal S$ as a set of non-centralities, $(1-\rho^2)\inverse \mu_+^2$. This is equivalent to defining $\cal S$ in terms of $\mu_+$ only. }
\begin{align}
	{\cal S}(\rho^2,d_z,r^s) = \left\{ \tfrac{\mu_+^2}{1-\rho^2} : \mu_+^2\geq0,\  \Pr\left( \abs*{T_\infty^\text{RV}} > 1.96 \mid \rho^2, \mu_-=0, \mu_+  \right) > r^s \right\}.
\end{align}
Depending on $\rho^2$, $d_z$, and $r^s$, this set may be empty, which occurs for instance with two to nine instruments for any value of $\rho^2$ when $r^s\geq 0.075$. In this case, weak instruments for size are not a concern.

When weak instruments are a possible concern, the role of $F$, viewed through the lens of size control, is to determine whether it is exceedingly unlikely that the true value of the non-centrality $(1-\rho^2)\inverse \mu_+^2$ belongs to $ {\cal S}(\rho^2,d_z,r^s)$. Using the distributional approximation to $F$ in Proposition \ref{prop:wkinstrF} and the standard burden of a five percent probability to denote an exceedingly unlikely event, we say that the instruments are strong for size whenever $F$ exceeds\looseness=-1
\begin{align}\label{eq:cvs}
    cv^s(\rho^2,d_z,r^s) = \tfrac{1-\rho^2}{2d_z}   \chi^2_{2d_z,.95}\left(  \sup {\cal S}(\rho^2,d_z,r^s) \right)
\end{align}
where $\chi^2_{df,.95}(nc)$ denotes the upper $95$th percentile of a non-central $\chi^2$-distribution with degrees of freedom $df$ and non-centrality parameter $nc$. If ${\cal S}$ is empty, we set $cv^s$ to zero.\looseness=-1

In practice, one compares $F$ to the  critical value $cv^s(\hat\rho^2,d_z,r^s)$, which relies on the estimated $\rho^2$. The event $F < cv^s(\hat\rho^2,d_z,r^s)$ expresses that the instruments may be so weak that size is distorted above $r^s$ with high probability. In this case, the researcher should be concerned that rejections of the null may be spurious.  Our diagnostic for size is thus informative about the RV test when the null is rejected.\looseness=-1

\vspace{0.5em}
\noindent 
\textbf{Diagnostic Based on Best-Case Power:} For interpretation of the RV test, particularly when the test fails to reject, it is important to understand the best-case power that the test can attain. By considering rejection probabilities when $\mu_-= \mu_+$ and linking these probabilities to values of $F$, it is also possible to let the data inform us about the power potential of the test. To do so we consider an ex ante desired target of best-case power $r^p$. Because the potential power of the RV test depends on the not estimable sign of $\rho$, we define \textit{weak instruments for power} as the values of $\mu_+$ that lead to best-case power less than $r^p$ for both positive and negative $\rho$:\footnote{When $\mu_-= \mu_+$, the non-centrality of $F$ becomes $(1+\rho)\inverse 2\mu_+ ^2$, which we use to define $\cal P$.  }\looseness=-1
\begin{align}\label{eq:cvp}
	{\cal P}(\rho^2,d_z,r^p) = \left\{ 
 \tfrac{2\mu_+ ^2}{1+\varrho} : \mu_+ ^2\geq0,\ \varrho = \pm\rho,\ \Pr\left( \abs*{T_\infty^\text{RV}} > 1.96 \mid \varrho, \mu_- = \mu_+, \mu_+ \right) < r^p \right\}.
\end{align}

We determine the strength of the instruments by considering the power envelope for the RV test for a given value of $\rho^2$.  This is to ensure that the power against both models exceeds $r^p$ for any value of $\rho$. Again using the distributional approximation to $F$ in Proposition \ref{prop:wkinstrF}, we say that the instruments are strong for power if $F$ is larger than
\begin{align}
    cv^p(\rho^2,d_z,r^p) = \tfrac{1-\rho^2}{2d_z} \chi^2_{2d_z,.95}\left(  \sup {\cal P}(\rho^2,d_z,r^p) \right).
\end{align}

The event $F < cv^p(\hat\rho^2,d_z,r^p)$ expresses that the power against either model must be below $r^p$ with high probability. Therefore, this event informs a researcher that the RV test may fail to reject, not because the two models are very similar, but because the instruments are too weak to tell them apart. In this way, our diagnostic for power is informative about the RV test when the null is not rejected.\looseness=-1

\vspace{0.5em}	
\noindent\textbf{Computing Critical Values:} To compute $cv^s$ for a given $(\rho^2,d_z,r^s)$, we numerically determine ${\cal S}(\rho^2,d_z,r^s)$ by simulating rejection probabilities across a grid of 800 equally spaced values for $\mu_+$ with range from zero to 80.  Once we obtain ${\cal S}(\rho^2,d_z,r^s)$, $cv^s$ is computed according to Equation \eqref{eq:cvs}. To compute  $cv^p$ for a given $(\rho^2,d_z,r^s)$, we use the same procedure as for size, but simulate rejection probabilities with $\mu_{-} = \mu_{+}$ instead of $\mu_{-} = 0$.\looseness=-1

To aid applied researchers, we provide as supplementary material a lookup table of critical values computed for 100 values of $\rho^2$ from 0 to 0.99 and for values of $d_z$ from 1 to 30.  Additionally, the \texttt{pyRVtest} package computes $\hat \rho^2$ and displays the appropriate critical values from this lookup table in any given application.  To further shed light on our diagnostic, we report in Table \ref{tab:Tab_StockYogo_Combined_new} critical values $cv^s$ and $cv^p$ for certain $\rho^2$ and $d_z$.  \looseness=-1

\begin{table}[htb]
\footnotesize
\caption{Critical Values to Diagnose Weak Instruments for Testing}
\label{tab:Tab_StockYogo_Combined_new}
\centering
\begin{threeparttable}

\begin{widetable}{.98\columnwidth}{lS r *{3}{S} r *{3}{S}}
\toprule
&&& \multicolumn{3}{c}{\textbf{Panel A:} Critical Values, ${cv^s}$} &&  \multicolumn{3}{c}{\textbf{Panel B:} Critical Values, ${cv^p}$}\\ 
\cmidrule(lr){4-10}
&&& \multicolumn{3}{c}{Worst-Case Size, $r^s$} &&  \multicolumn{3}{c}{Best-Case Power, $r^p$}\\ 
\cmidrule(lr){4-6} \cmidrule(lr){8-10} 
$\rho^2$ & \mc{$d_z$} && \mc{0.075} & \mc{0.10} &  \mc{0.125} & & \mc{0.95} & \mc{0.75} & \mc{0.50} \\
\midrule
\multirow{8}{*}{$0.25$} 
& 1 && 0 & 0 & 0 && 20.5 & 15.5 & 12.4 \\
&2&&0&0&0&&10.7&8.1&6.6\\
&3&&0&0&0&&7.4&5.7&4.6\\
&4&&0&0&0&&5.7&4.4&3.7\\
&5&&0&0&0&&4.8&3.7&3.1\\
&10&&0&0&0&&2.8&2.2&1.9\\
&20&&4.3&2.0&1.1&&1.7&1.4&1.3\\
&30&&10.3&5.2&3.5&&1.4&1.2&1.1\\
\midrule
\multirow{8}{*}{$0.75$} 
&1&&26.7&12.6&0&&6.5&4.9&3.9\\
&2&&0&0&0&&3.3&2.5&2.0\\
&3&&0&0&0&&2.2&1.7&1.4\\
&4&&0&0&0&&1.7&1.3&1.1\\
&5&&0&0&0&&1.4&1.1&0.9\\
&10&&0.8&0&0&&0.8&0.6&0.5\\
&20&&17.1&8.3&5.4&&0.4&0.4&0\\
&30&&34.4&17.0&11.3&&0.3&0&0\\
\bottomrule
\end{widetable}
\begin{tablenotes}[flushleft]
    \setlength\labelsep{0pt}
    \footnotesize
    \item For a given $d_z$ and $\rho^2$, each row of Panel A reports critical values $cv^s$ for a target worst-case size below $r^s\in \{ 0.075,\ 0.10,\ 0.125\}$. Each row of Panel B reports critical values $cv^p$ for a target best-case power above $r^p \in \{ 0.95,\ 0.75,\ 0.50\}$.  We diagnose the instruments as weak for size if $F \le {cv}^s$, and weak for power if $F \le {cv}^p$.\looseness=-1
\end{tablenotes}
\end{threeparttable}
\end{table}

\vspace{0.5em}
\noindent
\textbf{Discussion of the Diagnostic:} To diagnose whether instruments are weak for size or power, a researcher would compute $F$ and compare it to the relevant critical value for an estimated $\hat\rho^2$.  Table \ref{tab:Tab_StockYogo_Combined_new} reports the critical values used to diagnose whether instruments are weak in terms of size (Panel A) or power (Panel B) for two illustrative values of $\rho^2$.  These critical values explicitly depend on both the number of instruments $d_z$ and a target for reliable inference.  For size, we consider targets of worst-case size below $r^s \in \{ 0.075,\ 0.10,\ 0.125\}$.  For power, we consider targets of best-case power above $r^p \in \{ 0.95,\ 0.75,\ 0.50\}$.  

Suppose a researcher wanting to diagnose whether instruments are weak based on size has twenty instruments and measures $F = 10$ and $\hat\rho^2 = 0.75$.  Given a target worst-case size of 0.10, the critical value in Panel A is 8.3.  Since $F$ exceeds $cv^s$, the researcher concludes that instruments are strong in the sense that size is no larger than $0.10$ with at least 95 percent confidence.  Instead, for a target of 0.075, the critical value is 17.1.  In this case, $F < cv^s$ and the researcher cannot conclude that the instruments are strong for size.  Thus, the interpretation of our diagnostic for weak instruments based on size is analogous to the interpretation that one draws for standard IV when using an $F$-statistic and \cite{sy05} critical values.\looseness=-1

If the researcher also wants to diagnose whether instruments are weak based on power, she can compare $F$ to the relevant critical value in Panel B.  For two instruments, $\hat\rho^2 = 0.25$ and a target best-case power of  0.75, the critical value is again 8.1.  Since $F = 10$, the researcher can conclude that instruments are strong in the sense that the best-case power the test could obtain exceeds 0.75 with at least 95 percent confidence.  Instead, for a target best-case power of 0.95, the critical value is 10.7.  In this case, $F < cv^p$ and the researcher cannot conclude that the instruments are strong for power.

The columns of Panels A and B in Table \ref{tab:Tab_StockYogo_Combined_new} are sorted in terms of increasing maximal type I (Panel A) and type II errors (Panel B).  Unsurprisingly, for a given value of $\rho^2$, the critical values decrease with the target error as larger $F$-statistics are required to conclude for smaller type I and II errors.  Inspection of the columns is useful to understand when size distortions and low power are relevant threats to inference. The RV test statistic has a skewed distribution whose mean is not zero. The effect of skewness on size is largest with one instrument, so in Panel A, the critical value may be large when $d_z=1$ depending on the value of $\rho^2$. As the effect of skewness on size decreases in $d_z$, we find that there are no size distortions exceeding 0.025 with 2-9 instruments for all values of $\rho^2 $. Meanwhile, the effect of the mean on size is increasing in $d_z$, and becomes relevant when $d_z$ exceeds 9. Thus the critical values are monotonically increasing from 10 to 30 instruments. Alternatively, for power, the critical values are monotonically decreasing in the number of instruments.  Taken together, the critical values indicate that (except for the case of one instrument) a lack of power is the main concern when testing with few instruments, while  size distortions are the main concern when testing with many instruments. These considerations interact with the measured value of $\rho^2$: for fixed $d_z$, $cv^s$ is increasing in $\rho^2$, while $cv^p$ is decreasing.\footnote{Not all patterns described in this paragraph are immediately available from Table \ref{tab:Tab_StockYogo_Combined_new}, but are learned from the lookup table in the supplement.}\looseness=-1

To illustrate the usefulness of our $F$-statistic, consider an example where the researcher has two instruments and computes an RV test statistic $T^\text{RV} = 0.54$ and $\hat \rho^2 = 0.25$.  For a target size of 0.075, the critical value is zero regardless of the value of $\rho^2$ and there are no size distortions above 0.025.  Thus,  low power is the only salient concern.  If the $F$-statistic is below 6.6 which is the critical value for having best-case power above 0.5, then the researcher can conclude rejection was very unlikely in this setting even if the null is violated. In other words, when power is the salient concern, our $F$-statistic is necessary to interpret no rejection.  Likewise, when size is a concern, our $F$-statistic is necessary to interpret rejections of the null.  

Up to this point, we have considered the case where the researcher has one set of instruments they will use for testing two candidate models.  Indeed, if the researcher chooses their instruments for testing once-and-for-all based on intuition, the procedure for testing conduct is straightforward: run the RV test and then inspect whether the instruments pass the diagnostic for strength.  In practice, several sets of instruments may be available to the researcher.  Furthermore,  in many settings including our application, a researcher wants to test more than two models. In the next section, we discuss how an applied researcher can perform RV testing on multiple models with multiple sets of instruments while using the $F$-statistic to guide inference.

\section{Testing Conduct with Multiple Sets of Instruments}\label{sec:accumulate}

\cite{bh14} show that multiple sources of exogenous variation in marginal revenue can be used to construct instruments for testing conduct. As mentioned in Section \ref{sec:nevo}, these typically include demand rotators, own and rival product characteristics, rival cost shifters, and market demographics. By connecting \cite{bh14}'s falsifiable restriction to models' pass-through, \cite{mqsw22} provide a framework that may help a researcher to rule out irrelevant instruments ex-ante. Still, a researcher wanting to exploit variation from all plausibly relevant available sources in her application faces two major decisions.  First, should she run one RV test with a single pooled set of instruments or should she keep the sources of variation separate and run multiple RV tests?   Second, which functional form should the researcher use to construct instruments from her chosen sources?  The latter point is addressed in \cite{bcs20}, who consider efficiency in the spirit of \cite{c87}. 
In this section, we focus instead on the first consideration.  


Based on the results in Sections \ref{sec:hypo_formulation} and \ref{sec:instruments}, there are two main reasons a researcher may want to keep the sources of variation separate.   First, drawing inference on conduct by pooling sources of variation can conceal the severity of misspecification. As seen in Section \ref{sec:hypo_formulation}, the RV test concludes for the model with the lower MSE of predicted markups. With misspecification, strong instruments constructed from economically different sources of variation (e.g., demand shifters versus rival cost shifters) could  conclude for different models.     By keeping the sources of variation separate and running multiple RV tests, a researcher can observe such conflicting evidence.  Instead, a single RV test run with pooled instruments could conclude for one model, obscuring the severity of misspecification.   Below, Example 4 provides an economic setting where misspecifying models generates conflicting evidence.\looseness=-1

Second,  pooling variation may have adverse consequences for the strength of the resulting instrument set, which occurs in our empirical application (see Appendix \ref{sect:Robustness}).  For example,  if some sources of variation on their own yield weak instruments for power, combining these with strong instruments dilutes the power of the strong instruments, manifesting itself in a lower $F$-statistic.  Furthermore, Panel A of Table \ref{tab:Tab_StockYogo_Combined_new} shows that the combined set of instruments faces a larger critical value for size.  Thus, if pooling across sources creates many instruments, size distortions can undermine inference on firm conduct.

\vspace{0.5em}
\noindent \textit{Example 4}: Consider the case of two firms competing in a market where demand is logit, as in the examples in \cite{mqsw22}.  Suppose that the true model of conduct is Cournot, and a researcher specifies two incorrect models: perfect competition ($m=1$) and  Bertrand pricing ($m=2$).  The researcher constructs two sets of instruments, one from variation in rival cost and the other from variation in own product characteristics.  With cost instruments, $\Delta^z_0 = \Delta^z_1 = 0$ while $\Delta^z_2\neq 0$, and the researcher concludes for perfect competition.  Instead, variation in own product characteristics moves both Cournot and Bertrand markups, but not markups under perfect competition.  Under some formulations of demand and cost, the researcher concludes for Bertrand competition.  Because both models are misspecified, the two sets of instruments generate conflicting evidence.\looseness=-1

\vspace{0.5em}
\noindent \textbf{Accumulating Evidence:} Researchers who  want to keep their sources of variation separate need to aggregate information across multiple RV tests. We suggest that a researcher can conclude for a model insofar as   there is no conflicting evidence across sets of instruments and  all the strong instruments support it. Continuing the legal analogy made in Section \ref{sec:hypo_formulation}, we have adopted a preponderance of the evidence standard by using model selection.  However, we may not want to rely on a single piece of evidence to convict, nor would we  want to rely on weak evidence.  To achieve the two aims above, we propose a conservative approach that utilizes both the RV test and the $F$-statistic.

Suppose we want to test a set of two models $M=\{1,2\}$ using $L$ sets of instruments. In a preliminary step we run separate RV tests with each instrument set $z_\ell$ and denote the model confidence set (MCS) $M^*_\ell$ as the set of models that are not rejected.\footnote{Thus, $M^*_\ell = \{1,2\}$ if the RV null is not rejected and $M^*_\ell = \{1\}$ if the RV null is rejected in favor of a superior fit of model 1.} Our goal is to generate $M^*$, an MCS which aggregates evidence from all $M^*_\ell$. Our approach,  illustrated in Figure \ref{fig:flowchart},   proceeds in two steps.  In step 1, the researcher needs to check that the evidence coming from the $L$ sets of instruments is not in conflict.  We say that evidence arising from $L$ RV tests  is not in conflict if, for every pair of ($M^*_\ell$, $M^*_{\ell'}$), one is a weak subset of the other. In step 2 we form $M^*$ based on step 1. If the evidence is in conflict, the researcher concludes $M^* = \{1,2\}$. If the evidence is not in conflict, we first set $M^*$ equal to the smallest MCS $M^*_\ell$, and then take the union with all MCS for which the instruments are strong based on the $F$-statistic.  \looseness=-1

\begin{figure}[ht] 
\centering \caption{Procedure for Accumulating Evidence}
\label{fig:flowchart}
\begin{tikzpicture}[node distance=2cm]

\node (start) [flowbox] {Step 0: Get MCS $M^*_\ell$  for each\\ instrument set $z_\ell$};
\node (in1) [flowbox, right of=start, xshift = 2cm] {Step 1: Check for conflicting evidence.\\ No conflict: $\forall \ell,\ell'$, $M^*_\ell\subseteq M^*_{\ell'} \,\, \text{or}$ \\$M^*_{\ell'}\subseteq M^*_{\ell}$ };
\node (pro1) [flowbox, right of=in1, yshift = 1.5cm, xshift = 6cm] {Step 2: $M^* = M$};
\node (pro2) [flowbox, right of=in1, yshift = -1.5cm, xshift = 6cm] {Step 2: Set $M^*$ as smallest MCS.\\ Then $\forall$ $\ell$ add $M^*_\ell$ to $M^*$ if $z_\ell$ strong for all model pairs};

\draw [arrow] (start) -- (in1);
\draw [arrow] (in1) -- node[anchor=east,rotate=10,xshift=2cm,yshift = 0.25cm] {conflicting evidence} (pro1);
\draw [arrow] (in1) -- node[anchor=south, rotate=-10.5] {no conflicting evidence} (pro2);

\end{tikzpicture}
    \vspace{-0.5cm}
\end{figure}

To illustrate the rationale behind our approach, we consider a few examples. In each, we use $L = 2$ sets of 2-9 instruments, so that there are no size distortions above 0.025 and power is the salient concern. First, we illustrate the importance of step 1. If the researcher had computed $M^*_1 = \{1\}$ and $M^*_2 = \{2\}$, then the instruments $z_1$ suggest model 2 can be rejected in favor of superior fit of model 1 while $z_2$ suggest the exact opposite. As $M^*_1 \not\subseteq M^*_2$ and $M^*_2 \not\subseteq M^*_1$ we say the evidence is in conflict. Hence, misspecification is severe and the researcher should let  $M^* = \{1,2\}$, in line with the conservative spirit of the procedure.

Suppose now that $M^*_1 = \{1\}$ and $M^*_2 = \{1,2\}$.  Because $M^*_1 \subset M^*_2$, there is no conflicting evidence found in step 1. In step 2 we initialize $M^* = \{1\}$, the smallest MCS.  By doing so, we use the information that $z_1$ reject model 2 regardless of the power potential diagnosed by the $F$-statistic.  We then only add model 2 to  $M^*$ if the $F$-statistic suggests that instruments $z_2$ are strong for power.  If $z_2$ are weak, then not rejecting the  null is likely a consequence of low power and not informative about firm conduct.

\vspace{0.5em}
\noindent  \textbf{Extension to More than Two Models:} In many settings, including our application, a researcher may want to test a set $M$ of more than two models.  To accumulate evidence across sets of instruments using the procedure in Figure \ref{fig:flowchart}, we need to define $M^*_\ell$ for each of the $L$ instrument sets. We adopt the  procedure of \cite{hln11} to construct each $M^*_\ell$.  This procedure initializes the $M^*_\ell$ to $M$, and then checks in each iteration whether the model of worst fit according to MSE of predicted markups can be excluded.  This occurs if the largest RV test statistic in magnitude across all pairs of models in $M^*_\ell$ exceeds the $(1-\alpha)$-th quantile of its asymptotic null distribution.\footnote{This quantile can be simulated by drawing from the asymptotic null distribution, see Appendix \ref{sect:MCSDetail}.}  When no model can be excluded, the procedure stops. If there are only two models, this procedure coincides with the RV test as discussed above. As shown in \cite{hln11}, ${M}^*_\ell$ controls the familywise error rate as it  contains the model(s) with the best fit with probability at least $1-\alpha$ in large samples. Moreover, every other model with strictly worse fit is excluded from $M^*_\ell$ with probability approaching one.\footnote{Under no degeneracy, $M^*$ is guaranteed to contain the true model with probability at least $1-\alpha$, as each $M^*_\ell$ has the same property and $M^*$ is the union of these model confidence sets.} \looseness=-1
    
To illustrate the construction of  ${M}^*_\ell$, suppose a researcher wants to test candidate models $m=1,2,3$.  For a given set of instruments $z_\ell$, the MCS procedure computes three RV test statistics $T^\text{RV}_{m,m'} = {\sqrt{n}(\hat{Q}_{m\phantom{\!'}}-\hat{Q}_{m'})}/{\hat\sigma_{\text{RV},mm'}}$, one for each distinct pair of models.  Suppose $T^\text{RV}_{1,2} = 5.34$, $T^\text{RV}_{1,3} = 4.35$, and $T^\text{RV}_{2,3} = 0.32$.  If $T^\text{RV}_{1,2}$, the largest test statistic in magnitude, exceeds the  critical value for the max of three RV test statistics, then model 1 is excluded from $M^*_\ell$.  In the next iteration, only models 2 and 3 remain, so the only relevant RV test statistic is $T^\text{RV}_{2,3} = 0.32$.  As the null of equal fit cannot be rejected, $M^*_\ell = \{2,3\}$.

\section{Application: Testing Vertical Conduct}
\label{sec:empiricaleg}
We revisit the empirical setting of \cite{v07}.  She investigates the vertical relationship of yogurt manufacturers and supermarkets by testing different models of vertical conduct.\footnote{\cite{v07} uses a Cox test which is a model assessment procedure with similar properties to AR, as shown in Appendix \ref{sect:EBCox}.}  This setting is ideal to illustrate our results as theory suggests a rich set of models and the data is used in many applications. 

\noindent \subsection{Data}\label{sec:data}

Our main source of data is the IRI Academic Dataset for 2010 \citep[see][for a description]{bkm08}. 
This dataset contains weekly price and quantity data for UPCs sold in a sample of stores in the United States. 
We define a market as a retail store-quarter and approximate the market size with a measure of the traffic in each store,  derived from the store-level revenue information from IRI. We drop the 5\% of stores for which this approximation results in an unrealistic outside share below 50\%. \looseness=-1 

We further restrict attention to UPCs labelled as ``yogurt'' in the IRI data and focus on the most commonly purchased sizes: 6, 16, 24 and 32 ounces. 
Similar to \cite{v07}, we define a product as a brand-fat content-flavor-size combination, where flavor is either plain or other and fat content is either light (less than 4.5\% fat content) or whole. We further standardize package sizes by measuring quantity in six ounce servings. Based on market shares, we exclude niche firms for which their total inside share in every market is below five percent.  We drop products from markets for which their inside share is below 0.1 percent. 
Our final dataset has  205,123 observations for 5,034 markets corresponding to  1,309 stores.\looseness=-1

We supplement our main dataset with county level demographics from the Census Bureau's PUMS database which we match to the DMAs in the IRI data.  We draw 1,000 households for each DMA and record standardized household income and age of the head of the household. We exclude households with income lower than \$12,000 or bigger than \$1 million. We also obtain quarterly data on regional diesel prices from the US Energy Information Administration. With these prices, we measure transportation costs as average fuel cost times distance between a store and manufacturing plant.\footnote{We thank Xinrong Zhu for generously sharing manufacturer plant locations used in \cite{z21}.}  We summarize the main variables for our analysis in  Table \ref{Tab:summarystats}.\looseness=-1

\begin{table}[!ht]
\footnotesize
\caption{Summary Statistics}
\label{Tab:summarystats}
\centering
\begin{threeparttable}

\begin{widetable}{.98\columnwidth}{lrrrrr} 
\toprule
Statistic & \multicolumn{1}{c}{Mean} & \multicolumn{1}{c}{St. Dev.} & \multicolumn{1}{c}{Median}  & \multicolumn{1}{c}{Pctl(25)} & \multicolumn{1}{c}{Pctl(75)} \\ 
\midrule
Price (\$) & 0.76 & 0.30 & 0.68 &  0.55 & 0.91 \\ 
Sales (6 oz. units) & 1,461 & 3,199 & 503  & 213 & 1,301 \\ 
Shares & 0.007 & 0.012 & 0.003 & 0.001 & 0.007 \\ 
Outside Share & 0.710 & 0.111 & 0.708 & 0.631 & 0.788 \\ 
Size (oz.) & 17.82 & 10.57 & 16  & 6 & 32 \\ 
Frac. Light & 0.93 & 0.26 & 1 &  1 & 1 \\ 
Number Flavors & 5.39 & 5.81 & 3  & 1 & 8 \\ 
Frac. Private Label & 0.09 & 0.28 & 0  & 0 & 0 \\ 
Distance to Plant (mi.) & 493 & 477 & 392  & 199 & 546 \\ 
Freight Cost (\$) & 212 & 242 & 164  & 52 & 271 \\ 
\bottomrule
\end{widetable} 

\begin{tablenotes}[flushleft]
    \setlength\labelsep{0pt}
    \footnotesize
    \item Source: IRI Academic Dataset for 2010 \citep[][]{bkm08}.
\end{tablenotes}
\end{threeparttable}
\end{table}    

\subsection{Demand: Model, Estimation, and Results}

To perform testing, we need to estimate demand and construct the markups implied by each candidate model of conduct.

\vspace{0.5em}
\noindent \textbf{Demand Model:}
Our model of demand follows \cite{v07} in adopting the framework from  \cite{blp95}.  Each consumer $i$ receives utility from product $j$ in market $t$ according to the indirect utility:\looseness=-1
\begin{align}\label{eq:utility}
    \boldsymbol{u}_{ijt} = \beta^x_i \boldsymbol{x}_j + \beta^p_i \boldsymbol{p}_{jt} + \boldsymbol{\xi}_{t} + \boldsymbol{\xi}_{s} + \boldsymbol{\xi}_{b(j)} + \boldsymbol{\xi}_{jt} + \boldsymbol{\epsilon}_{ijt} 
\end{align}
where $\boldsymbol{x}_j$ includes package size, dummy variables for low fat yogurt and for plain yogurt, and the log of the number of flavors offered in the market to capture differences in shelf space across stores.  $\boldsymbol{p}_{jt}$ is the price of product $j$ in market $t$,  and $\boldsymbol{\xi}_{t}$, $\boldsymbol{\xi}_{s}$, and $\boldsymbol{\xi}_{b(j)}$   
denote fixed effects for the quarter, store, and brand producing product $j$ respectively.  $\boldsymbol{\xi}_{jt}$ and $\boldsymbol{\epsilon}_{ijt}$ are unobservable shocks at the product-market and the individual product market level, respectively.  Finally, consumer preferences for characteristics ($\beta^x_i$) and price ($\beta^p_i$) vary with individual level income and age of the head of household:
\begin{align}
    \beta^p_i = \bar\beta^p + \tilde\beta^p \boldsymbol{D}_i, \qquad
    \beta^x_i = \bar\beta^x + \tilde\beta^x \boldsymbol{D}_i,
\end{align}    
where $\bar\beta^p$ and $\bar\beta^x$ represent the mean taste, $\boldsymbol{D}_i$ denotes demographics, while $\tilde\beta^p$ and $\tilde \beta^x$ measure how preferences change with $\boldsymbol{D}_i$.

To close the model we make additional standard assumptions.  We normalize consumer $i$'s utility from the outside option as $\boldsymbol{u}_{i0t} = \boldsymbol{\epsilon}_{i0t}$.  The shocks $\boldsymbol{\epsilon}_{ijt}$ and $\boldsymbol{\epsilon}_{i0t}$ are assumed to be distributed i.i.d. Type I extreme value. Assuming that each consumer purchases one unit of the good that gives her the highest utility from the set of available products $\mathcal{J}_{t}$, the market share of product $j$ in market $t$ takes the following form:\looseness=-1      
\begin{align}\label{eq:shares}
\boldsymbol{s}_{jt} = \int \frac{\exp({\beta^x_{i} \boldsymbol{x}_{j}  + \beta^p_i \boldsymbol{p}_{jt} +\boldsymbol{\xi}_{t} + \boldsymbol{\xi}_{s} + \boldsymbol{\xi}_{b(j)}  + \boldsymbol{\xi}_{jt}})}{1 +\sum_{l\in{\mathcal{J}_{t}}} \exp({\beta^x_{i} \boldsymbol{x}_{l}  + \beta^p_i \boldsymbol{p}_{lt} +\boldsymbol{\xi}_{t} + \boldsymbol{\xi}_{s} + \boldsymbol{\xi}_{b(l)}  +  \boldsymbol{\xi}_{lt}})}f(\beta^p_i,\beta^x_i) \ \text{d}\beta^p_i \ \text{d}\beta^x_i.
\end{align}

\vspace{0.5em}
\noindent \textbf{Identification and Estimation:}   Demand estimation and testing can either be performed \textit{sequentially}, in which demand estimation is a preliminary step, or \textit{simultaneously} by stacking the demand and supply moments.  Following \cite{v07}, we adopt a sequential approach which is simpler computationally while illustrating the empirical relevance of the findings in Sections \ref{sec:hypo_formulation},
 \ref{sec:instruments}, and \ref{sec:accumulate}.\looseness=-1  

The demand model is identified under the assumption that demand shocks $\boldsymbol{\xi}_{jt}$ are orthogonal to a vector of demand instruments.  By shifting supply, transportation costs help to identify the parameters $\bar \beta^p$, $\tilde \beta^p$, and $\tilde \beta^x$.
Following \cite{gh19}, we  use variation in mean demographics across DMAs as a source of identifying variation by interacting them with both fuel cost and product characteristics.  We estimate demand as in \cite{blp95} using \texttt{PyBLP} \citep{cg19}.\looseness=-1

\begin{table}[htb]
\footnotesize
\caption{Demand Estimates}
\label{Tab:demand}
\centering
\begin{threeparttable}
\begin{widetable}{.98\columnwidth}{lrrrrrr}
\toprule
& \multicolumn{2}{c}{(1) Logit-OLS}  & \multicolumn{2}{c}{(2) Logit-2SLS}  & \multicolumn{2}{c}{(3) BLP}   \\
\cmidrule(lr){2-3} \cmidrule(lr){4-5} \cmidrule(lr){6-7}
& coef. & \multicolumn{1}{c}{s.e.}
& coef. & \multicolumn{1}{c}{s.e.}
& coef. & \multicolumn{1}{c}{s.e.}    \\
\midrule
 Prices         &    $-1.750$ &   (0.019)    &    $-6.519$  &   (0.209)    &  $-12.001$  & (0.777)\\
 Size           &    0.037 &    (0.001)   &    0.018 &   (0.001)    &  $-0.060$  & (0.013) \\
 Light          &    0.259  &   (0.010)  &    0.413 &   (0.014)    &  $-0.270$  & (0.144) \\
 Plain   &    0.508 &   (0.007)   &    0.423  &   (0.009)   &  0.439 & (0.012) \\
 log(\#Flavors)   &    1.127  &   (0.004)  &    1.106 &   (0.005)    &  1.135    & (0.007) \\
 Income $\times$ price &      &       &        &      &  4.333   & (0.378)\\
 Income $\times$ light &     &        &        &      &  0.215  & (0.069)\\
Age $\times$ light    &      &       &      &        & $-0.565$  & (0.113)  \\
 Age $\times$ size     &    &         &   &           & $-0.067$& (0.008)  \\
\midrule 
 Own price elasticity-mean       & \multicolumn{2}{c}{-1.32}  & \multicolumn{2}{c}{-4.917} & \multicolumn{2}{c}{-6.306} \\
 Own price elasticity-median     & \multicolumn{2}{c}{-1.177} & \multicolumn{2}{c}{-4.384} & \multicolumn{2}{c}{-6.187} \\
 Diversion outside option-mean   &  \multicolumn{2}{c}{0.72}  &  \multicolumn{2}{c}{0.72}  &  \multicolumn{2}{c}{0.39}  \\
 Diversion outside option-median &  \multicolumn{2}{c}{0.71}  &  \multicolumn{2}{c}{0.71}  &  \multicolumn{2}{c}{0.38}  \\
\bottomrule
\end{widetable}

\begin{tablenotes}[flushleft]
    \setlength\labelsep{0pt}
    \footnotesize
    \item We report demand estimates for a logit model of demand obtained from OLS estimation in column 1 and 2SLS  estimation in column 2. Column 3 corresponds to the full BLP model. All specifications have fixed effects for quarter, store, and brand. $n=205,123$. 
\end{tablenotes}
\end{threeparttable}
\end{table}

\vspace{0.5em}
\noindent \textbf{Results:} Results for demand estimation are reported in Table \ref{Tab:demand}.  As a reference, we report estimates of a standard logit model of demand in Columns 1 and 2.  In Column 1, the logit model is estimated via OLS. In Column 2, we use transportation cost as an instrument for price and estimate the model via 2SLS.  When comparing OLS and 2SLS estimates, we see a large reduction in the price coefficient, indicative of endogenity not controlled for by the fixed effects.  Column 3 reports estimates of the full demand model which generates elasticities comparable to those obtained in \cite{v07}. Although our model is simpler than the one she uses, the implied diversion to the outside option is far from logit.

\subsection{Test for Conduct}

\noindent \textbf{Models of Conduct:}  We consider five models of vertical conduct from \cite{v07}. A full description of the models is in Appendix \ref{sect:Robustness}.\footnote{\cite{v07} also considers retailer collusion and vertically integrated monopoly. As we do not observe all retailers in a geographic market, we cannot test those models.}
\begin{enumerate}
\item \textit{Zero wholesale margin:}  Retailers choose retail prices, wholesale price is set to marginal cost and retailers pay manufacturers a fixed fee.  
\item \textit{Zero retail margin:} Manufacturers choose retail prices, and pay retailers a fixed fee.
\item \textit{Linear pricing:} Manufacturers, then retailers, set prices.
\item \textit{Hybrid model:} Retailers are vertically integrated with their  private labels.
\item \textit{Wholesale Collusion:}  Manufacturers act to maximize joint profit.
\end{enumerate}

Given our demand estimates, we compute implied markups $\boldsymbol{\Delta}_m$ for each model $m$. We specify marginal cost as a linear function of observed shifters and an unobserved shock. We include in $\textbf{w}$ an estimate of the transportation cost for each manufacturer-store pair and dummies for quarter, brand and city.

\vspace{0.5em}
\noindent \textbf{Inspection of Implied Markups and Costs:}  Economic restrictions on price-cost margins ${\boldsymbol{\Delta}_m}/{\boldsymbol{p}}$ (PCM) and estimates of cost parameters $\tau$ may be used to learn about conduct, and are complementary to formal testing. For every model, we estimate $\tau$ by regressing implied marginal cost on the transportation cost and fixed effects. The coefficient of transportation cost is positive for all models, consistent with intuition. Thus, no model can be ruled out based on estimates of $\tau$.\looseness=-1

Figure \ref{fig:mkups_plot} reports the distributions of PCM for all models. Compared to Table 7 in \cite{v07}, our PCM are qualitatively similar both in terms of median and standard deviations, and have the same ranking across models. While distributions of PCM are reasonable for models 1 to 4, 
model 5 implies PCM that are greater than 1 (and thus negative marginal cost) for $32$ percent of observations. We rule out model 5 based on the figure alone. However, discriminating between models 1 to 4 requires our more rigorous procedure.\footnote{Including model 5 in our testing procedure does not change our results as it is always rejected.}\looseness=-1 

\begin{figure}[ht] 
    \centering
    \caption{Submission/QE/Distributions of PCM}
    \label{fig:mkups_plot}
    \vspace{-0.25cm}
    \includegraphics[scale=0.60]{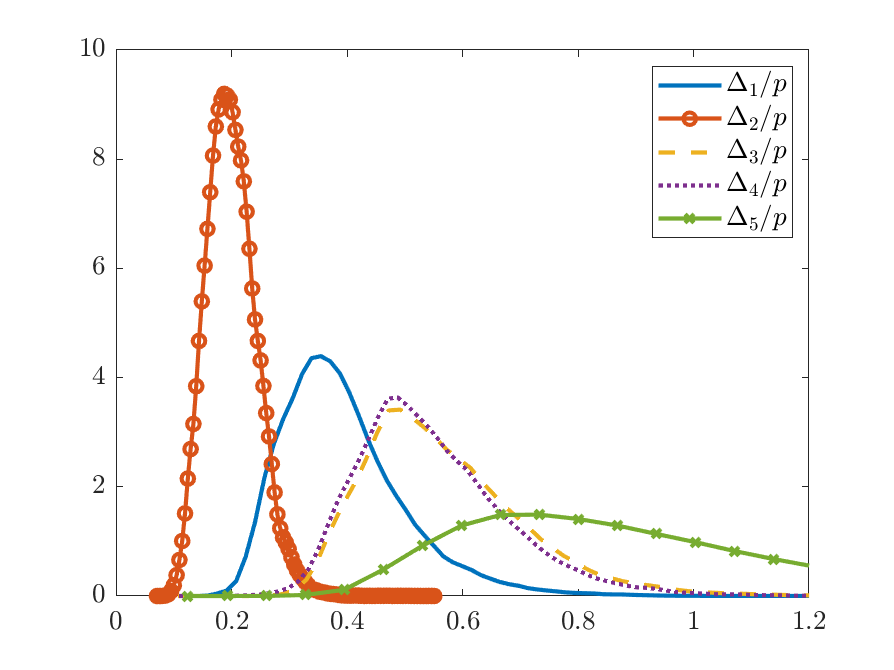}
    \caption*{\footnotesize{We report the distribution of ${\boldsymbol{\Delta}_{mi}}/{\boldsymbol {p}_i}$, the  unresidualized PCM, implied by each model.}}
    \vspace{-0.5cm}
\end{figure}

\noindent \textbf{Model Falsification and Instruments:} Instruments must first be exogenous for testing.  Following  \cite{bh14},  several sources of variation may be used to construct exogenous instruments. These include: (i) both observed and unobserved characteristics of other products, (ii) own observed product characteristics (excluded from cost), (iii) the number of other firms and products, (iv) rival cost shifters, and (v) market level demographics.  Instruments must also be relevant for testing.  Lemma \ref{as:falsify} shows that differences in predicted markups across models distinguish conduct. To distinguish models 1 and 2 we thus need to differentially move downstream markups, while to distinguish 1, 3, 4, and 5 we need to differentially move upstream markups. 

Theoretically, for every pair of models, variation in sources (i)--(v) move upstream and downstream markups for at least one model, making them plausibly relevant. \cite{mqsw22} show that whether instruments differentially move markups depends on the passthrough matrices of the two models, interacted with how instruments move equilibrium prices. To provide a concrete example of the economic determinants of falsification, consider models 1 and 2. In an environment with a simpler demand system that we consider, \cite{mqsw22} derive passthrough matrices under models 1 and 2, and show that instruments related to sources (i)--(v) will falsify either model when the other one is the truth. Because a more flexible form of demand makes it easier to falsify models, we have good reason ex-ante to believe that sources (i)--(v) may generate relevant instruments.  

We then need to form instruments from the exogenous and plausibly relevant sources of variation. We consider four instrument choices constructed from these sources that are standard in estimating demand \citep{gh19} and have been used in testing conduct \citep[see e.g.,][]{bcs20}. 

We first leverage sources of variation (i)-(iii) by considering two sets of BLP instruments: the number of own and rival products in a market (NoProd) and the differentiation instruments proposed in \cite{gh19} (Diff).  These instruments have been shown to perform well in applications of demand estimation.  As they leverage variation in the products firms offer and move markups, they are appropriate choices in our setting.  For product-market $jt$, let $O_{jt}$ be the set of products other than $j$ sold by the firm that produces $j$, and let $R_{jt}$ be the set of products produced by rival firms. For product characteristics $\boldsymbol{x}$, the instruments are:\looseness=-1
\[\boldsymbol{z}^\text{NoProd}_{jt}=\left[\begin{matrix} \sum\limits_{k\in O_{jt}} 1[k\in O_{jt}] & \sum\limits_{k\in R_{jt}} 1[k\in R_{jt}]  \end{matrix}\right]\]
\[\boldsymbol{z}^\text{Diff}_{jt}=\left[\begin{matrix}\sum\limits_{k\in O_{jt}} 1[|\boldsymbol{d}_{jkt}|<sd(\boldsymbol{d})] & \sum\limits_{k\in R_{jt}} 1[|\boldsymbol{d}_{jkt}|<sd(\boldsymbol{d})] \end{matrix}\right]\]
where  $\boldsymbol{d}_{jkt}\equiv \boldsymbol{x}_{kt}-  \boldsymbol{x}_{jt}$ and $sd(\boldsymbol{d})$ is the vector of standard deviations of the pairwise differences across markets for each characteristic.\footnote{Following  \cite{c12}, \cite{c17}, and \cite{bcs20}, we perform RV testing with the leading principal components of the Diff instruments.  We choose the number of principal components corresponding to 95\% of the total variance, yielding five Diff instruments. The results below do not qualitatively depend on our choice of principal components.} To form instruments from rival cost shifters, we average transportation costs of rival firms' products (Cost). Finally, \cite{gh19} suggests that variation in demographics can be leveraged for demand estimation by interacting market level moments with product characteristics.  Given the heterogeneity in consumer preferences in our demand system, we interact mean income with light and mean age with size and light to construct our fourth set of instruments (Demo).\looseness = -1

\vspace{0.5em}
\noindent \textbf{AR Test:} We first perform the AR test with the NoProd instruments. Table \ref{Tab:Model_assessment_all} reports test statistics obtained for each pair of models. The results illustrate Propositions \ref{prop:nulls} and \ref{prop:distrib_TE2}: AR rejects all models when testing with a large sample.\footnote{In Appendix \ref{sect:Robustness} we show EB and Cox tests also reject all models.} 

\begin{table}[ht]
\footnotesize
\caption{AR Test Results}
\label{Tab:Model_assessment_all}
\centering
\begin{threeparttable}
\begin{widetable}{.98\columnwidth}{lccc}
\toprule
\textbf{NoProd IVs} & 2 & 3 & 4  \\
\cmidrule(lr){2-4}
1. Zero wholesale margin&315.34, 575.29&315.34, 398.27&315.34, 396.08\\
2. Zero retail margin &&575.29, 398.27&575.29, 396.08\\
3. Linear pricing&&&398.27, 396.08\\
4. Hybrid model&&&\\
\bottomrule
\end{widetable}

\begin{tablenotes}[flushleft]
    \setlength\labelsep{0pt}
    \footnotesize
    \item Each cell reports $T^\text{AR}_i,T^\text{AR}_j$ for row model $i$ and column model $j$, with NoProd instruments. For 95\% confidence, the critical value is 5.99. Standard errors account for two-step estimation error  and clustering at the market level; see Appendix \ref{sect:TwoStep}.\looseness=-1
\end{tablenotes}
\end{threeparttable}
\end{table}   

\noindent \textbf{RV Test:} We perform RV tests using  NoProd, Diff, Cost, and Demo instruments.  Following  Section \ref{sec:accumulate}, we keep the instrument sets separate and construct model confidence sets using the procedure of \cite{hln11}.  We report the results in Table \ref{Tab:RV_app}.\footnote{The results are computed with the Python package \texttt{pyRVtest} available on GitHub \citep{dmss_code}. The package, portable to a wide range of applications, seamlessly integrates with \texttt{PyBLP} \citep{cg19} to import results of demand estimation. A researcher needs only specify the models they want to test, the instruments and the cost shifters, and the package outputs all the information in  Table \ref{Tab:RV_app}.  The variance estimators developed in this paper enable fast computation of all elements in that table, even in large datasets and with flexible demand systems.} \looseness=-1

To aid the reader, we begin by explaining Panel A.  The first three columns give the pairwise RV test statistics for all pairs of models.  For each pair, a value above $1.96$ indicates rejection of the null of equal fit in favor of the column model.  Instead, a value below $-1.96$ corresponds to rejection in favor of the row model.  The second three columns give all the pairwise $F$-statistics.  Finally, the last column reports the MCS $p$-values.  In Panel A, the MCS contains only model 2 corresponding to zero retail margins; the MCS $p$-value for the other three models is below 0.05, our chosen level.  The NoProd instruments are strong for testing: there are no size distortions above 0.025 with two instruments and each pairwise F-statistic exceeds the critical value for target best-case power of 0.95.\footnote{Across all values of $\rho^2$, the largest critical value for best-case power of 0.95 when testing with two instruments is 18.9. The lookup table of critical values is part of \texttt{pyRVtest}.} 
If a researcher precommitted to the NoProd instruments for testing, Panel A shows the results that would obtain.\looseness=-1

\begin{table}[ht]
\footnotesize
\caption{RV Test Results}
\label{Tab:RV_app}
\centering
\begin{threeparttable}
\begin{widetable}{.98\columnwidth}{lrrrrrrc}
\toprule
& \multicolumn{3}{c}{$T^\text{RV}$}& \multicolumn{3}{c}{$F$-statistics}&  MCS $p$-values\\
\cmidrule(lr){2-4}  \cmidrule(lr){5-7} \cmidrule(lr){8-8}  
Models & 2 & 3 &  4 & 2 & 3 & 4 & \\ 
\midrule
\textbf{Panel A: BLP95 IVs} $(d_z = 2)$ \\ 
1. Zero wholesale margin& 2.25&$-$12.56& $-$12.35&149.5&86.1&85.1&0.03\\
2. Zero retail margin&&$-$7.17&$-$7.14&&133.6&132.7&{1.00}\\
3. Linear pricing&&&0.12&&&101.9&0.00\\
4. Hybrid model&&&&&&&0.00\\
\cmidrule(lr){1-8}
\textbf{Panel B: Demo IVs} $(d_z = 3)$ \\
1. Zero wholesale margin& 2.23&$-$6.81& $-$6.85&24.1&46.6&47.2&0.03\\
2. Zero retail margin&&$-$5.45&$-$5.51&&35.4&36.0&{1.00}\\
3. Linear pricing&&&0.35&&&41.1&0.00\\
4. Hybrid model&&&&&&&0.00\\
\cmidrule(lr){1-8}
\textbf{Panel C: Cost IVs} $(d_z = 1)$\\
1. Zero wholesale margin& $-$1.86&$-$0.94& $-$1.16&70.6&4.8$^\dagger$&5.2$^\dagger$&{1.00}\\
2. Zero retail margin&&1.78&1.73&&43.4&43.8&{0.13}\\
3. Linear pricing&&&$-$2.82&&&53.7&{0.35}\\
4. Hybrid model&&&&&&&0.01\\
\cmidrule(lr){1-8}
\textbf{Panel D: Diff IVs} $(d_z = 5)$ \\
1. Zero wholesale margin& 0.81&0.34&0.33&6.1$^\dagger$&2.2$^\dagger$&2.2$^\dagger$&{0.67}\\
2. Zero retail margin&&0.24&0.23&&2.0$^\dagger$&2.0$^\dagger$&{0.81}\\
3. Linear pricing&&&$-$1.21&&&1.2$^\dagger$&{1.00}\\
4. Hybrid model&&&&&&&{0.54}\\
\midrule
\midrule
\textbf{Aggregating Evidence:} &\multicolumn{7}{c}{ $\boldsymbol{M^* = \{2\}}$}\\
\multicolumn{8}{c}{Step 0: $M^*_A = \{2\}$,  $M^*_B = \{2\}$, $M^*_C = \{1,2,3\}$, $ M^*_D = \{1,2,3,4\}$ \hfill Step 1: No conflicting evidence}\\
\multicolumn{8}{l}{Step 2:  Smallest MCS is $M^*_A= \{2\}$, no additional models supported by strong instruments}\\
\bottomrule
\end{widetable}

\begin{tablenotes}[flushleft]
    \setlength\labelsep{0pt}
    \footnotesize
    \item Panels A--D report the RV test statistics $T^\text{RV}$ and the effective $F$-statistic for all pairs of models, and the MCS $p$-values (details on their computation are in Appendix \ref{sect:MCSDetail}). A negative RV test statistic suggests better fit of the row model. $F$-statistics indicated with $\dagger$ are below the appropriate critical value for best-case power above 0.95.  With MCS $p$-values below 0.05 a row model is rejected from the model confidence set.   Steps in the aggregating evidence panel correspond to Figure \ref{fig:flowchart}. Both $T^\text{RV}$ and the $F$-statistics account for two-step estimation error and clustering at the market level; see Appendix \ref{sect:TwoStep} for details.  \looseness=-1
\end{tablenotes}
\end{threeparttable}
\end{table}

Panels B-D report test results in the same format as Panel A for the other three sets of instruments. Results vary markedly across panels. While the MCS in Panel B contains only model 2, coinciding with the MCS in Panel A, the MCS in Panels C and D contain additional models. Inspection of the pairwise $F$-statistics shows that the failure to reject models in Panels C and D is due to the Cost and Diff instruments having low power. For instance, the five Diff instruments in Panel D, while strong for size, are weak for testing at a target best-case power of 0.95 for all pairs of models. Given that the diagnostic is based on best-case power, the realized power could be considerably lower than 0.95. Similarly, the single rival Cost instrument in Panel C is weak for testing: for several pairs of models, the instrument is weak for power at a target of 0.50. Given the null is not rejected in these cases, power is the salient concern. \looseness=-1

The diagnostic enhances the interpretation of the RV test results in Table \ref{Tab:RV_app}.  Had the researcher precommitted to Diff or Cost instruments, the conclusions one could draw on firm conduct would not be informative. Because it is hard, in this context, to precommit to any one set of instruments, we suggest the researcher accumulates evidence across instrument sets.\footnote{Alternatively, we could pool all instrument sets. Appendix \ref{sect:Robustness} shows that doing so dilutes instrument power, resulting in lower $F$-statistics and a larger MCS.} To do so, we implement the procedure in Figure \ref{fig:flowchart}. In step 1, we check for conflicting evidence.  As all MCS for each set of instruments are nested, there is no conflicting evidence in this setting.  Thus, in step 2 we initially set $M^*=\{2\},$ which is the smallest MCS arising from NoProd and Demo instruments. As Diff IVs and Cost IVs are not strong for all pairs of models, there is no addition to be made to $M^*$. Thus, the evidence accumulated across the four sets of instruments supports concluding for model 2.\looseness=-1

\vspace{0.5em}
\noindent \textbf{Main Findings:} This application highlights the practical importance  of allowing for misspecification and degeneracy when testing conduct.  First,  by formulating hypotheses to perform model selection, RV offers interpretable results in the presence of misspecification.  Instead, AR rejects all models in our large sample. Second, instruments are weak in a standard testing environment, affecting inference.  When  RV is run with the Diff or Cost instruments, it has little to no power in this application.\footnote{However, these instruments could  be strong in other applications.} Thus, assuming at least one of the models is testable is not innocuous.  Our diagnostic distinguishes between weak and strong instruments, allowing the researcher to assess whether inference is valid.  Finally, by not having to precommit to a choice of instruments, our procedure for accumulating evidence allows researchers to draw sharp conclusions on firm conduct in this setting.\looseness=-1

In addition to illustrating our results, this application speaks to how prices are set in consumer packaged goods industries. Unlike \cite{v07} who concludes for the zero wholesale margin model, only a model where manufacturers set retail prices is supported by our testing procedure.   Our finding  is important for the broader literature studying conduct in markets for consumer packaged goods as it supports the common assumption that manufacturers set retail prices \citep[e.g.,][]{n01,mw17}.\looseness=-1

\section{Conclusion}\label{sec:conclusion}

In this paper, we discuss inference in an empirical environment encountered often by IO economists: testing models of firm conduct.  Starting from the falsifiable restriction in  \cite{bh14}, we study the effect of formulating  hypotheses and choosing instruments on inference.  Formulating hypotheses to perform model selection allows the researcher to learn the true nature of firm conduct in the presence of misspecification.  Alternative approaches based on model assessment instead will reject the true model of conduct if noise is sufficiently low.  Given that misspecification is likely in practice, we focus on the RV test.\looseness=-1  

However, the RV test suffers from degeneracy when instruments are weak for testing.  Based on this characterization, we outline the inferential problems caused by degeneracy and provide a diagnostic.  The diagnostic relies on an $F$-statistic which is easy to compute, and can inform the researcher about the presence of size distortions or a lack of power.  We also show how to aggregate evidence across different sets of instruments, while using the $F$-statistic to draw sharp conclusions.\looseness=-1

An empirical application testing vertical models of conduct \citep{v07} highlights the importance of our results.  We find that AR rejects all models of conduct. This illustrates the importance of allowing for misspecification and adopting a model selection approach.  Four sets of exogenous and plausibly relevant instruments exist in this setting. Two of these are weak, as diagnosed by our $F$-statistic. Adopting our procedure for accumulating evidence across RV tests with separate instrument sets, we conclude for a single model in which manufacturers set retail prices.

\small{
\bibliography{biblio_RV}}

\normalsize{
\pagebreak
\begin{appendices}
\begin{center}
    \Large{\textbf{Online Appendix}}
\end{center}
\section{Proofs}\label{sect:Proofs}

As a service to the reader, we collect here the key notational conventions and give a formulaic description of the asymptotic RV variance $\sigma^2_\text{RV}$. Additionally, we present a lemma that serves as the foundation for multiple subsequent proofs. 

For any variable $\boldsymbol{y}$, we let $y = \boldsymbol y - \textbf{w}E[\textbf{w}'\textbf{w}]^{-1}E[\textbf{w}'\boldsymbol y]$,  $\hat y = \boldsymbol y - \textbf{w}(\textbf{w}'\textbf{w})^{-1}\textbf{w}'\boldsymbol y$. Predicted markups are $\Delta^z_m = z \Gamma_m$ where $\Gamma_m = E[z'z]^{-1}E[z'\Delta_{m}]$. The GMM objective functions are ${Q}_m  =  {g}_m' W {g}_m$ where ${g}_m= E[z_i(p_i-\Delta_{mi})]$ and $ W=E[z_i z_i']^{-1}$ with sample analogs of $\hat Q_m=\hat g_m'\hat W \hat g_m$ where $\hat g_m = n^{-1} \hat z'(\hat p - \hat \Delta_m)$ and $\hat W = n (\hat z'\hat z)^{-1}$. The RV test statistic is
$T^\text{RV} = {\sqrt{n}(\hat Q_1 - \hat Q_2)}/{\hat \sigma_\text{RV}}$ where
\begin{align}
	\hat \sigma^2_\text{RV} = 4\!\left[ \hat g_1'\hat W^{1/2} \hat V^\text{RV}_{11} \hat W^{1/2} \hat g_1 + \hat g_2'\hat W^{1/2} \hat V_{22}^\text{RV} \hat W^{1/2} \hat g_2- 2\hat g_1'\hat W^{1/2} \hat V_{12}^\text{RV} \hat W^{1/2} \hat g_2  \right]
\end{align}
and the variance estimators are $\hat V_{\ell k}^\text{RV} = \frac{1}{n} \sum_{i=1}^n \hat \psi_{\ell i} \hat \psi_{ki}'$ for the influence function
\begin{align}
      \hat \psi_{mi} = \hat W^{1/2} \left( \hat z_i \big(\hat p_i - \hat \Delta_{mi}\big) - \hat g_m \right)- \tfrac{1}{2} \hat W^{3/4} \left( \hat z_i \hat z_i' -\hat W^{-1}\right)\hat W^{3/4} \hat g_m.
\end{align}
The AR statistic is $T^\text{AR}_m = n\hat \pi_m' (\hat V^\text{AR}_{mm})^{-1} \hat \pi_m$ for $\hat \pi_m = \hat W \hat g_m$, $\hat V^\text{AR}_{\ell k} = \frac{1}{n} \sum_{i=1}^n \hat \phi_{\ell i} \hat \phi_{ki}'$,
\begin{align}
    \hat \phi_{mi} = \hat W \left( \hat z_i \big(\hat p_i - \hat \Delta_{mi}\big) - \hat g_m \right)- \hat W \left( \hat z_i \hat z_i' -\hat W^{-1}\right)\hat W \hat g_m.
\end{align}
Also, $\pi_m = Wg_m$. $\hat V^\text{AR}_{mm}$ is the White heteroskedasticity-robust variance estimator, since we also have $\hat \phi_{mi} = \hat W \hat z_i\big( \hat p_i - \hat \Delta_{mi} - \hat z_i'\hat \pi_m\big)$.

To state the following lemma and give a formulation of $\sigma^2_\text{RV}$, we introduce population versions of $\hat \psi_{mi}$ and $\hat \phi_{mi}$ along with notation for their variances. Let $\psi_{mi} = W^{1/2}  z_i ( p_i -  \Delta_{mi}) - \tfrac{1}{2}  W^{3/4} z_i  z_i'  W^{3/4}  g_m - \tfrac{1}{2}W^{1/2} g_m$ and $\phi_{mi} = W z_i e_{mi}$. Also, let $V^\text{RV}_{\ell k} = E[\psi_{\ell i}\psi_{ki}']$, $V^\text{AR}_{\ell k} = E[\phi_{\ell i}\phi_{ki}']$, and $V^\text{RV} = E[(\psi_{1i}',\psi_{2i}')'(\psi_{1i}',\psi_{2i}')]$, which is a matrix with $V^\text{RV}_{11}$, $V^\text{RV}_{12}$, and $V^\text{RV}_{22}$ as its entries. Finally,
\begin{align}\label{eq:sig2rv}
    \sigma^2_\text{RV} = 4\!\left[  g_1' W^{1/2}  V^\text{RV}_{11}  W^{1/2}  g_1 +  g_2' W^{1/2}  V^\text{RV}_{22}  W^{1/2}  g_2- 2 g_1' W^{1/2}  V^\text{RV}_{12}  W^{1/2} g_2 \right]\!. \,\,
\end{align}

\begin{lem}\label{lem:asymp norm}
   Suppose Assumptions \ref{as:momcond} and \ref{as:regcond} hold. For $\ell,k,m \in \{1,2\}$, we have
   \begin{align}
      (i)& \ \ \sqrt{n}\begin{pmatrix} \hat W^{1/2} \hat g_1 -  W^{1/2}  g_1  \\ \hat W^{1/2} \hat g_2 -  W^{1/2}  g_2  \end{pmatrix} \xrightarrow{d} N\big(0,V^\emph{RV}\big), & (ii)& \ \ \hat V^\emph{RV}_{\ell k} \xrightarrow{p} V^\emph{RV}_{\ell k}, \\
       (iii)& \ \  \sqrt{n}\left( \hat \pi_m - \pi _m \right) \xrightarrow{d} N\big(0,V^\emph{AR}_m\big), 
       &(iv)& \ \ \hat V_{m}^\emph{AR} \xrightarrow{p} V_{m}^\emph{AR}.
   \end{align} 
  
\end{lem}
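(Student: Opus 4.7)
The four claims pair naturally: parts (i) and (iii) are multivariate CLTs for sample moments, while (ii) and (iv) establish consistency of plug-in variance estimators. Under the iid finite-fourth-moment setting of Assumption \ref{as:regcond}, all four follow from standard applications of LLN, the multivariate CLT, and the delta method. The plan is to first reduce the sample-residualized quantities to their population-residualized analogs, then linearize the nonlinear functions of sample moments, and finally apply standard CLT and LLN tools.

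For the residualization step I would use Frisch-Waugh-Lovell: $\hat g_m = n^{-1} \boldsymbol z' M_{\textbf{w}} (\boldsymbol p - \boldsymbol \Delta_m)$ and $\hat W^{-1} = n^{-1} \boldsymbol z' M_{\textbf{w}} \boldsymbol z$, where $M_{\textbf{w}} = I - \textbf{w}(\textbf{w}'\textbf{w})^{-1}\textbf{w}'$. Expanding the sample projection around the population projection, the key observation is that by construction of the population residualization $E[z_i \textbf{w}_i'] = 0$, so the difference between sample and population projections contributes only $O_p(n^{-1})$ after multiplying by the moments of $z$. This yields
\begin{align*}
\sqrt n(\hat g_m - g_m) &= n^{-1/2} \sum_{i=1}^n \left[ z_i(p_i - \Delta_{mi}) - g_m \right] + o_p(1), \\
\sqrt n(\hat W^{-1} - W^{-1}) &= n^{-1/2} \sum_{i=1}^n \left[ z_i z_i' - W^{-1} \right] + o_p(1),
\end{align*}
with all variables now understood as population residuals. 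Joint asymptotic normality of the stacked sums across $m \in \{1,2\}$ and the entries of $z_i z_i'$ follows from the multivariate Lindeberg-Levy CLT, with finite second moments ensured by Assumption \ref{as:regcond}(iii).

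With this in hand, parts (iii) and (iv) are direct. Linearizing $\hat \pi_m = \hat W \hat g_m$ around $(W, g_m)$ via $\hat W - W = -W(\hat W^{-1} - W^{-1}) W + O_p(n^{-1})$ gives $\sqrt n(\hat \pi_m - \pi_m) = n^{-1/2} \sum_i W z_i e_{mi} + o_p(1)$ for $e_{mi} = p_i - \Delta_{mi} - z_i' \pi_m$, and the CLT delivers the $N(0, V^{\text{AR}}_m)$ limit. For (iv), since $\hat \phi_{mi}$ is a continuous function of sample moments converging to population analogs, the finite fourth moments in Assumption \ref{as:regcond}(iii) justify a uniform LLN on $n^{-1} \sum_i \hat \phi_{\ell i} \hat \phi_{k i}'$, giving $\hat V^{\text{AR}}_{\ell k} \xrightarrow{p} V^{\text{AR}}_{\ell k}$. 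Part (i) proceeds in parallel but requires an additional delta-method step for the matrix square root: writing $D := \hat W^{1/2} - W^{1/2}$, the identity $(\hat W^{1/2})^2 = \hat W$ gives the Sylvester equation $D W^{1/2} + W^{1/2} D = \hat W - W + O_p(n^{-1})$. Solving for the first-order increment $D g_m$ in terms of $\hat W^{-1} - W^{-1}$ (using $\hat W - W = -W(\hat W^{-1} - W^{-1})W$ again) and combining with the $\hat g_m$ linearization yields the representation whose asymptotic variance is $V^{\text{RV}}_{mm} = E[\psi_{mi} \psi_{mi}']$. Part (ii) then follows by the same LLN/plug-in argument as (iv).

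The main obstacle is the matrix square root step in part (i). Because $W^{1/2}$ and $D$ generally do not commute, the Sylvester equation does not admit a clean closed-form solution as a single linear operator on $\hat W - W$; in the eigenbasis of $W$, its solution involves kernels of the form $(\sqrt{\lambda_i} + \sqrt{\lambda_j})^{-1}$. The compact $W^{3/4}$-based form of $\psi_{mi}$ in the paper, together with the identity $W^{3/4} W^{-1} W^{3/4} = W^{1/2}$ which ensures $E[\psi_{mi}]=0$, must be verified to reproduce the correct asymptotic variance of $\sqrt n(\hat W^{1/2}\hat g_m - W^{1/2} g_m)$. Once this algebraic identification is established, the remaining CLT and LLN steps are routine.
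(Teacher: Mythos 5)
Your route coincides with the paper's for everything except the matrix square root. The reduction from sample to population residualization is exactly the paper's first remark in the proof (the cross term $n^{-1}z'\textbf{w}$ is $O_p(n^{-1/2})$ because $E[\textbf{w}_i z_i']$ has been projected out, so the discrepancy is $O_p(n^{-1})$); the linearizations of $\hat g_m$ and $\hat W^{-1}$, the expansion $\hat W - W = -W(\hat W^{-1}-W^{-1})W + O_p(n^{-1})$ for part $(iii)$, and the LLN/Cauchy--Schwarz arguments for $(ii)$ and $(iv)$ all match the paper's Steps (1) and (3) and its closing remark on the AR case. Those parts are complete.

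The gap is part $(i)$, and it is precisely the step you defer. The paper does not solve the Sylvester equation $DW^{1/2}+W^{1/2}D = \hat W - W$ at all: it writes $\hat W^{1/2}-W^{1/2} = W^{1/4}\big((I+A)^{-1/2}-I\big)W^{1/4}$ with $A = W^{1/2}(\hat W^{-1}-W^{-1})W^{1/2}$ and applies the binomial series $(I+A)^{-1/2}=I-\tfrac{1}{2}A+O(\norm{A}^2)$ to the middle factor, which delivers the increment $-\tfrac{1}{2}W^{3/4}(\hat W^{-1}-W^{-1})W^{3/4}$ in one line and hence the $W^{3/4}$ terms in $\psi_{mi}$. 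Your observation that this operator is not the Sylvester solution is correct and not cosmetic: in the eigenbasis of $W$ the sandwich gives the kernel $\tfrac{1}{2}(\lambda_i\lambda_j)^{-1/4}$ acting on $\hat W - W$, whereas the symmetric square root requires $(\sqrt{\lambda_i}+\sqrt{\lambda_j})^{-1}$, and these agree only when $\lambda_i=\lambda_j$. So the "algebraic identification" you postpone cannot be established for the literal symmetric square root of $\hat W$ when $W$ has distinct eigenvalues; the statement with the paper's $\psi_{mi}$ holds for the sandwich-form root $W^{1/4}\big(W^{1/2}\hat W^{-1}W^{1/2}\big)^{-1/2}W^{1/4}$, which is symmetric, converges to $W^{1/2}$, and squares to $\hat W + O_p(n^{-1})$. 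To close your argument you must either adopt that representation as the operative definition of $\hat W^{1/2}$ (and then the expansion is immediate from the binomial series), or show that the choice among first-order-equivalent roots does not affect the quantities in which $\hat W^{1/2}\hat g_m$ and $\hat V^{\text{RV}}_{\ell k}$ ultimately enter. As written, part $(i)$ asserts a variance you have not derived.
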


\begin{proof}
See Appendix \ref{sec:suppl}.
\end{proof}

\begin{remark}\label{rem:validity}
    From parts $(iii)$ and $(iv)$, it immediately follows that $T^\text{AR}_m \xrightarrow{d} \chi^2_{d_z}$ under $H_{0,m}^\text{AR}$ so that the AR tests are asymptotically valid when Assumptions \ref{as:momcond} and \ref{as:regcond} hold. When Assumption \ref{as:nodegen} also holds, it follows from parts $(i)$, $(ii)$, and a first order Taylor approximation that $T^\text{RV} \xrightarrow{d} N(0,1)$ under $H_{0}^\text{RV}$ so that the RV test is asymptotically valid. Details of this step can be found in \citeA{rv02,hp11} and are omitted. When Assumption \ref{as:nodegen} fails to hold, a first order Taylor approximation does not capture the behavior of $T^\text{RV}$ as discussed in Section \ref{sec:instruments}.
\end{remark}


\noindent For the next two proofs, we rely on the following sequence of equalities:
\begin{align}
    E[(\Delta_{0i}^{z}-\Delta_{mi}^z)^2] &= E[(z_i'(\Gamma_0-\Gamma_m))^2] \label{eq:line1}\\
    &= (\Gamma_0-\Gamma_m)'E[z_i z_i'](\Gamma_0-\Gamma_m) \label{eq:line2}\\
    &= E[z_i(\Delta_{0i}-\Delta_{mi})]'E[z_i z_i']^{-1}E[z_i(\Delta_{0i}-\Delta_{mi})] \label{eq:line3}\\
    &= E[z_i(p_i-\Delta_{mi})]'E[z_iz_i']^{-1}E[z_i(p_i-\Delta_{mi})] \label{eq:line4}\\
    &= \pi_m E[z_iz_i'] \pi_m \label{eq:line5} = Q_m.
\end{align}
The first equality follows from $\Delta_{mi}^z = z_i\Gamma_m$, the third is implied by $\Gamma_m = E[z'z]^{-1}E[z'\Delta_{m}] = E[z_iz_i']^{-1} E[z_i\Delta_{mi}]$, the fourth is a consequence of $E[z_i \omega_{0i}]=0$, $W=E[z_i z_i']^{-1}$, and $\Delta_{0i} = p_i - \omega_{0i}$, and the fifth and final equalities follow from $\pi_m = W g_m$ and $g_m= E[z_i(p_i-\Delta_{mi})].$

\begin{proof}[Proof of Lemma \ref{as:falsify}.]

In our parametric framework, the falsifiable condition in Equation (28) of \cite{bh14} is\footnote{See Section 6, Case 2 in \cite{bh14} for a discussion of their non-parametric environment.}
\begin{align}\label{eq:BH14}
    E[ \boldsymbol{p}_i-\boldsymbol{\Delta}_{mi} \mid z_i,\textbf{w}_i ] = \textbf{w}_i \tau + E[ \omega_{0i} \mid z_i,\textbf{w}_i ] \quad \text{a.s.}
\end{align}
This equation, together with residualization against $\textbf{w}_i$ and integration over the distribution of  $\textbf{w}_i$, then implies that
\begin{align}\label{eq:BH14}
    E[ {p}_i-{\Delta}_{mi} \mid z_i ] = E[ \omega_{0i} \mid z_i] \quad \text{a.s.}
\end{align}
Integrating the above against $z_i$, using the assumption $E[z_i\omega_{0i} ]=0$, and relying on the law of iterated expectations, then yields
\begin{align}
    E[z_i(p_i-\Delta_{mi})] = 0
\end{align}
Finally, we note the equivalence
\begin{align}\label{eq:equivalence}
    E[(\Delta_{0i}^{z}-\Delta_{mi}^z)^2] = 0 \quad  &\Leftrightarrow \quad E[z_i(p_i-\Delta_{mi})] = 0 
\end{align}
which follows from \eqref{eq:line1}, \eqref{eq:line4}, and $E[z_i z_i']$ being positive definite. Thus we have shown that Equation (28) of \cite{bh14} implies \eqref{eq:testability}.
%
%
%
\end{proof}

\begin{proof}[Proof of Proposition \ref{prop:nulls}]
For $(i)$, we need to show the equivalence
\begin{align}\label{eq:equiv_prop1_ar}
\pi_m = 0 \qquad \Leftrightarrow \qquad   E[(\Delta^z_{0i}-\Delta^z_{mi})^2] = 0  
\end{align}
This equivalence is a consequence of Equations \eqref{eq:line1} and \eqref{eq:line5} in addition to $E[z_i z_i']$ being positive definite.
For $(ii)$, we we need to show the equivalence
\begin{align}
    {Q}_1-{Q}_2 =0  \qquad \Leftrightarrow \qquad  E[(\Delta^z_{0i}-\Delta^z_{1i})^2]-E[(\Delta^z_{0i}-\Delta^z_{2i})^2]=0. 
\end{align}
This equivalence is a consequence of Equations \eqref{eq:line1} and \eqref{eq:line5}.
\end{proof}

\begin{proof}[Proof of Lemma \ref{cor:modelmisc}.] For $(i)$, suppose for concreteness that $E[(\Delta^z_{0i}-\Delta^z_{1i})^2]<E[(\Delta^z_{0i}-\Delta^z_{2i})^2]$. We need to show that $\Pr\big( T^\text{RV} < -q_{1-\alpha/2}(N) \big) \rightarrow 1$ where $q_{1-\alpha/2}(N)$ is the $(1-\alpha/2)$-th quantile of a standard normal distribution. From Proposition \ref{prop:nulls}, part $(ii)$, we have $Q_1 < Q_2$. Lemma \ref{lem:asymp norm}, parts $(i)$ and $(ii)$, together with Remark \ref{rem:validity} yields $T^\text{RV}/\sqrt{n} = \frac{ \hat Q_1-\hat Q_2}{\hat \sigma_\text{RV}} \xrightarrow{p}  \frac{Q_1-Q_2}{\sigma_\text{RV}} < 0$. Therefore, $\Pr\big( T^\text{RV}/\sqrt{n} < -q_{1-\alpha/2}(N_{0,1})/\sqrt{n} \big) \rightarrow 1$.

For $(ii)$, suppose that $E[(\Delta^z_{0i}-\Delta^z_{mi})^2]\neq 0$ for some $m \in \{1,2\}$.  We need to show that $\Pr\big( \text{reject } H^\text{AR}_{0,m} \big) \rightarrow 1$. From Proposition \ref{prop:nulls}, part $(i)$, it follows that $\pi_m \neq 0$ and since $V^\text{AR}_{mm}$ is positive definite we have $\pi_m'(V^\text{AR}_{mm})\inverse \pi_m > 0$. Using Lemma \ref{lem:asymp norm}, parts $(iii)$ and $(iv)$, we have $(\hat \pi_m,\hat V^\text{AR}_{mm}) \xrightarrow{p} (\pi_m,V^\text{AR}_{mm})$ so that $T^\text{AR}_m/n = \hat\pi_m'\big(\hat V^\text{AR}_{mm}\big)\inverse \hat\pi_m \xrightarrow{p} \pi_m'\big( V^\text{AR}_{mm}\big)\inverse \pi_m > 0$. Therefore, $\Pr\big( \text{reject } H^\text{AR}_{0,m} \big) = \Pr\big( T^\text{AR}_m/n > q_{1-\alpha}(\chi^2_{d_z})/n \big) \rightarrow 1$ where $q_{1-\alpha}(\chi^2_{d_z})/n ) \rightarrow 1$ where $q_{d_z}(1-\alpha)$ denotes the $(1-\alpha)$-th quantile of a $\chi^2_{d_z}$ distribution.
\end{proof}

\begin{proof}[Proof of Lemma \ref{cor:cost_misc}.] 
Since $\textbf{w}_\textbf{a}$ is a subset of $\textbf{w}$, the proof follows immediately by replacing any variable $y$ residualized with respect to $\textbf{w}$ in the Proof of Lemma \ref{cor:modelmisc}, with $y^\textbf{a}$ which has been residualized with respect to $\textbf{w}_\textbf{a}$.
\end{proof}

\begin{proof}[Proof of Proposition \ref{prop:distrib_TE2}.]
For $(i)$, we use Equations \eqref{eq:line2} and \eqref{eq:line5} in addition to the definition of the local alternative in Equation \eqref{eq:local selection} to write
\begin{align}
\sqrt{n}\left( Q_1 -Q_2 \right) &= \sqrt{n}\left( (\Gamma_0-\Gamma_1) - (\Gamma_0-\Gamma_2) \right)\!' E[z_iz_i']\!\left( (\Gamma_0-\Gamma_1) + (\Gamma_0-\Gamma_2) \right)\! \\
&= q'E[z_iz_i']\!\left( (\Gamma_0-\Gamma_1) + (\Gamma_0-\Gamma_2) \right)\!.
\end{align}
Assumption \ref{as:regcond}, part (iii), implies that $(\Gamma_0-\Gamma_1) + (\Gamma_0-\Gamma_2) $ is bounded. We therefore assume essentially without loss of generality that $\sqrt{n}\left( Q_1 -Q_2 \right)$ is a constant, say $c$.\footnote{This assumption is essentially without loss of generality since the boundedness of $(\Gamma_0-\Gamma_1) + (\Gamma_0-\Gamma_2)$, and therefore of $\sqrt{n}\left( Q_1 -Q_2 \right)$, allows us to otherwise argue along subsequences where $\sqrt{n}\left( Q_1 -Q_2 \right)$ converges to a constant.} From Equations \eqref{eq:line1} and \eqref{eq:line5}, we can also write 
\begin{align}
    c &= \sqrt{n} \!\left( E\!\left[(\Delta_{0i}^{z}-\Delta_{1i}^z)^2\right] - E\!\left[(\Delta_{0i}^{z}-\Delta_{2i}^z)^2\right] \right) \\
    &= E\!\big[(\Delta_{0i}^{\text{RV},z}-\Delta_{1i}^{\text{RV},z})^2\big] - E\!\big[(\Delta_{0i}^{\text{RV},z}-\Delta_{2i}^{\text{RV},z})^2\big].
\end{align}

As in Remark \ref{rem:validity}, a first order Taylor expansion, Lemma \ref{lem:asymp norm}, parts $(i)$ and $(ii)$, together with consistency of $\hat \sigma^2_\text{RV}$ now leads to
\begin{align}
    T^\text{RV} = \frac{c}{\sigma_\text{RV}} + \frac{g_1'W^{1/2} \hat W^{1/2}\hat g_1 - g_2'W^{1/2} \hat W^{1/2}\hat g_2}{\sigma_\text{RV}} + o_p(1) \xrightarrow{d} N(c,1).
\end{align}

For $(ii)$, we first note that Assumption \ref{as:homsked} implies that
\begin{align}
    V^\text{AR}_{mm} = E[z_iz_i']\inverse E[e_{mi}^2 z_iz_i'] E[z_iz_i']\inverse = \sigma_m^2 E[z_iz_i'].
\end{align}
Thus we have from Equations \eqref{eq:line2} and \eqref{eq:line5} that 
\begin{align}
    \sigma_m^2 n\pi_m'\big( V^\text{AR}_{mm}\big)\inverse\pi_m = n(\Gamma_0-\Gamma_m)'E[z_i z_i'](\Gamma_0-\Gamma_m) = q_m'E[z_i z_i']q_m
\end{align}
which in turn yields that $n\pi_m'\big( V^\text{AR}_{mm}\big)\inverse\pi_m = E\big[(\Delta^{\text{AR},z}_{0i}-\Delta^{\text{AR},z}_{mi}) ^2\big]/\sigma_m^2$ is a constant, say $c_m$. From Lemma \ref{lem:asymp norm}, parts $(iii)$ and $(iv)$, and continuous mapping we then have
\begin{align}
    T_m^\text{AR} &= n \hat \pi_m'\big( \hat V^\text{AR}_{mm}\big)\inverse \hat \pi_m = n \hat \pi_m'\big( V^\text{AR}_{mm}\big)\inverse \hat \pi_m + o_p(1) \xrightarrow{d} \chi^2_{d_z}(c_m). \qedhere
\end{align}
\end{proof}

\begin{proof}[Proof of Proposition \ref{prop:degen} and Corollary \ref{cor:degen}]

From Equations \eqref{eq:line1} and \eqref{eq:line2}, we have equivalence between $E[(\Delta^z_{0i}-\Delta^z_{mi})^2] = 0$ and $\Gamma_0-\Gamma_m=0$. Furthermore, we recall that $\pi_m=\Gamma_0-\Gamma_m$. Thus, we only need to show that $\sigma^2_\text{RV}=0$ if and only if $\pi_m=0$ for all $m\in\{1,2\}$. Rewriting Equation \eqref{eq:sig2rv} in matrix notation, we have
\begin{align}
    \sigma^2_\text{RV} = 4 
    \begin{pmatrix} W^{-1/2}\pi_1 \\ W^{-1/2}\pi_2 \end{pmatrix}'
    \begin{bmatrix}    V^\text{RV}_{11} & - V^\text{RV}_{12} \\ - V^\text{RV}_{12} & V^\text{RV}_{22}\end{bmatrix}
    \begin{pmatrix} W^{-1/2}\pi_1 \\ W^{-1/2}\pi_2 \end{pmatrix}.
\end{align}
Therefore, the claims to be proven follow from positive definiteness of the variance matrix $V^\text{RV} = E[(\psi_{1i}',\psi_{2i}')'(\psi_{1i}',\psi_{2i}')]$, which is the matrix with $V^\text{RV}_{11}$, $V^\text{RV}_{12}$, and $V^\text{RV}_{22}$ as its entries.

To show that $V^\text{RV}$ is positive definite, we take an arbitrary non-zero, non-random vector $v=(v_1',v_2')' \in \mathbb{R}^{2d_z}$. $V^\text{RV}$ is positive definite if $E(v_1'\psi_{1i}+v_2'\psi_{2i})^2 > 0$ for any such $v$. For certain implied non-random $u = (u_1',u_2')'$ and $t=(t_1',t_2')'$ where $t$ is non-zero, we have $E(v_1'\psi_{1i}+v_2'\psi_{2i})^2 = E(u_1'z_i \cdot u_2'z_i + t_1'z_i \cdot e_{1i} + t_2'z_i \cdot e_{2i} )^2$. Because $\sigma_{12}^2 < \sigma_1^2\sigma_2^2$ (Assumption \ref{as:homsked}), we have that $E(t_1'z_i \cdot e_{1i} + t_2'z_i \cdot e_{2i} )^2 > 0$. Therefore, we can only have $E(v_1'\psi_{1i}+v_2'\psi_{2i})^2 = 0$ if $e_{1i}$ or $e_{2i}$ is a linear function of $z_i$ almost surely. However, because $E[z_ie_{mi}]=0$, such dependence is ruled out by $\sigma_m^2>0$ (Assumption \ref{as:homsked}).
\end{proof}

\begin{proof}[Proof of Proposition \ref{prop:wkinstrF}.]
The proof proceeds in three steps. Step (1) provides a function of the data $(\tilde \Psi_-',\tilde \Psi_+')'$ and two constants $\mu_-$, $\mu_+$ such that $(\tilde \Psi_-',\tilde \Psi_+')' \xrightarrow{d} (\Psi_-',  \Psi_+')'$. All of $(\tilde \Psi_-',\tilde \Psi_+')'$, $\mu_-$, and $\mu_+$ are also functions of the DGP. Step (2) establishes parts $(ii)$--$(iv)$. Step (3) shows that $\abs{T^\text{RV}} = \abs{\tilde \Psi_-'\tilde \Psi_+}/\big(\norm{\tilde \Psi_-}^2 + \norm{\tilde \Psi_+}^2 + 2 \rho \tilde \Psi_-'\tilde \Psi_+ + o_p(1) \big)^{1/2}$ and $F=\big( \norm{\tilde \Psi_-}^2 + \norm{\tilde \Psi_+}^2 - 2\rho \tilde \Psi_-'\tilde \Psi_+ \big)/\big(2d_z \big) + o_p(1)$. Part $(i)$ follows from continuous mapping, step (1), and step (3).

\noindent \textbf{Step (1)} We first provide definitions of $\mu_-$ and $\mu_+$. Let $\tau_+ = 2(\sigma_1^2 + \sigma_2^2 + 2\sigma_{12})^{1/2}$ and $\tau_- = 2(\sigma_1^2 + \sigma_2^2 - 2\sigma_{12})^{1/2}$ and use these two positive constants to define
\begin{align}
		\begin{pmatrix}
			 \mu_1 \\  \mu_2
		\end{pmatrix}  =
		\left(\frac{1}{{\tau_-}} + \frac{1}{{\tau_+}}\right)\! \begin{pmatrix} \sqrt{n} W^{1/2} g_1    \\	 	-\sqrt{n} W^{1/2}  g_2		\end{pmatrix}
		+
		\left(\frac{1}{{\tau_-}} - \frac{1}{{\tau_+}}\right)\!\begin{pmatrix}	-\sqrt{n} W^{1/2}  g_2  \\ 		\sqrt{n} W^{1/2}  g_1 		\end{pmatrix}\!.
\end{align}
Defining $\kappa = 1 + \mathbf{1}\{ \norm{\mu_1} \le \norm{\mu_2}  \}$, we then let $\mu_- = \norm{\mu_\kappa} - \norm{\mu_{3-\kappa}}$ and $\mu_+ = \norm{\mu_1} + \norm{\mu_2}$. Since the instruments are weak for testing, we have that $\sqrt{n} W^{1/2} g_m = E[z_i z_i']^{1/2} q_m$ so $\mu_-$ and $\mu_+$ do not depend on $n$.

To introduce $\tilde \Psi_-$ and $\tilde \Psi_+$, we let ${\cal Q}_m \in \mathbb{R}^{d_z \times d_z}$ be a non-random orthogonal matrix $({\cal Q}_m{\cal Q}_m'={\cal Q}_m'{\cal Q}_m=I_{d_z})$ such that ${\cal Q}_m \mu_m = \norm{\mu_m} \boldsymbol e_1$. With these matrices in hand, we define $\tilde \Psi_- = \tilde \mu_\kappa - \tilde \mu_{3-\kappa}$ and $\tilde \Psi_+ = \tilde \mu_1 + \tilde \mu_2$ where
\begin{align}
		\begin{pmatrix}
			\tilde \mu_1 \\ \tilde \mu_2
		\end{pmatrix} = 
		\left(\frac{1}{{\tau_-}} + \frac{1}{{\tau_+}}\right)\! \begin{pmatrix} \sqrt{n} {\cal Q}_1\hat W^{1/2} \hat g_1    \\	 	-\sqrt{n} {\cal Q}_2\hat W^{1/2}  \hat g_2		\end{pmatrix}
		+
		\left(\frac{1}{{\tau_-}} - \frac{1}{{\tau_+}}\right)\!\begin{pmatrix}	-\sqrt{n} {\cal Q}_1\hat W^{1/2}  \hat g_2  \\ 		\sqrt{n} {\cal Q}_2\hat W^{1/2}  \hat g_1 		\end{pmatrix}\!.
\end{align}

The preceding definitions imply that $(\tilde \Psi_-',\tilde \Psi_+')' = \sqrt{n} A (\hat g_1'\hat W^{1/2},\hat g_2'\hat W^{1/2} )'$ for a particular non-random matrix $A \in \mathbb{R}^{2d_z \times 2d_z}$ with $\sqrt{n} A ( g_1' W^{1/2}, g_2' W^{1/2} )' = (\mu_-\boldsymbol e_1',\mu_+\boldsymbol e_1')'$. Since $\mu_-$ and $\mu_+$ do not depend on $n$, it therefore follows from Lemma \ref{lem:asymp norm}, part $(i)$, that $(\tilde \Psi_-',\tilde \Psi_+')' \xrightarrow{d} (\Psi_-',  \Psi_+')'$ provided that $AV^\text{RV}A' = {\tiny \begin{bmatrix} 1 & \rho \\ \rho & 1 \end{bmatrix} }\otimes I_{d_z}$ where $\rho$ is the correlation between $e_{\kappa i}-e_{3-\kappa, i}$ and $e_{1i}+e_{2i}$. To see why this convergence occurs, note first that weak instruments for testing (Assumption \ref{as:wkinstr}) implies that $g_m = O(n^{-1/2})$ for $m=1,2$, which in turn yields that the second part of $\psi_{mi}$ is $O_p(n^{-1/2})$. From Assumption \ref{as:homsked}, we then have
\begin{align}\label{eq:varconv}
		V^\text{RV}_{\ell k} = W^{1/2} E[ e_{\ell i} e_{k i} z_iz_i'] W^{1/2} + O(n^{-1/2}) = \sigma_{\ell k} I_{d_z} + O(n^{-1/2})
\end{align}
where we write $\sigma_{mm}$ for $\sigma_m^2$. Thus, we have that $V^\text{RV} = {\tiny \begin{bmatrix} \sigma_1^2 & \sigma_{12} \\ \sigma_{12} & \sigma_{2}^2 \end{bmatrix} }\otimes I_{d_z} .$ Furthermore, we note that $A$ takes the form
\begin{align}
			A = 
			\begin{bmatrix}	(-1)^{3-\kappa}I & (-1)^\kappa I \\ I & I	\end{bmatrix}
			\begin{bmatrix}	{\cal Q}_1 & 0 \\ 0 & {\cal Q}_2	\end{bmatrix}
			\begin{bmatrix}	I & I \\ I & -I	\end{bmatrix}
			\begin{bmatrix}	I\tau_-^{-1} & 0 \\ 0 & I\tau_+^{-1}	\end{bmatrix}
			\begin{bmatrix}	I & -I \\ I & I	\end{bmatrix}
\end{align}
and leave the verification of $A\big( {\tiny \begin{bmatrix} \sigma_1^2 & \sigma_{12} \\ \sigma_{12} & \sigma_{2}^2 \end{bmatrix} }\otimes I_{d_z} \big)A' = {\tiny \begin{bmatrix} 1 & \rho \\ \rho & 1 \end{bmatrix} }\otimes I_{d_z}$ to the reader.

\noindent \textbf{Step (2)} Part $(iv)$ is an immediate implication of the triangle inequality and the definitions $\mu_- = \norm{\mu_\kappa} - \norm{\mu_{3-\kappa}}$ and $\mu_+ = \norm{\mu_1} + \norm{\mu_2}$. For part $(ii)$, we have that $\mu_-=0$ if and only if $\norm{\mu_1}^2 - \norm{\mu_2}^2 =0$. In turn, we have that
\begin{align}
		\norm{\mu_1}^2 - \norm{\mu_2}^2 &= n\!\left( \!\left(\tau_-^{-1} + \tau_+^{-1} \right)^2  - \!\left(\tau_-^{-1} - \tau_+^{-1} \right)^2 \right)\! \left( Q_1 - Q_2\right) \\
		&=4n (\tau_+ \tau_-)^{-1} \left( Q_1 - Q_2\right),
\end{align}
from which part $(ii)$ is immediate. For part $(iii)$, we have that $\mu_+=0$ if and only if $\norm{\mu_1}^2 + \norm{\mu_2}^2 =0$. We now have,
\begin{align}
			\norm{\mu_1}^2 \!+\! \norm{\mu_2}^2
			&\!=\! n\!\begin{pmatrix} W^{1/2}g_1  \\ W^{1/2}g_2	\end{pmatrix}'\!
			A'A 
			\!\begin{pmatrix} W^{1/2}g_1  \\ W^{1/2}g_2	\end{pmatrix} \!
			=\! n\!\begin{pmatrix} W^{-1/2}\pi_1  \\ W^{-1/2}\pi_2	\end{pmatrix}'\!
			A'A 
			\!\begin{pmatrix} W^{-1/2}\pi_1  \\ W^{-1/2}\pi_2	\end{pmatrix} \!
\end{align}
so part $(iii)$ follows from the positive definiteness of both $W$ and the matrix $A'A = 4 \!{\tiny \begin{bmatrix} \tau_+^{-2} + \tau_-^{-2} & \tau_+^{-2} - \tau_-^{-2} \\ \tau_+^{-2} - \tau_-^{-2} & \tau_+^{-2} + \tau_-^{-2} \end{bmatrix} }\!\otimes I_{d_z}$.

\noindent \textbf{Step (3)} For $T^\text{RV}$ we first consider the numerator. Here, we observe that
\begin{align}
		\tilde \Psi_-' \tilde\Psi_+ = \tilde \mu_\kappa'\tilde \mu_\kappa - \tilde \mu_{3-\kappa}'\tilde \mu_{3-\kappa} = 4n (\tau_+ \tau_-)^{-1} \!\left( \hat Q_\kappa - \hat Q_{3-\kappa}\right)
\end{align}
so that $\sqrt{n}\abs{\hat Q_1-\hat Q_2} = (\tau_+\tau_-/4)\abs{\tilde\Psi_-' \tilde\Psi_+}/\sqrt{n}$. For the denominator, we initially note that Lemma \ref{lem:asymp norm}, part $(ii)$, and Equation \eqref{eq:varconv} yields that
\begin{align}
		(\tau_+ \tau_-/4)^{-2} n \hat \sigma^2_\text{RV} = \tfrac{4^3n}{\tau_+^2\tau_-^2}\!\left[ \sigma_1^2 \hat Q_1 + \sigma_2^2 \hat Q_2 - 2\sigma_{12} \hat g_1' \hat W \hat g_2  \right] + o_p(1).
\end{align}	
Similarly, we can calculate that
\begin{align}
		\norm{\tilde \Psi_-}^2  &+ \norm{\tilde \Psi_+}^2  + 2\rho \tilde \Psi_-'\tilde \Psi_+ 
		= n \begin{pmatrix}  \hat W^{1/2}\hat g_1 \\ \hat W^{1/2} \hat g_2		\end{pmatrix}
		'A' \bigg( {\tiny \begin{bmatrix} 1 & \rho \\ \rho & 1 \end{bmatrix} }\otimes  I_{d_z}  \bigg)A
		\begin{pmatrix}  \hat W^{1/2}\hat g_1 \\ \hat W^{1/2} \hat g_2		\end{pmatrix}\\
		&= {4^2n}{} \begin{pmatrix}  \hat W^{1/2}\hat g_1 \\ \hat W^{1/2} \hat g_2		\end{pmatrix}'
		\bigg( \! \tfrac{4}{\tau_+^2\tau_-^2}{\tiny \begin{bmatrix} \sigma_1^2 \! & \! -\sigma_{12} \\ -\sigma_{12} \! & \! \sigma_2^2 \end{bmatrix}} \!\otimes I_{d_z}  \bigg)
		\begin{pmatrix}  \hat W^{1/2}\hat g_1 \\ \hat W^{1/2} \hat g_2		\end{pmatrix} \\
		&= \tfrac{4^3n}{\tau_+^2\tau_-^2}\!\left[ \sigma_1^2 \hat Q_1 + \sigma_2^2 \hat Q_2 - 2\sigma_{12} \hat g_1' \hat W \hat g_2  \right]
\end{align}
where the second equality follows from the last sentences of steps (1) and (2) together with
\begin{align}
		{\tiny \begin{bmatrix} \tau_+^{-2} + \tau_-^{-2} & \tau_+^{-2} - \tau_-^{-2} \\ \tau_+^{-2} - \tau_-^{-2} & \tau_+^{-2} + \tau_-^{-2} \end{bmatrix} \begin{bmatrix} \sigma_1^2 & \sigma_{12} \\ \sigma_{12} & \sigma_{2}^2 \end{bmatrix}  \begin{bmatrix} \tau_+^{-2} + \tau_-^{-2} & \tau_+^{-2} - \tau_-^{-2} \\ \tau_+^{-2} - \tau_-^{-2} & \tau_+^{-2} + \tau_-^{-2} \end{bmatrix} = \begin{bmatrix} \sigma_1^2 & -\sigma_{12} \\ -\sigma_{12} & \sigma_2^2 \end{bmatrix} } \tfrac{4}{\tau_+^2\tau_-^2}.
\end{align}
Thus we have the desired conclusion
		\begin{align}
		\abs{T^\text{RV}} &= \sqrt{n}\frac{\abs{\hat Q_1 - \hat Q_2}}{\hat \sigma_\text{RV}} = \frac{4n (\tau_+ \tau_-)^{-1/2} \!\abs{ \hat Q_1 - \hat Q_2}}{\left( (\tau_+ \tau_-/4^2)\inverse n \hat \sigma^2_\text{RV}  \right)^{1/2}} \\
		&=
		 \abs{\tilde \Psi_-'\tilde \Psi_+}/\big(\norm{\tilde \Psi_-}^2 + \norm{\tilde \Psi_+}^2 + 2\kappa \tilde \Psi_-'\tilde \Psi_+  + o_p(1) \big)^{1/2}
\end{align}

For $F$, we first note that standard arguments lead to
\begin{align}
		2d_zF &=n(1-\hat \rho^2)\frac{\hat \sigma_2^2 \hat g_1' \hat W \hat g_1 + \hat \sigma_1^2 \hat g_2' \hat W \hat g_2 - 2\hat \sigma_{12} \hat g_1' \hat W \hat g_2}{\hat \sigma_1^2 \hat \sigma_2^2 - \hat \sigma_{12}^2 } \\
		&=n(1-\rho^2)\frac{\sigma_2^2 \hat Q_1 +  \sigma_1^2 \hat Q_2 - 2 \sigma_{12} \hat g_1' \hat W \hat g_2}{ \sigma_1^2  \sigma_2^2 -  \sigma_{12}^2 } + o_p(1).
\end{align}
Similarly, we can calculate that
\begin{align}
		\norm{\tilde \Psi_-}^2  &+ \norm{\tilde \Psi_+}^2  - 2\rho \tilde \Psi_-'\tilde \Psi_+ 
		= n \begin{pmatrix}  \hat W^{1/2}\hat g_1 \\ \hat W^{1/2} \hat g_2		\end{pmatrix}'
		A' \bigg( {\tiny \begin{bmatrix} 1 & -\rho \\ -\rho & 1 \end{bmatrix} }\otimes  I_{d_z}   \bigg)A
		\begin{pmatrix}  \hat W^{1/2}\hat g_1 \\ \hat W^{1/2} \hat g_2		\end{pmatrix} \\
		&= n(1-\rho^2) \begin{pmatrix}  \hat W^{1/2}\hat g_1 \\ \hat W^{1/2} \hat g_2		\end{pmatrix}'
		\bigg(  {\tiny \begin{bmatrix} \sigma_1^2  &  \sigma_{12} \\ \sigma_{12}  &  \sigma_2^2 \end{bmatrix}}\inverse \otimes I_{d_z}  \bigg)
		\begin{pmatrix}  \hat W^{1/2}\hat g_1 \\ \hat W^{1/2} \hat g_2		\end{pmatrix} \\
		&= n(1-\rho^2)\frac{\sigma_2^2 \hat Q_1 +  \sigma_1^2 \hat Q_2 - 2 \sigma_{12} \hat g_1' \hat W \hat g_2}{ \sigma_1^2  \sigma_2^2 -  \sigma_{12}^2 }
\end{align}
which follows from inverting the equality in the last sentence of step (1).
\end{proof}

\section{Alternative Cost Structures}\label{sect:AltCost}
 In this appendix we discuss how the results of the paper are preserved in a more general setting where marginal cost depends on quantity sold. While in the paper, we derive results assuming constant marginal cost, this is an important extension. It is well known going back to \citeA{b82} and \citeA{l82} that the requirements for testing conduct are greater when marginal cost is non-constant.  Thus, the reader may wonder if the results of our paper would be qualitatively different in a more general setting.

Let $\boldsymbol{q}_i$ denote quantity, and consider a separable cost function:
\begin{align}
    \boldsymbol{c}_i = \bar{\boldsymbol{c}}(\boldsymbol{q}_i,\textbf{w}_i;\tau) + \omega_i,
\end{align}
where $\bar c$ is specified up to some cost parameters $\tau$.  The specification of marginal cost that we adopt in the paper is a special case where $\bar{\boldsymbol{c}}(\boldsymbol{q}_i,\textbf{w}_i;\tau) = \textbf{w}_i \tau$.

There are two additional cases to consider. The first concerns the researcher knowing the true value of $\tau$. We can define $\bar{\boldsymbol{\Delta}}_{mi} = \boldsymbol{\Delta}_{mi} + \bar{\boldsymbol{c}}(\boldsymbol{q}_i,\textbf{w}_i;\tau)$: this term is pinned down by a model of conduct $m$, and a cost function $\bar{\boldsymbol{c}}$. In this case, we can test alternative pairs of models of conduct and cost functions with the methods described in the paper. In particular, that can be done by replacing ${{\Delta}}_{m}$ in the paper with $\bar{\boldsymbol{\Delta}}_{m}$. The set of instruments $\boldsymbol{z}$ must include $\textbf{w}$ in this case, but since $\bar{\boldsymbol{c}}$ is fully specified, there is no additional requirement on instruments. 

Alternatively, when $\tau$ is unknown to the researcher, she can estimate it under a model of conduct $m$ either as a preliminary step, or simultaneously with testing, as she would in the case of constant marginal cost. However, excluded instruments that rotate residual marginal revenue are now needed to estimate $\tau$, and thus must be sufficient to identify $\tau$ under the true model of conduct. If a researcher pursues a sequential approach, the researcher can construct $\bar{\boldsymbol{\Delta}}_{mi} = \boldsymbol{\Delta}_{mi} + \bar{\boldsymbol{c}}(\boldsymbol{q}_i,\textbf{w}_i;{\tau}_m)$ and perform testing after having estimated ${\tau}_m$ under each model. 

The results in the paper are still applicable as long as the researcher adjusts the standard errors of the RV test statistic and the effective $F$-statistic. As there may exist models, falsified by a set of instruments under constant marginal cost, which are not falsified by those same instruments with a more flexible marginal cost specification, degeneracy is more likely to occur when cost is non-constant. This is akin to the classic example of demand shifters in \cite{b82}. More specifically, as shown in \cite{mqsw22}, the researcher needs to use instruments corresponding to economically different sources of exogenous variation in order to pin down cost parameters and perform the test. As this setup places additional requirements on the instruments, evaluating the quality of the inference on conduct using our diagnostic is even more necessary in this setting. However, beyond the extra burden, the testing problem is fundamentally unchanged, therefore the results in this paper apply.   

This discussion makes it explicit that testing firm conduct also jointly tests models of marginal cost. In fact,  $\bar{\boldsymbol{\Delta}}_{m} $ generalizes the term $\Breve{\boldsymbol{\Delta}}_{m}$ defined in Section \ref{sec:globalmis}. In that section we show that cost misspecification can be incorporated in  $\Breve{\boldsymbol{\Delta}}_{m}$, and thus be understood as markup misspecification. Here, misspecification of $\bar c$ is manifested as misspecification of  $\bar{\boldsymbol{\Delta}}_{m} $. If one is flexible in specifying $\bar c$, misspecification of $\bar{\boldsymbol{\Delta}}_{m} $ largely concerns misspecification of ${\boldsymbol{\Delta}}_{m} $, and therefore conduct. Finally, this formulation shows that the methods described in the paper can be used to test models of cost, even when conduct is known.\looseness=-1

\section{Standard Error Adjustments}\label{sect:TwoStep}
 
This appendix extends all our previously introduced statistics to take into account uncertainty stemming from preliminary demand estimation as well as dependence across observations. We suppose that $\Delta_m$ is a function of demand parameters $\theta^D_0$ that are estimated using a GMM estimator $\hat \theta^D$. We therefore let $W^D$ denote the GMM weight matrix and $h(\theta^D)=\frac{1}{n} \sum_{i=1}^n h_i(\theta^D)$ the GMM sample moment function used. Furthermore, we let $ H=\nabla_\theta h(\hat\theta^D)$ be the gradient of the sample moment function $ h$ and let $ G_m=-\frac{1}{n} \hat z' \nabla_{\theta} \hat \Delta_m(\hat\theta^D)$ be the gradient of $\hat g_m$. Both gradients are with respect to $\theta^D$.

\noindent\textbf{RV test:}
The RV statistic with a two-step adjustment replaces $\hat V_{\ell k}^\text{RV}$ in the definition of $\hat\sigma^2_\text{RV}$ with $\tilde V_{\ell k}^\text{RV} = \frac{1}{n} \sum_{i=1}^n \tilde \psi_{\ell i} \tilde \psi_{k i}'$. Here, the influence function $\tilde \psi_{mi}$ adjusts $\hat \psi_{mi}$ to account for preliminary demand estimation:
\begin{align}
      \tilde \psi_{mi} &= \hat \psi_{mi} - \hat W^{1/2} G_m \Phi \!\left(h_i(\hat\theta^D)- h(\hat\theta^D)\right)
\end{align}
where $\Phi=(H'W^DH)^{-1}H'W^D$. This is a standard adjustment for first-step estimation based on the asymptotic approximation $\hat \theta^D - \theta^D_0 = -\Phi h(\theta^D_0) + o_p\big(n^{-1/2}\big).$

\noindent\textbf{AR test:}
Analogously to the above, the AR statistics with a two-step adjustment replaces $\hat V_{mm}^\text{AR}$ with $\tilde V_{mm}^\text{AR}$ where $\tilde V_{\ell k}^\text{AR} = \frac{1}{n} \sum_{i=1}^n \tilde \phi_{\ell i} \tilde \phi_{k i}'$. Here, the influence function $\tilde \phi_{mi}$ adjusts $\hat \phi_{mi} = \hat W \hat z_i(\hat p_i- \hat \Delta_{mi} - \hat z_i'\hat \pi_i)$ to account for preliminary demand estimation using the same approximation to $\hat \theta^D$ as above:
\begin{align}
      \tilde \phi_{mi} &= \hat \phi_{mi} - \hat W G_m \Phi_m \!\left(h_i(\hat\theta^D)- h(\hat\theta^D)\right).
\end{align}



\noindent\textbf{$F$-statistic:}
The $F$-statistic with a two-step adjustment replaces $\hat \sigma_1^2$, $\hat \sigma_2^2$, and $\hat \sigma_{12}$ in the definition of $F$ and $\hat \rho^2$ with $\tilde \sigma_m^2 = d_z\inverse \text{trace}\big( \tilde V_{mm}^\text{AR} \hat W^{-1} \big)$ for $m\in\{1,2\}$ and $\tilde \sigma_{12} = d_z\inverse  \text{trace}\big(  \tilde V_{\ell k}^\text{AR}\hat W^{-1} )$. Here, $ \tilde V_{\ell k}^\text{AR}$ and $\tilde \phi_{mi}$ were introduced in the preceding paragraph.

An extension of Proposition \ref{prop:wkinstrF} that accounts for two-step estimation can be established under homoskedasticity, i.e., when $\hat W^{-1/2} \tilde V_{\ell k}^\text{AR} \hat W^{-1/2}$ for $\ell,k \in \{1,2\}$ converge in probability to diagonal matrices. In the absence of homoskedasticity, $F$ is still informative about the strength of the instruments, but the exact thresholds for size control reported in Table \ref{tab:Tab_StockYogo_Combined_new} may only be approximations to the true thresholds.



\noindent\textbf{Dependence:}
Dependent data, e.g., cluster sampling, is easily accommodated by adjustments to $\hat V_{\ell k}^\text{RV}$ and $\hat V_{\ell k}^\text{AR}$. If we let $c_{ij}$ take the value one if observations $i$ and $j$ are deemed dependent and zero otherwise, then the variance estimators used in the paper can be replaced by 
\begin{align}
    \Breve V_{\ell k}^\text{RV} = \frac{1}{n} \sum_{i=1}^n\sum_{j=1}^n c_{ij} \hat\psi_{\ell i} \hat\psi_{k j}'
    \quad \text{and} \quad 
    \Breve V_{\ell k}^\text{AR} = \frac{1}{n} \sum_{i=1}^n\sum_{j=1}^n c_{ij} \hat\phi_{\ell i} \hat\phi_{k j}'.
\end{align}
Combinations of two-step estimation and dependence are also handled by simply replacing $\hat \psi$ and $\hat \phi$ by $\tilde \psi$ and $\tilde \phi$ in the definitions of $\Breve V_{\ell k}^\text{RV}$ and $\Breve V_{\ell k}^\text{AR}$. Provided that suitable central limit theorem and laws of large numbers can be alluded to under the type of dependence considered, any claims on asymptotic validity under strong instruments made in the paper continue to hold. For clustered data, we refer to \citeA{ha19} for such results.



\section{Other Model Assessment Tests}\label{sect:EBCox}
In this section, we discuss the two other model assessment procedures used in the empirical IO literature. Although the details of test performance differ across the three model assessment procedures, EB and Cox share with AR the undesirable property that inference on conduct is not valid under misspecification.\looseness=-1

 \noindent \textbf{Estimation Based Test (EB):} A test of Equation \eqref{eq:testability} can be constructed  by viewing the problem as one of inference about a regression parameter. We refer to this approach as estimation based, or EB.  One way to implement an estimation based approach, proposed in \citeA{p17}, is to consider the equation\footnote{Alternatively, the procedure could be based on the regression $p =  \Delta \theta + \omega$,
 where ${\Delta} = [{\Delta}_{1}, {\Delta}_{2}]$ is a $n$-by-$2$ vector of the implied markups for each of the two models. The analysis of this procedure is substantively identical, except for the fact that this procedure requires at least two valid instruments.}  \begin{align}\label{eq:test}
     p & = \Delta_m \theta_m + \omega_m,
 \end{align} 
 For each model $m$, the null and alternative hypotheses for model assessment are
 \begin{align}
 H_{0,m}^\text{EB} \ : \ \theta_m = 1 \qquad \text{and} \qquad H_{A,m}^\text{EB} \ : \ \theta_m \neq 1.
 \end{align}
 Note also that under the null we have that $\omega_m=\omega_0$ so that $E[z_i\omega_{mi}]=0$.\looseness=-1

 With this formulation, a natural testing procedure to consider is then based on the Wald statistic:
 \begin{align}
     T^\text{EB}_m &= (\hat \theta_m - 1)' \hat V_{\hat \theta_m}^{-1} (\hat \theta_m - 1)
 \end{align}
 where $\hat \theta_m$ is the 2SLS estimator applied to the sample counterpart of Equation \eqref{eq:test} and $\hat V_{\hat \theta_m}$ is a consistent estimator of the variance of $\hat \theta_m$.  The asymptotic null distribution of $T_m^\text{EB}$ is a $\chi^2_{1}$ distribution and the EB test at level $\alpha$ therefore rejects if $T^\text{EB}_m$ exceeds the $\alpha$-th quantile of that null distribution. 

The EB test is similar to AR. We can, in general, show that if markups are misspecified, EB rejects the true model of conduct. To see this, note that
 \begin{align}
     \plim n^{-1}T^\text{EB}_m &= (\theta_m - 1)'  V_{\theta_m}^{-1} ( \theta_m - 1)
 \end{align}
 where $\theta_m = \plim\hat\theta_m$ is given as:    
 \begin{align}
     \theta_m = 1 + E[\Delta^z_{mi}\Delta^z_{mi}]^{-1}E[\Delta^z_{mi}(\Delta^z_{0i}-\Delta^z_{mi})]
 \end{align}
Since $V^{-1}_{\theta_m}$ is strictly positive, $\plim n^{-1}T^\text{EB} = 0$ if and only if $\theta_m = 1$.  Thus, EB asymptotically rejects any model $m$ for which     $E[\Delta^z_{mi}(\Delta^z_{0i}-\Delta^z_{mi})] \neq 0$ as $\plim n^{-1} T^\text{EB}_m = \infty$ in that case.  Generically with misspecification, $E[\Delta^z_{mi}(\Delta^z_{0i}-\Delta^z_{mi})] \neq 0$ for $m=1,2$ and EB rejects both models. In the presence of misspecification a researcher is not guaranteed to learn the true nature of conduct with this model assessment procedure.

 \noindent \textbf{Cox Test (Cox):} The next testing procedure we consider is inspired by the \citeA{c61} approach to testing non-nested hypotheses. 
 To perform a Cox test, we  formulate two different pairs of null and alternative hypotheses for each model  $m$, based on the same moment conditions defined for RV.  Specifically, for model $m$ we formulate the null and alternative hypotheses:
  \begin{align}
 H_{0,m}^\text{Cox}\ : \ g_m = 0 \qquad \text{and} \qquad H_{A,m}^\text{Cox}\ : \  g_{-m} =  0
 \end{align}
 where $-m$ denotes the opposite of model $m$. 
 To implement the Cox test in our environment, one can follow \citeA{s92}. With $\hat g_m$ as the finite sample analogue of the moment conditions, the test statistic for model $m$ is
 \begin{align}
 T_m^\text{Cox} &= \frac{\sqrt{n}}{\hat \sigma_\text{Cox}}\bigg(\hat g_{-m}'\hat W  \hat g_{-m} -  \hat g_{m}'\hat W \hat g_m -(\hat g_{-m}-\hat g_m)'\hat W(\hat g_{-m}-\hat g_m)\bigg) \\
 &= \frac{2\sqrt{n}\hat g_m'\hat W(\hat g_{-m} -\hat g_m)}{\hat \sigma_\text{Cox}},
 \end{align}
 where $\hat \sigma_\text{Cox}^2 = 4\hat g_{-m}' \hat V_{mm}^\text{AR}  \hat g_{-m}$ is a consistent estimator of the asymptotic variance of the numerator of $T_m^\text{Cox}$ under the null. As shown in \citeA{s92}, this statistic is asymptotically distributed according to a standard normal distribution under the null hypothesis. Under the alternative, the mean of $T_m^\text{Cox}$ is negative, so the test rejects for values of $T_m^\text{Cox}$ below the $\alpha$-th quantile of a standard normal distribution. As for the case of RV, the asymptotic normal limit distribution requires that Assumption \ref{as:nodegen} is satisfied. 
 
 The Cox test maintains -- under the null and the alternative -- that either of the two candidate models is correctly specified. Thus, in the presence of misspecification, one is neither under the null nor the alternative making the properties of the test hard to characterize. 
 
 In practice, as $n \rightarrow \infty$, the Cox test statistic diverges. To see this, note that the $\plim$ of $T^\text{Cox}_m$ is given as:
\begin{align}
    \plim T^\text{Cox}_m &= \lim_{n\to\infty} \frac{2 \sqrt{n} g_{m}' W(g_{-m}-g_m)}{\sigma_{Cox}}\\
    &=\lim_{n\to\infty}\frac{2\sqrt{n} (\norm{W^{1/2}g_1} \norm{W^{1/2}g_2} \cos(\vartheta) - \norm{ W^{1/2}g_1}^2)}{\sigma_{Cox}}
\end{align}
where $\sigma^2_\text{Cox} = 4g_{-m}V_{mm}^\text{AR}g_{-m}$    and $\vartheta$ is the angle between $W^{1/2}g_1$ and $W^{1/2}g_2$.  Suppose now that model 1 is the true model, both models are misspecified, and model 1 has the better fit, i.e., $0<\norm{ W^{1/2}g_2}\inverse\norm{ W^{1/2}g_1}<1$. While the RV test will select in favor of model 1 in this case, the behavior of the Cox test depends on the angle $\vartheta$:
\begin{align}
  \plim T^\text{Cox}_1 = \begin{cases}
        +\infty, & \text{if } \cos(\vartheta) > \frac{\norm{ W^{1/2}g_1}}{\norm{ W^{1/2}g_2}} ,
        \\
        -\infty, & \text{if } \cos(\vartheta) < \frac{\norm{ W^{1/2}g_1}}{\norm{W^{1/2}g_2}} .
        \end{cases}
 \end{align}
If treated as a two sided test, Cox therefore rejects the true model with probability approaching one for all $g_1$ and $g_2$ except in the knife edge case of $\cos(\vartheta) = \norm{ W^{1/2}g_2}\inverse\norm{ W^{1/2}g_1}$. If treated as a one-sided test, Cox may still reject the true model if $\cos(\vartheta)$ is sufficiently small. By similar derivations and the ordering $\norm{ W^{1/2}g_1}\inverse\norm{ W^{1/2}g_2} > 1 > \cos(\vartheta)$, it follows that $\plim T^\text{Cox}_2 = -\infty$, i.e., the worse fitting model, model 2, is rejected with probability approaching one in large samples.
In summary, if considered as a two-sided test the Cox test will reject both models in large samples, while as a one-sided test, it can also lead the researcher to reject the true model of conduct even when the true model has better asymptotic fit than the wrong model.

\section{Additional Empirical Details}\label{sect:Robustness}
 
This appendix provides additional details for our empirical application. 

\noindent \textbf{Code Details:} Below is the definition of the demand estimation problem in \texttt{PyBLP} and the testing problem in \texttt{pyRVtest}.
\begin{figure}[ht]
    \centering
    \caption{Code for Demand Estimation and Testing}
    \begin{subfigure}[t]{0.47\textwidth}
        \centering
        \caption*{Panel A. Demand Estimation Code}
        \includegraphics[scale=0.24]{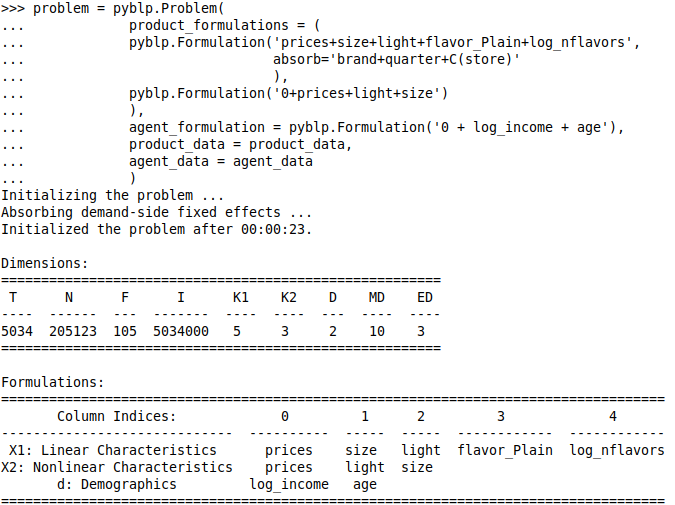}
    \end{subfigure}
    \begin{subfigure}[t]{0.52\textwidth}
        \centering
        \caption*{Panel B. Testing Code}
        \includegraphics[scale=0.24]{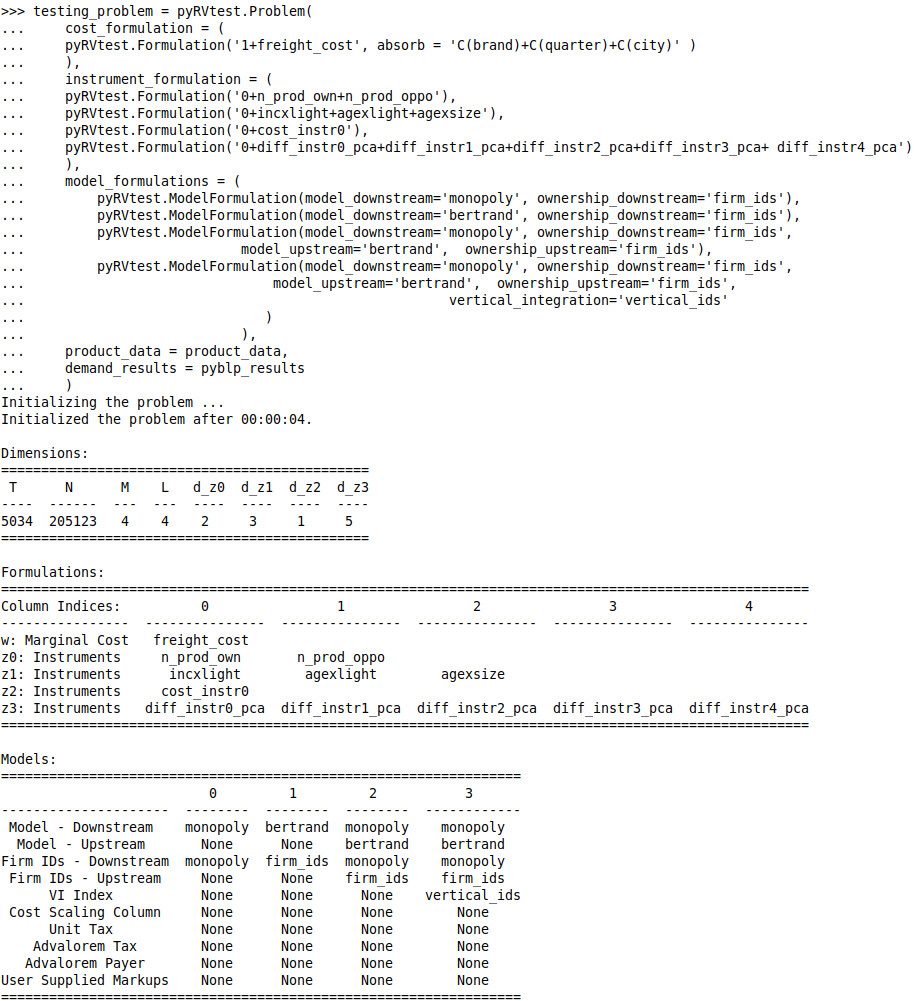}
    \end{subfigure}
    \caption*{\footnotesize{Panels A and B report the definitions of the demand estimation problem in \texttt{PyBLP} and the testing problem in \texttt{pyRVtest}, respectively.}}
\end{figure}

\vspace{-0.8cm}

\noindent \textbf{Description of the Models of Conduct:} Following \cite{v07}, the markups in market $t$ for a model $m$ among those we consider can be written in the following form:
\[\boldsymbol{\Delta}_{mt}=\underbrace{-(\boldsymbol{\Omega}^r_{mt} \odot \boldsymbol{D}^r_t)^{-1} \boldsymbol{s}_t}_{=\boldsymbol{\Delta}^\text{downstream}_{mt}}\underbrace{-(\boldsymbol{\Omega}^w_{mt} \odot \boldsymbol{D}^w_t )^{-1} \boldsymbol{s}_t}_{=\boldsymbol{\Delta}^\text{upstream}_{mt}}\]
where $\boldsymbol{\Omega}^r_{mt}$ and $\boldsymbol{\Omega}^w_{mt}$ are ownership matrices, $\boldsymbol{D}^w_t$ is the jacobian of retail share $\boldsymbol{s}_t$ with respect to wholesale price, and $\boldsymbol{D}^r_t$ is the jacobian of retail share with respect to retail price. The markup $\boldsymbol{\Delta}_{mt}$ implied by each model is the sum of downstream markups $\boldsymbol{\Delta}^\text{downstream}_{mt}$ and upstream markups $\boldsymbol{\Delta}^\text{upstream}_{mt}$. We can derive each model by using different assumptions on the ownership matrices:
\begin{enumerate}
    \item \textit{Zero wholesale margin}: Set $\boldsymbol{\Omega}^w_{mt}$ to a matrix of zeros, set $\boldsymbol{\Omega}^r_{mt}$ to a matrix of ones.
    \item \textit{Zero retail margin}: Set $\boldsymbol{\Omega}^w_{mt}$ to a matrix of zeros, and set $\boldsymbol{\Omega}^r_{mt}$ to a matrix  with element $(i,j)$ that is equal to one if products $i$ and $j$ are produced by the same manufacturer, and to zero otherwise. 
    \item \textit{Linear pricing}: Set $\boldsymbol{\Omega}^r_{mt}$ to a matrix of ones, and set $\boldsymbol{\Omega}^w_{mt}$ to a matrix  with element $(i,j)$ that is equal to one if products $i$ and $j$ are produced by the same manufacturer, and to zero otherwise. 
    \item \textit{Hybrid model}: Set $\boldsymbol{\Omega}^r_{mt}$ to a matrix of ones, and set $\boldsymbol{\Omega}^w_{mt}$ to a matrix  with element $(i,j)$ that is equal to one if products $i$ and $j$ are produced by the same manufacturer and $i$ is not a private label, and to zero otherwise. 
    \item \textit{Wholesale Collusion}: Set $\boldsymbol{\Omega}^r_{mt}$ and $\boldsymbol{\Omega}^w_{mt}$ to matrices of ones. \looseness=-1
    
\end{enumerate}

\noindent \textbf{Assessment Tests Results:}
We present in Table \ref{Tab:Cox_EB} the test results from the two other model assessment procedures defined in Appendix \ref{sect:EBCox}. With NoProd instruments, the EB and Cox tests reject all models. Similar to the AR results in Table \ref{Tab:Model_assessment_all}, the Cox and EB results illustrate the pitfalls of using model assessment tests for conduct, formally discussed in Section \ref{sec:globalmis}.

\begin{table}[ht]
\footnotesize
\caption{Cox and EB Test Results}
\label{Tab:Cox_EB}
\centering
\begin{threeparttable}
\begin{widetable}{.98\columnwidth}{lccc}
\toprule
\textbf{Cox Test} & 2 & 3 & 4  \\
\cmidrule(lr){2-4}
1. Zero wholesale margin&-4.72, 3.99&4.34, -6.53&4.35, -6.55\\
2. Zero retail margin &&9.47, -12.42&9.47, -12.46\\
3. Linear pricing&&&0.01, -0.01\\
4. Hybrid model&&&\\
\midrule
\textbf{EB Test} & 2 & 3 & 4  \\
\cmidrule(lr){2-4}
1. Zero wholesale margin&253.16, 319.1&253.16, 338.14&253.16, 337.97\\
2. Zero retail margin &&319.1, 338.14&319.1, 337.97\\
3. Linear pricing&&&338.14, 337.97\\
4. Hybrid model&&&\\
\bottomrule
\end{widetable}
\begin{tablenotes}[flushleft]
    \setlength\labelsep{0pt}
    \footnotesize
    \item Each cell reports the respective pair of test statistics $T_i,T_j$ for row model $i$ and column model $j$, with NoProd instruments. For 95\% confidence, the critical value for Cox is $\pm1.95$, and EB is $5.98$. Standard errors account for two-step estimation error and clustering at the market level; see Appendix \ref{sect:TwoStep}.\looseness=-1
\end{tablenotes}
\end{threeparttable}
\end{table}

\noindent \textbf{Pooling all Instrument Sets:}  We present in Table \ref{Tab:Instr_Rob} the test results obtained by pooling all instrument sets used in Section \ref{sec:empiricaleg}.  Two observations emerge from the table. First, the $F$-statistics with the pooled instruments, although above the corresponding critical values for 95\% best-case power, are below those for the NoProd instrument set. This confirms our observation that pooling instruments may result in lower $F$-statistics, and potentially weak instruments. Second, the MCS for both sets of pooled instruments now includes not only model 2, but also model 1 at a confidence level $\alpha = 0.05$. Hence, pooled instrument sets lead to less sharp conclusions in this example.

\begin{table}[ht]
\footnotesize
\caption{RV Test Results with Alternative Instruments}
\label{Tab:Instr_Rob}
\centering
\begin{threeparttable}

\begin{widetable}{.98\columnwidth}{lrrrrrrc}
\toprule
& \multicolumn{3}{c}{$T^\text{RV}$}& \multicolumn{3}{c}{$F$-statistics}&  MCS $p$-values\\
\cmidrule(lr){2-4}  \cmidrule(lr){5-7} \cmidrule(lr){8-8}  
Models & 2 & 3 &  4 & 2 & 3 & 4 & \\ 
\midrule
\multicolumn{8}{l}{\textbf{IVs: No. Products + Demo + Cost + Diff} ($d_z=11$)}  \\ 
1. Zero wholesale margin&1.664&  -8.827&  -8.838&105.3 &   {70.7}   &  {70.8}&0.096 \\
2. Zero retail margin&&-5.655 & -5.653&&{85.6}  &   {86.2} &1.0      \\
3. Linear pricing&&&-1.781&&&{101.1}&0.00 \\
4. Hybrid model&&&&&&&0.00 \\
\bottomrule
\end{widetable}

\begin{tablenotes}[flushleft]
    \setlength\labelsep{0pt}
    \footnotesize
    \item Table reports the RV test statistics $T^\text{RV}$ and the effective $F$-statistic for all pairs of models, and the MCS $p$-values (details on their computation are in Appendix F). A negative RV test statistic suggests better fit of the row model. With MCS $p$-values below 0.05 a row model is rejected from the model confidence set. All $F$-statistics exceed the appropriate critical values for a worst-case size of  0.075 and best-case power of 0.95. Both $T^\text{RV}$ and the $F$-statistics account for two-step estimation error and clustering at the market level; see Appendix \ref{sect:TwoStep} for details.  \looseness=-1
\end{tablenotes}
\end{threeparttable}
\end{table}

 \section{MCS Details}\label{sect:MCSDetail}

In this appendix, we provide further details on the construction of the model confidence set.  The construction is iterative where in each step a $p$-value for a model of worst fit is computed.  For a set of instruments $z_\ell$, the MCS $M^*_\ell$ is the collection of models with a $p$-value above the significance level.

For a researcher testing a set of candidate models $M$, we now describe the construction of the MCS $p$-values at each step of the algorithm.  
At iteration $k$, there are $M_k\subset M$ models under current consideration.  Denoting $\abs{M_k}$ as the cardinality of $M_k$, this results in $K = \binom{\abs{M_k}}{2}$ distinct pairs of models and therefore RV test statistics. To make notation compact, we define $\Pi$ as a one-to-one mapping from unique model pairs to $\{1,\hdots,K\}$, and let $T^\text{RV}_{\Pi(m_1,m_2)}=  {\sqrt{n}(\hat{Q}_{m_1}-\hat{Q}_{m_2})}/{\hat\sigma_{\text{RV},\Pi(m_1,m_2)}}$.  In this iteration, we find the pair of models $(m_1,m_2)$ associated with the largest test statistic in magnitude, denoted $\bar{T}^\text{RV}$.  If $\bar{T}^\text{RV}$ is positive (negative), then $m_1$ ($m_2$) is the model of worst fit for which we compute the MCS $p$-value.  This model will be dropped from $M_{k+1}$ in the next iteration.

For computation of the MCS $p$-values, we utilize that $    \big(T^\text{RV}_1,\hdots,T^\text{RV}_K \big)'$ is asymptotically normal with zero mean and a variance $\Sigma$ under no degeneracy and a null of equal fit.  This observation follows by extending Lemma \ref{lem:asymp norm} to $\abs{M_k}$ models and alluding to a first order Taylor approximation as in Remark \ref{rem:validity}.  We estimate $\Sigma$ using $\hat\Sigma$.  The diagonal entries of $\hat\Sigma$ are all one.  For any off-diagonal element $\hat\Sigma_{ij}$ with $(i,j) = (\Pi(m_1,m_2),\Pi(m_3,m_4)) $ we have:\looseness=-1
\begin{align}
    \hat\Sigma_{ij} = 4\hat\sigma^{-1}_{\text{RV},i}\hat\sigma^{-1}_{\text{RV},j}&\bigg(\hat g_{m_1}'\hat W^{1/2}\hat V^{\text{RV}}_{m_1m_3}\hat W^{1/2}\hat g_{m_3} -\hat g_{m_2}'\hat W^{1/2}\hat V^{\text{RV}}_{m_2m_3}\hat W^{1/2}\hat g_{m_3} \\&\qquad- \hat g_{m_1}'\hat W^{1/2}\hat V^{\text{RV}}_{m_1m_4}\hat W^{1/2}\hat g_{m_4} +  \hat g_{m_2}'\hat W^{1/2}\hat V^{\text{RV}}_{m_2m_4}\hat W^{1/2}\hat g_{m_4}\bigg).
\end{align}

We take 99,999 draws from this distribution and for each draw compute the max of the absolute value of the $K$ simulated test statistics denoted $\bar{T}^{\text{RV},sim}$.  The $p$-value is then computed as the fraction of draws for which  $\bar{T}^{\text{RV},sim} > \abs{\bar T^\text{RV}}$.

This procedure is computationally expedient.  In particular, one does not need to bootstrap both demand estimation and the testing procedure, which can be prohibitively costly.  Instead, using the adjustments derived in Appendix \ref{sect:TwoStep}, $\hat\Sigma$ can incorporate adjustments for demand estimation and clustering.

\section{Proof of Lemma \ref{lem:asymp norm}}\label{sec:suppl}
\begin{remark}\label{rem:ignore w}
As a prologue to the proof of Lemma \ref{lem:asymp norm}, we remind the reader that the first order properties of $\hat W\inverse = n\inverse \hat z'\hat z$ and the infeasible $\check W\inverse =n\inverse z' z$ are the same. This follows from the equality $n\inverse \hat z'\hat z = n\inverse z'\hat z$, which in turn leads to
\begin{align}
	\hat W\inverse \!=\! \check W\inverse \!+\! \underbrace{n\inverse z' \textbf{w} \phantom{\big)}}_{=O_p(n^{-1/2})}\underbrace{\big( E[\textbf{w}'\textbf{w}]\inverse E[\textbf{w}'\boldsymbol{z}] \!-\! (\textbf{w}'\textbf{w})\inverse \textbf{w}'\boldsymbol{z} \big)}_{=O_p(n^{-1/2})} \!=\! \check W\inverse \!+\! O_p(n\inverse).
\end{align}	\vspace{-12pt}\\
The same argument applied to $\hat g_m = n\inverse \hat z'(\hat p - \hat \Delta_m)$ yields $\hat g_m = \check g_m  + O_p(n\inverse)$ where $\check g_m = n\inverse z'(p - \Delta_m)$.
\end{remark}

\begin{remark}\label{rem:binomial}
For a matrix $A$ with all singular values strictly below 1, the proof of Lemma \ref{lem:asymp norm} relies on the binomial series expansion $(I+A)^{-1/2} = \sum_{j=0}^\infty \binom{-1/2}{j} A^j = I - \frac{1}{2}A + \frac{3}{8} A^2 - \frac{5}{16}A^3 + \dots$, where the generalized binomial coefficient is $\binom{\alpha}{j}  = (j!)\inverse \prod_{k=1}^j(\alpha-k+1)$. 
\end{remark}

\begin{proof}[Proof of Lemma \ref{lem:asymp norm}]
	We prove $(i)$ and $(ii)$ in three steps and then comment on the modifications to these steps that derive $(iii)$ and $(iv)$. Step (1) shows that $\frac{1}{\sqrt{n}} \sum_{i=1}^n (\psi_{1i}',\psi_{2i}')'\xrightarrow{d} N(0,V^\text{RV})$ and $\check V^\text{RV}_{\ell k} := \frac{1}{n} \sum_{i=1}^n \psi_{\ell i} \psi_{ki}' \xrightarrow{p} V^\text{RV}_{\ell k}$ for $\ell,k \in \{1,2\}$, step (2) establishes that $\sqrt{n}\big(\hat W^{1/2} \hat g_m - W^{1/2}g_m\big) - \frac{1}{\sqrt{n}}  \sum_{i=1}^n \psi_{mi} = o_p(1)$ for $m \in\{1,2\}$, and step (3) proofs that $\text{trace}\big( (\hat V^\text{RV}_{\ell k}-\check V^\text{RV}_{\ell k})'(\hat V^\text{RV}_{\ell k}-\check V^\text{RV}_{\ell k}) \big) = o_p(1)$ for $\ell,k \in \{1,2\}$. The combination of steps (1) and (2) establishes $(i)$ while steps (1) and (3) yields $(ii)$.

	\noindent
	\textbf{Step (1)} From Assumption \ref{as:regcond}, part (i) and (iii), it follows from the standard central limit theorem for iid data that $\frac{1}{\sqrt{n}} \sum_{i=1}^n (\psi_{1i}',\psi_{2i}')u \xrightarrow{d} N(0,u'V^\text{RV}u)$ for any non-random $u \in \mathbb{R}^{2d_z}$ with $\norm{u}=1$. The Cram\'er-Wold device therefore yields  $\frac{1}{\sqrt{n}} \sum_{i=1}^n (\psi_{1i}',\psi_{2i}')'\xrightarrow{d} N(0,V^\text{RV})$. Additionally, a standard law of large numbers applied element-wise implies $\check V_{\ell k} \xrightarrow{p} V_{\ell k}$ for $\ell,k \in \{1,2\}$.

	\noindent
	\textbf{Step (2)} From standard variance calculations it follows that
	\begin{align}\label{eq:rates}
		\check W - W = O_p(n^{-1/2}) \qquad \text{and} \qquad \check g_m - g_m = O_p(n^{-1/2}).
	\end{align}
	In turn, Equation \eqref{eq:rates} together with Remark \ref{rem:ignore w} and \ref{rem:binomial} implies that
	\begin{align}\label{eq:Taylor}
		\hat W^{1/2} - W^{1/2} &= W^{1/4} \left( \big(I + W^{1/2}(\hat W \inverse - W\inverse) W^{1/2} \big)^{-1/2} - I \right)W^{1/4} \\
		&=-\tfrac{1}{2}W^{1/4} \left( W^{1/2}(\hat W \inverse - W\inverse) W^{1/2}\right)W^{1/4} + O_p(n\inverse)
	\end{align}
	Combining Equations \eqref{eq:rates} and \eqref{eq:Taylor}, we then arrive at %
	\begin{small}%
	\begin{align}
		\sqrt{n}\big(\hat W^{1/2}\hat g_m \!-\! W^{1/2} g_m \big)\! &= W^{1/2} \big( \hat g_m \!-\! g_m\big)\!+\!\big( \hat W^{1/2} \!-\! W^{1/2} \big)g_m \!+\! O_p\big( n^{-1} \big) \\
		&=W^{1/2} \big( \hat g_m \!-\! g_m\big) \!-\! \tfrac{1}{2}W^{3/4}\big( \hat W^{-1} \!-\! W^{-1} \big)W^{3/4} g_m \!+\! O_p\big( n^{-1} \big) \\
		&=W^{1/2} \big( \check g_m \!-\! g_m\big) \!-\! \tfrac{1}{2}W^{3/4}\big( \check W^{-1} \!-\! W^{-1} \big)W^{3/4} g_m \!+\! O_p\big( n^{-1} \big) \\
		&=\tfrac{1}{n} \sum\nolimits_{i=1}^n \psi_{mi} \!+\! O_p\big( n^{-1} \big).
	\end{align}%
	\end{small}%
	\noindent
	\textbf{Step (3)} Letting $R_{m} = \frac{1}{n} \sum_{i=1}^n (\hat \psi_{mi} - \psi_{mi})'(\hat \psi_{mi} - \psi_{mi})$, it follows from matrix analogs of the Cauchy-Schwarz inequality that	
	\begin{align}
		\text{trace}\!\left( \!(\hat V^\text{RV}_{\ell k}\!-\!\check V^\text{RV}_{\ell k})'(\hat V^\text{RV}_{\ell k}\!-\!\check V^\text{RV}_{\ell k})\! \right)\! \le \! 4\!\left( \text{trace}(\check V^\text{RV}_{\ell \ell} )R_{k} \!+\! \text{trace}(\check V^\text{RV}_{kk} )R_{\ell} \!+\! R_{\ell} R_{k} \right)\!.
	\end{align}
	Therefore, it suffices to show that $R_{m} = o_p(1)$ for $m \in \{1,2\}$, since we have from step (1) that $\check V^\text{RV}_{mm} =O_p(1)$ for $m \in \{1,2\}$. To further compartmentalize the problem, we note that
	\begin{align}
		R_{m} \!&\le\! \tfrac{3}{n} \sum\nolimits_{i=1}^n \norm{\hat W^{1/2} \hat z_i(\hat p_i \!-\!\hat \Delta_{mi})\!-\!W^{1/2} z_i ( p_i \!-\!\Delta_{mi})}^2 \\
		\!&+\! \tfrac{3}{4n} \sum\nolimits_{i} \norm{\hat W^{3/4} \hat z_i \hat z_i'\hat W^{3/4} \hat g_m \!-\! W^{3/4}  z_i z_i' W^{3/4} g_m}^2   
		\!+\! \tfrac{3}{4} \norm{\hat W^{1/2}\hat g_m \!-\! W^{1/2} g_m}^2.
	\end{align}
	Argumentation analogous to Remark \ref{rem:ignore w} combined with Equation \eqref{eq:rates} yields $R_{m} = o_p(1)$. 
	
	Finally, note that to derive $(iii)$ and $(iv)$, one can follow the same line of argument. The only real difference is that Equation \eqref{eq:Taylor} gets replaced by
	\begin{align}
		\hat W - W &= -W\inverse (\hat W\inverse - W\inverse)W\inverse + O_p(n\inverse). \qedhere
	\end{align} 
\end{proof}

\bibliographystyleA{ecta}

\bibliographyA{biblio_RV}

\end{appendices}
}

\end{document}